\definecolor{blue-green}{rgb}{0,0.5,0.5}
  \lstdefinestyle{tinyc}{
    basicstyle=\scriptsize\ttfamily,
    keywordstyle=\color{blue}
  }
  \lstdefinestyle{normalc}{
    basicstyle=\ttfamily,
    numbers=none,
    keywordstyle=\color{blue}
  }
  \lstdefinestyle{inlinec}{
    basicstyle=\ttfamily
  }
\tikzstyle{every picture}+=[remember picture]
\newcommand{\VRS}[1]{\todo[inline]{\textcolor{blue}{\textbf{Sathiya}: #1}}}
\tikzset{gadget/.style={->,>=stealth,initial text=,minimum size=7pt,auto,on grid,scale=1,inner sep=1pt,node distance=1cm}}
\tikzset{every state/.style={minimum size=15pt,inner sep=1pt,fill=black!10,draw=black!70,thick}}
\appto\appendix{\booltrue{inappendix}}
\DeclareMathSymbol{\mdot}{\mathord}{symbols}{"01}
\tikzstyle{plain circle}=[fill=white, draw=black, shape=circle]
\tikzstyle{black circle}=[fill=black, draw=black, shape=circle]
\tikzstyle{blue circle}=[fill=blue, draw=black, shape=circle]
\tikzstyle{red circle}=[fill=red, draw=black, shape=circle]
\tikzstyle{blue-green circle}=[fill={rgb,255: red,0; green,128; blue,128}, draw=black, shape=circle]
\tikzstyle{direction edge}=[->]
\tikzstyle{dashed directional edge}=[->, dashed]
\tikzstyle{red full edge}=[draw=red, ->]
\tikzstyle{dashed simple edge}=[-, dashed]
\tikzstyle{double directional}=[double, ->]
\tikzstyle{dash double}=[dashed, double, ->]
\tikzstyle{new edge style 0}=[-, dashed, draw=red]
\tikzstyle{blue}=[-, draw=blue]
\tikzstyle{blue directiona}=[draw=blue, ->]
\tikzstyle{blue dashed}=[dashed, draw=blue, ->]
\definecolor{blue-green}{rgb}{0,0.5,0.5}
\def\nat{\mathbb{N}}
\def\cG{\mathcal{G}}
\def\cE{\mathcal{E}}
\def\cH{\mathcal{H}}
\def\cU{\mathcal{U}}
\def\cK{\mathcal{K}}
\def\cF{\mathcal{F}}
\def\mbf#1{\mathbf{#1}}
\def\bfzero{\mbf{0}}
\def\bfx{\mbf{x}}
\def\frakf{\mathfrak{f}}
\def\bbE{\mathbb{E}}
\def\relshortuparrow{\mathrel{\shortuparrow}}
\def\cG{\mathcal{G}}
\DeclareDocumentCommand{\autstep}{O{}}{%
        \xrightarrow{#1}%
        }
\DeclareDocumentCommand{\langof}{O{} m}{%
  \mathsf{L}_{#1}(#2)%
  }
\newcommand{\pGCL}{\mathsf{pGCL}}
\newcommand{\pChoiceWithoutParamters}{\oplus}
\newcommand{\codeStyleMath}[1]{\mathtt{#1}}
\newcommand{\ttx}{\codeStyleMath{x}}
\newcommand{\iCFG}{\mathsf{CFG}}
\newcommand{\CFG}{\mathsf{CFG}}
\newcommand{\sfUp}{\mathsf{Upd}}
\newcommand{\sfPr}{\mathsf{Pr}}
\newcommand{\ordMapsTo}{\mathord{\mapsto}}
\newcommand{\AST}{\mathsf{AST}}
\newcommand{\PAST}{\mathsf{PAST}}
\newcommand{\BAST}{\mathsf{BAST}}
\newcommand{\sched}{\mathfrak{s}}
\newcommand{\runs}{\mathsf{Runs}}
\newcommand{\sfTerm}{\mathsf{Term}}
\newcommand{\sfLeave}{\mathsf{Leave}}
\newcommand{\sfEnum}{\mathsf{Enum}}
\newcommand{\Prob}{\mathbb{P}}
\newcommand{\cyl}{\mathsf{Cyl}}
\newcommand{\init}{\mathit{init}}
\newcommand{\out}{\mathit{out}}
\newcommand{\reach}{\mathsf{Reach}}
\newcommand{\Inv}{\mathsf{Inv}}
\newcommand{\Nat}{\mathsf{Nat}}
\newcommand{\Lo}{\mathsf{Lo}}
\newcommand{\Hi}{\mathsf{Hi}}
\newcommand{\SI}{\mathsf{SI}}
\newcommand{\SIpred}{\mathbf{\Psi}}
\newcommand{\ThN}{\mathsf{Th}(\mathbb{N})}
\newcommand{\ThQ}{\mathsf{Th}(\mathbb{Q})}
\newcommand{\intQ}{\mathsf{I}_\mathbb{Q}}
\newcommand{\intN}{\mathsf{I}_\mathbb{N}}
\newlist{inlinelist}{enumerate*}{1}
\setlist*[inlinelist,1]{%
  label=(\alph*),
}
\newcommand{\termProb}{\Pr_{\mathrm{term}}}
\newcommand{\setOfReals}{\mathbb{R}}
\newcommand{\setOfPositiveReals}{\mathbb{R}^+}
\newcommand{\setOfRationals}{\mathbb{Q}}
\newcommand{\setOfNaturals}{\mathbb{N}}
\newcommand{\setOfIntegers}{\mathbb{Z}}
\newcommand{\listingsInLatex}{Code}
\newcommand{\listingLineRef}[1]{Line ~\ref{#1}}
\newcommand{\codeRef}[1]{Program \ref{#1}} 
\newcounter{codeCounter}
\newcounter{tempFigureCounter}
\newcommand{\declareCodeFigure}{
  \renewcommand\figurename{Prg.}
  \setcounter{tempFigureCounter}{\value{figure}}
  \setcounter{figure}{\value{codeCounter}}
  \addtocounter{codeCounter}{1}
} 
\newcommand{\finishcodefigure}{
  \setcounter{figure}{\value{tempFigureCounter}}
}
\newcommand{\node}{\mathsf{node}}
\newtheoremstyle{dazzle}%
{.5\baselineskip\@plus.2\baselineskip
  \@minus.2\baselineskip}
{.5\baselineskip\@plus.2\baselineskip
  \@minus.2\baselineskip}
{\@acmplainbodyfont}
{\@acmplainindent}
{\bfseries}
{.}
{.5em}
{\thmname{\textcolor{red}{\textbf{#1}}}\thmnumber{ \textcolor{red}{\textbf{#2}}}\thmnote{ {\@acmplainnotefont(\textcolor{blue}{#3})}}}
\newcounter{proofrule}[section]\setcounter{proofrule}{0}
\renewcommand{\theproofrule}{{\ifinappendix\Alph{section}\else\arabic{section}\fi}.\arabic{proofrule}}
\newenvironment{proofrule}[1][]{%
\refstepcounter{proofrule}%
\smallskip
\ifstrempty{#1}%
{\mdfsetup{%
frametitle={%
\tikz[baseline=(current bounding box.east),outer sep=0pt]
\node[anchor=east,rectangle,fill=blue!20]
{\strut Proof Rule~\theproofrule};}}
}%
{\mdfsetup{%
frametitle={%
\tikz[baseline=(current bounding box.east),outer sep=0pt]
\node[anchor=east,rectangle,fill=blue!20]
{\strut Proof Rule~\theproofrule:~#1};}}%
}%
\mdfsetup{innertopmargin=3pt,linecolor=blue!20,%
skipbelow=7pt, skipabove=4pt%
linewidth=2pt,topline=true,%
outerlinewidth=1pt,
}
\begin{mdframed}[]\relax%
}{\end{mdframed}}
\crefname{proofrule}{Rule}{Rules}
\Crefname{proofrule}{Rule}{Rules}
\title{Sound and Complete Proof Rules for Probabilistic Termination}
\newcommand{\OurInstitution}{Max Planck Institute for Software Systems (MPI-SWS)}
\newcommand{\OurStreet}{Paul-Ehrlich-Stra{\ss}e, Building G26}
\newcommand{\OurCity}{Kaiserslautern}
\newcommand{\OurPostcode}{67663}
\newcommand{\OurCountry}{Germany}
\author{Rupak Majumdar}
\affiliation{
  \institution{\OurInstitution}            
  \streetaddress{\OurStreet}
  \city{\OurCity}
  \postcode{\OurPostcode}
  \country{\OurCountry}                    
}
\email{rupak@mpi-sws.org}          
\author{V. R. Sathiyanarayana}
\affiliation{
  \institution{\OurInstitution}            
  \streetaddress{\OurStreet}
  \city{\OurCity}
  \postcode{\OurPostcode}
  \country{\OurCountry}                    
}
\email{sramesh@mpi-sws.org}          
\begin{abstract}
Deciding termination is a fundamental problem in the analysis of probabilistic imperative programs.
We consider the \emph{qualitative} and \emph{quantitative} probabilistic termination problems for an imperative
programming model with discrete probabilistic choice and demonic bounded nondeterminism.
The qualitative question asks if the program terminates almost-surely, no matter how nondeterminism is resolved.
The quantitative question asks for a bound on the probability of termination.
Despite a long and rich literature on the topic, no sound and relatively complete proof systems were known for these problems.
In this paper, we provide such sound and relatively complete proof rules for proving qualitative and quantitative termination in the assertion language of arithmetic.
Our rules use supermartingales as estimates of the likelihood of a program's evolution and variants as measures of distances to termination.
Our key insight is our completeness result, which
shows how to construct a suitable supermartingales from an almost-surely terminating program.
We also show that proofs of termination in many existing proof systems can be transformed to proofs in our system, pointing
to its applicability in practice.
As an application of our proof rule, we show an explicit proof of almost-sure termination for the two-dimensional random walker.
\end{abstract}
\begin{document}

\sloppy

\makeatletter
\newtheoremstyle{dazzle}%
{.5\baselineskip\@plus.2\baselineskip
  \@minus.2\baselineskip}
{.5\baselineskip\@plus.2\baselineskip
  \@minus.2\baselineskip}
{\@acmplainbodyfont}
{\@acmplainindent}
{\bfseries}
{.}
{.5em}
{\thmname{\textcolor{red}{\textbf{#1}}}\thmnumber{ \textcolor{red}{\textbf{#2}}}\thmnote{ {\@acmplainnotefont(\textcolor{blue}{#3})}}}
\makeatother

\theoremstyle{dazzle}
\newtheorem{maintheorem}[theorem]{Theorem}
\newtheorem{mainlemma}[theorem]{Lemma}
\newtheorem{maincorollary}[theorem]{Corollary}
\newtheorem{mainproposition}[theorem]{Proposition}

\crefalias{maintheorem}{theorem}
\crefalias{mainlemma}{lemma}
\crefalias{maincorollary}{corollary}
\crefalias{mainproposition}{proposition}

\theoremstyle{acmplain}
\newtheorem{observation}[theorem]{Observation}
\theoremstyle{acmdefinition}
\newtheorem{remark}[theorem]{Remark}

\Crefname{observation}{Observation}{Observations}

\renewcommand{\lstlistingname}{\listingsInLatex}

\maketitle

\section{Introduction}

Probabilistic programming languages extend the syntax of the usual deterministic models of computation with primitives for random choice.
In this way, probabilistic programs express randomized computation and have found applications in many domains where randomization is essential.

We study the \emph{termination problem} for probabilistic programs with discrete probabilistic and nondeterministic choice.
Termination is a fundamental property of programs and formal reasoning about (deterministic) program termination goes back to Turing \cite{Turing37}.
Its extension to the probabilistic setting can be either qualitative or quantitative.
\emph{Qualitative} termination, or Almost-Sure Termination $(\AST)$, asks if the program terminates with probability $1$, no matter how the nondeterminism is resolved.
\emph{Quantitative} termination, on the other hand, relates to finding upper and lower bounds on the probability of termination that hold across all resolutions of nondeterminism.

For finite-state probabilistic programs, both qualitative and quantitative termination problems are well understood:
there are sound and complete \emph{algorithmic} procedures for termination that operate by analyzing the underlying finite-state Markov decision processes.
Intuitively, every run of the system eventually arrives in an end component (a generalization of bottom strongly connected components to Markov decision processes) and computing the termination probabilities consequently reduces to computing the reachability probabilities for the
appropriate end components \cite{HartSP83,Vardi85,CourcoubetisY95,BdA95,dAHK07,dAH00}.


The story is different for infinite state spaces.
Existing techniques for deducing termination typically take the form of proof rules over a program logic \cite{HartSP83,dAHK07,McIverMorganBook,ChakarovSankaranarayanan,ChatterjeeNZ17,McIverMKK18,ChatterjeeGMZ22}.
These rules ask for certificates consisting of a variety of mathematical entities that satisfy locally-checkable properties.
\emph{None of them, however, are known to be complete}; meaning we do not know if certificates matching these rules can always be found for terminating programs.
The search for sound and complete proof rules has been a long-standing open problem.
Note that the undecidability of the termination problem means that one cannot hope for complete algorithms for generating such certificates.

In this paper, we provide sound and relatively complete proof rules for qualitative and quantitative termination of probabilistic programs.
We present our rules in a simple proof system in the style of \citet{Floyd1993} that applies naturally to our program model.
The underlying assertion language of the proof system is arithmetic interpreted over the standard model of rational numbers. 
Soundness means that if our proof rule applies, then the system satisfies the (qualitative or quantitative) termination criterion.
Completeness of our rules is relative to the completeness of a proof system for the underlying assertion language of arithmetic. 
Accordingly, we show an effective reduction from the validity of our program logic to the validity of a finite number of assertions 
in arithmetic.
For completeness, whenever the original program terminates (qualitatively or quantitatively), we can construct a proof in our program logic in 
such a way that all relevant certificates can be expressed in the assertion language.
This is important: merely knowing that certain semantic certificates exist may not be sufficient for a proof system, 
e.g., if these certificates are provided non-constructively or require terms that cannot be expressed in the assertion language.

Let us now be more precise.
We work in an imperative programming model with variables ranging over the rationals.
Our model fixes a finite set of program locations, and defines a guarded transition relation between the locations representing computational steps.
At marked locations, the model contains primitives for probability distributions over available transitions.
This allows for the expression of \emph{bounded} nondeterministic and probabilistic choice;
we assume the nondeterminism is resolved demonically.
We fix the language of arithmetic as our expression and assertion language, and interpret formulas over the standard model of the rational numbers.\footnote{
	One can generalize our result to \emph{arithmetical} structures \cite{HarelKozenTiuryn}, but we stick to the simpler setting for clarity.
}
The semantics of our programming language is given by a Markov decision process on countably many states, where a demonic scheduler resolves the nondeterminism.
Since the language has bounded nondeterminism, we note that each state has a finite number of immediate successor states.

Given a program and a terminal state, the \emph{qualitative termination} question asks if the infimum of the probability of reaching the terminal state across all resolutions of nondeterminism is one.
In other words, it asks if the program almost-surely terminates under all possible schedulers.
The \emph{quantitative termination} question asks if the probability of reaching the terminal state is bounded above or below by a given probability value $p$.

For the special case of programs without probabilistic choice, sound and relatively complete proof systems for termination are known \cite{MannaP74,Harel80,Apt81}:
they involve finding a \emph{variant} function from (reachable) program states to a well-founded domain that decreases on every step of the program
and maps the terminal state to a minimal element. 

A natural generalization of variant functions to the probabilistic setting is the \emph{ranking supermartingale}: a function from states to reals that reduces in expectation by some amount on each step of the program.
Ranking supermartingales of various flavors are the workhorse of existing proof rules for qualitative termination.
Unfortunately, an example from \cite{MS24} shows that a proof rule based only on a ranking supermartingale is incomplete: there may not exist a ranking
supermartingale that decreases in expectation on each step and one may require transfinite ordinals in proofs.

We take a different perspective.
Instead of asking for a representation of the distance to a terminal state in line with prior techniques, we model the relative likelihood of the program's evolution and deal with distances to termination separately.
In the qualitative setting, we use a function that is unbounded and non-increasing in expectation to certify almost-sure termination.
In the quantitative case, we certify upper and lower bounds on the probability of termination using a family of bounded functions, each non-increasing in expectation.
Both certificates track the execution's relative likelihood in subtly different ways, and both require additional side conditions.



\subsubsection*{Qualitative Termination}
Our rule asks for a supermartingale \cite{doobBook} function $V$ that is non-increasing in expectation on all states
and a variant function $U$ that certifies that every reachable state has some finite path to the terminal state.
The rule also mandates certain compatibility criteria on $U$ and $V$ to enable the execution's almost-sure either termination or escape from state spaces within which $V$ is bounded.

We show the rule is sound by appropriately partitioning the collection of all possible executions and strategically employing variant arguments and/or martingale theory within each partition.
Its completeness uses the following observation.
Fix an enumeration of the reachable states $s_1, s_2, \ldots$ of an almost-surely terminating program.
Let $\Pr_s[\Diamond \mathbbm{1}_{> n}]$ denote the probability with which a run starting from a state $s$ reaches some state in $\{s_n, s_{n+1}, \ldots\}$ in the enumeration.
For a fixed $s$, we show that $\Pr_s[\Diamond \mathbbm{1}_{> n}]$ goes to zero as $n\rightarrow \infty$.
Following a diagonal-like construction from the theory of countable Markov chains \cite{MSZ78}, we define a sequence $n_1, n_2, \ldots$ such 
that $\Pr_j [\Diamond \mathbbm{1}_{s_{n_k}}] \leq \frac{1}{2^k}$ for all $j\leq k$; this is possible because the aforementioned limit is $0$.
This lets us define a supermartingale defined over the states $s$ as
\[
\sum_{k\in \nat} \Pr{}_s [\Diamond \mathbbm{1}_{> n_k}]
\]
This supermartingale satisfies the requirements of our rule.
Moreover, we show that this supermartingale can be formally expressed in first-order arithmetic.
Thus, the rule is relatively complete.

\subsubsection*{Quantitative Termination}
Our rules for quantitative termination employ \emph{stochastic invariants} \cite{ChatterjeeNZ17}: pairs $(\SI, p)$ such that the probability with which executions leave the set of states $\SI$ is bounded above by $p$.
A stochastic invariant easily implies an upper bound rule: if there is a stochastic invariant $(\SI, p)$ that avoids terminal
states, the probability of termination is upper bounded by $p$.
For lower bounds, our starting point is the proof rule proposed by \citet{ChatterjeeGMZ22} using stochastic invariants:
if there is a stochastic invariant $(\SI, p)$ such that runs almost-surely terminate within $\SI$ or leave $\SI$, then
the probability of termination is at least $1-p$.
While they claimed soundness and completeness for their rule, there were two issues.
First, completeness of their rule assumed that there exists a complete proof rule for qualitative termination.
However, no relatively complete proof rule for qualitative termination was known before.
Second, we show in \cref{sec:ctrex} an explicit example where their rule, as stated, is incorrect.

In this paper, we show a sound and complete rule that is a modification of their rule:
we require that for each $n\in \nat$, there is a stochastic invariant $\left(\SI_n, p+ \frac{1}{n}\right)$ such that all runs almost-surely
terminate within $\SI_n$ or leave $\SI_n$.
Sound and relatively complete certificates for almost-sure termination can now be given using a sound and complete proof rule for qualitative termination for finite-state systems.
Interestingly, the quantitative rules lead, as a special case, to a different sound and complete proof rule for qualitative termination.
We can show that proofs using our qualitative termination rule can be compiled into proofs using this new rule and vice versa.

\subsubsection*{Other Related Work}
The supermartingales used by our proof rules are closely related to Lyapunov functions used in the study of the stability of dynamical systems.
Lyapunov functions are classical tools for characterizing recurrence and transience in infinite-state Markov chains, going back to the work of Foster \cite{Foster51,Foster53}.
Completeness of Lyapunov functions was shown in general by \citet{MSZ78}.
Our proofs of soundness and completeness uses insights from \citet{MSZ78}, but we have to overcome several technical issues.
First, we have demonic nondeterminism and therefore require machinery for reasoning about infimums over all schedulers.
Second, we do not have irreducibility: while an irreducible Markov chain is either recurrent or transient, we cannot assume that a program that is not almost-sure terminating has a strong transience property.
Thus, we have to prove these properties ab initio. 

A side effect of our completeness results is that certain existing proof rules can be shown to be complete by reducing them to our rules.
One such rule for almost-sure termination was presented \citet{McIverMKK18}.
Their proof rule asks for a supermartingale function that also acts as a distance variant, and is sound for almost-sure termination.
This rule was believed to be incomplete, even by the authors \cite{katoenETAPSTalk}, despite the fact that no counterexamples against completeness could be found. 
By extending the techniques underlying the completeness of our rules, we prove that, surprisingly, their rule is also relatively complete.
However, we argue that, when compared to their rule, our rules generally lead to more explainable and comprehensible certificates.
This is because the variant function that our rules ask for match the intuitive notion of ``distance'' to termination.
Additionally requiring the variant to be a supermartingale places unnecessary constraints on it that don't yield a better understanding of the program's termination.
We illustrate this with a simple example in \cref{subsec:mciver-katoen-rule}.

The issue of an appropriate assertion language for proof rules for termination have been mostly elided in the literature, and most rules are presented in an informal
language of ``sufficiently expressive'' mathematical constructs.
Important exceptions are the assertion languages of \citet{BatzKKM21} and \citet{HartogV02}.
Their work shows that the language of arithmetic extended with suprema and infima of
functions over the state space is relatively complete (in the sense of \citet{Cook78}) for weakest-preexpectation style reasoning of probabilistic programs without nondeterminism.
That is, given a function $f$ definable in their language and a program $P$, they show that their language is expressive enough to represent the weakest pre-expectation of $f$ with respect to $P$.
The need for suprema and infima is motivated by demonstrating simple probabilistic programs whose termination probabilities involve transcendental numbers.
%
We note that arithmetic is sufficient for relative completeness because suprema and infima arising in probabilistic termination can be encoded (through quantifiers).
We believe that presenting the rules directly in the language of arithmetic allows greater focus on the nature of the certificates required by the rules.
Note that for extensions of the programming model, such as with unbounded nondeterministic choice, arithmetic is no longer sufficient for relative completeness,
and this holds already without probabilistic choice in the language \cite{HitchcockP72,AptK86,AptP86}.

While we focus on almost-sure termination, there are related qualitative termination problems: \emph{positive} almost-sure termination ($\PAST$) and \emph{bounded} almost-sure termination $(\BAST$).
These problems strengthen almost-sure termination by requiring that the expected time to termination is finite.
(Note that a program may be almost-surely terminating but the expected run time may be infinite: consider a one-dimensional symmetric random walk where $0$ is an absorbing state.)
The difference is that $\PAST$ allows the expected run time to depend on the scheduler that resolves nondeterminism, and $\BAST$ requires a global bound that holds for every scheduler.
Sound and complete proof rules for $\BAST$ have been studied extensively \cite{BournezG05,FioritiH15,FuC19,AvanziniLY20}.
More recently, a sound and complete proof rule for $\PAST$ was given \cite{MS24}.
Completeness in these papers are semantic, and relative completeness in the sense of Cook was not studied.
Our techniques would provide a relative completeness result for $\BAST$.
In contrast, \citet{MS24} show that $\PAST$ is $\Pi_1^1$-complete (AST and $\BAST$ are arithmetical, in comparison); thus arithmetic is no longer sufficient as an
assertion language for relative completeness (see \cite{AptP86} for similar issues and appropriate assertion languages for nondeterministic programs with countable nondeterminism).

Our results apply to discrete probabilistic choice.
While discrete choice and computation captures many randomized algorithms, our proofs of completeness do not apply to programs with,
e.g., sampling from continuous probability distributions.
The use of real values introduce measure-theoretic complexities in the semantics \cite{TakisakaOUH21}.
These can be overcome, but whether there is a sound and relatively complete proof rule for an appropriate assertion language remains open.

While we focus on the theoretical aspects here, there is a large body of work on tools for \emph{synthesizing} certificates automatically for probabilistic verification and general termination
\cite{ChakarovSankaranarayanan,ChatterjeeGMZ22,McIverMKK18,FengCSKKZ23,HuangFC18,Monniaux01,KobayashiLG20,BartheEFH16,KincaidPaper,PiskacW11}.
We show a ``compilation'' of many existing rules into our rules, thus, such tools continue to work in our proof system.

\smallskip
\noindent\emph{Contributions.}
We now summarize our main contributions:
\begin{enumerate}
	\item We provide the \emph{first} sound and relatively complete proof rules for both the qualitative and quantitative termination in the assertion language of arithmetic. 
	This solves a 40-year old open question in the field and culminates the substantial body of work on proof systems for probabilistic termination. 

	\item We demonstrate the applicability of our rules by reducing existing proof rules in the literature to our rules.
	In addition, we show that McIver et al.'s \cite{McIverMKK18}  prior proof rule for qualitative termination is complete: completeness was open and the rule was conjectured to be incomplete.

	\item As a particular application, we provide an explicit proof for the almost-sure termination of the 2-D Random Walker within our rules for qualitative termination.
	An explicit supermartingale-based proof for this problem stumped prior techniques (see, e.g., \cite{McIverMKK18}). 
\end{enumerate}


\section{Probabilistic Programs}

\subsection{Syntax and Semantics}
\label{subsec:syntax-semantics}

\subsubsection*{Syntax.} 
We work with \emph{probabilistic control flow graphs}, a program model used by \citet{ChatterjeeGMZ22} to detail proof rules for quantitative termination.
Program variables in this model range over the rationals.
Assignment statements and loop guards are terms and boolean combinations of atomic formulae expressible in the language of arithmetic.
This is standard in program logics \cite{Apt81}, and facilitates the use of the language of rational arithmetic with addition, multiplication, and order to make assertions about desirable program properties.

\begin{definition}[Control Flow Graphs]
    A Control Flow Graph ($\iCFG$) $\cG$ is a tuple $(L, V, l_{init}, \bfx_{init}, \ordMapsTo, G, \sfPr, \sfUp)$, where
    \begin{itemize}
        \item $L$ is a finite set of program locations, partitioned into assignment, nondeterministic, and probabilistic locations $L_A$, $L_N$, and $L_P$, respectively.

        \item $V = \{ x_1, x_2, \ldots x_n \}$ is a finite set of program variables.

        \item $l_{init} \in L$ is the initial program location, and $\bfx_{init} \in \setOfRationals^{V}$ is the initial variable valuation.

        \item $\ordMapsTo \subseteq L \times L$ is a finite set of transitions. If $\tau = (l, l') \in \ordMapsTo$, then $l$ and $l'$ are respectively referred to as the source and target locations of $\tau$.

        \item $G$ is a function mapping each $\tau \in \ordMapsTo$ to a Boolean expression over $V \cup L$.

        \item $\sfPr$ is a function assigning each $(l, l') \in \ordMapsTo$ with $l \in L_P$ a fractional expression $p$ over the variables $V$ representing a rational probability value. 

        \item $\sfUp$ is a map assigning each $(l, l') \in \ordMapsTo$ with $l \in L_A$ an update pair $(j, u)$ where $j \in \{1, \ldots, |V|\}$ is the target variable index and $u$ is an arithmetic expression over the variables $V$.

        \item At assignment locations, there is at most one outgoing transition.
        
        \item At probabilistic locations $l$, it must be that $\sfPr(l, \_)[\bfx] > 0$ and $\sum G(l, \_)[(l, \bfx)] \times \sfPr((l, \_))[\bfx] = 1$ over all transitions $(l, \_) \in \ordMapsTo$ for all $\bfx \in \setOfRationals^{V}$.
    \end{itemize}
\end{definition}
We use the boldface notation $\bfx$ for variable assignments and write $\bfzero$ for the assignment that maps every variable to zero.
$\sfPr(l, l')[\bfx]$, $G(l, l')[\bfx]$, and $\sfUp(l, l')[\bfx]$ refer to the output of the expressions $\sfPr(l, l')$, $G(l, l')$, and $\sfUp(l, l')$ on the assignment $\bfx$.
Note that the finiteness of $L$ implies that the branching at both nondeterministic and probabilistic locations is bounded.
Without loss of generality, we assume simple structural conditions that ensures that every state has a successor.
This follows similar assumptions made in prior work \cite{ChatterjeeGMZ22}.
Observe that, while the probabilistic choice is simple, it is sufficient to model probabilistic Turing machines and some
quite sophisticated probabilistic phenomena \cite{FlajoletPS11}.
However, we explicitly forbid unbounded nondeterministic choice or sampling from continuous distributions.

\begin{remark}
While we use $\iCFG$s as our formal model of programs, we could have equivalently used probabilistic guarded command language ($\pGCL$).
$\pGCL$ is the probabilistic extension of the Guarded Command Language of \citet{Dijkstra76}, and is a convenient language for specifying probabilistic computation.
There is a large body of work \cite{KaminskiKM19,FengCSKKZ23,BatzKKM21,McIverMKK18,McIverMorganBook} that uses the $\pGCL$ syntax.
Our choice of $\CFG$s follows the same choice made by \citet{ChatterjeeGMZ22} to describe quantitative termination.
It is standard to compile $\pGCL$ programs into $\CFG$s and vice versa.
For readability, we employ the syntax of $\pGCL$ in some of our examples.
\end{remark}

\subsubsection*{States, Runs, and Reachable States.}
Fix a $\iCFG$ $\cG = (L, \allowbreak V, \allowbreak l_{init}, \allowbreak x_{init}, \allowbreak \ordMapsTo, \allowbreak G, \allowbreak \sfPr, \allowbreak \sfUp)$.
A \emph{state} is a tuple $(l, \bfx)$, where $l \in L$ and $\bfx \in \setOfRationals^{V}$.
A state $(l, \bfx)$ is termed \emph{assignment} (resp., \emph{nondeterministic}, or \emph{probabilistic}) 
if the location $l$ is assignment (resp., nondeterministic, or probabilistic).
We will refer to assignment locations $l$ where the updates of all transitions sourced at $l$ don't change the variable values as \emph{deterministic} locations.
Accordingly, states $(l, \bfx)$ are termed \emph{deterministic} when $l$ is deterministic.
A transition $(l, l') \in \ordMapsTo$ is \emph{enabled} at a state $(l, \bfx)$ if the guard $G(l, l')$ evaluates to true under $(l, \bfx)$.
The vector $\bfx'$ is the \emph{result} of the update pair $(j, u)$ from the state $(l, \bfx)$ if
\begin{inlinelist}
    \item for all $i \neq j$, $\bfx'[i] = \bfx[i]$, and
    \item $\bfx'[j] = u(\bfx)$.
\end{inlinelist}
A state $(l', \bfx')$ is a \emph{successor} to $(l, \bfx)$ if the transition $(l, l') \in \ordMapsTo$ is enabled at $(l, \bfx)$ and $\bfx'$ is the result of $\sfUp(l, l')$ on $\bfx$.
A \emph{finite path} is a sequence of states $(l_1, \bfx_1), (l_2, \bfx_2), \ldots, (l_n, \bfx_n)$ with $(l_{k+1}, \bfx_{k+1})$ being a successor to $(l_k, \bfx_k)$.
A \emph{run} (or \emph{execution}) of $\cG$ is a sequence of states that
\begin{inlinelist}
    \item begins with the initial state $(l_{init}, \bfx_{init})$, and
    \item only induces finite paths as prefixes.
\end{inlinelist}

A state $(l', \bfx')$ is said to be \emph{reachable} from a state $(l, \bfx)$ if there exists a finite path beginning at $(l, \bfx)$ and ending at $(l', \bfx')$.
We write $\reach(\cG, (l, \bfx))$ for the set of states reachable from $(l,\bfx)$; we simply write $\reach(\cG)$ when the initial state is $(l_{\init}, \bfx_{\init})$.
An $\iCFG$ is said to be \emph{finite state} if the set of states reachable from its initial state is finite.

\subsubsection*{Probability Theory}
Let $X$ be any nonempty set.
The tuple $(X, \cF, \Prob)$ is called a \emph{probability space} when $\cF$ is a $\sigma$-algebra over $X$ and $\Prob$ is a probability measure on $\cF$.
A sequence $(\cF_n)$, where $n$ ranges over $\setOfNaturals$, is said to be a \emph{filtration} of the probability space $(X, \cF, \Prob)$ when each $\cF_n$ is a sub $\sigma$-algebra of $\cF$ and $\cF_i \subseteq \cF_j$ for all $i \leq j$.

A \emph{random variable} is a measurable function from $X$ to $\setOfPositiveReals$, the set of positive real numbers.
The \emph{expected value} of the random variable $X$ is denoted by $\bbE[X]$.
A \emph{stochastic process} over the filtered probability space is a sequence of random variables $(X_n)$ such that each $X_n$ is measurable over $\cF_n$.
A \emph{supermartingale} is a stochastic process $(X_n)$ that does not increase in expectation, i.e., $\bbE[X_{n+1}] \leq \bbE[X_n]$ for each $n \in \setOfNaturals$.
We will refer to any function over the state space of a $\CFG$ that doesn't increase in expectation at each execution step as a \emph{supermartingale function}.

We shall make use of the following version of Doob's Martingale Convergence Theorem.
\begin{theorem}[\cite{doobBook}]
    If a supermartingale $(X_n)$ is bounded below, then there almost-surely exists a random variable $X_\infty$ such that
    $
    \Prob\left( X_\infty = \lim_{n\to\infty} X_n \right) = 1$ and $\bbE[X_\infty] \leq \bbE[X_0]
    $.
\end{theorem}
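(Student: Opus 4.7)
The plan is to follow the classical proof via Doob's upcrossing inequality, which is the standard route for this theorem and fits the hypotheses cleanly. The key object is, for rationals $a < b$, the number of upcrossings $U_n[a,b]$ of the interval $[a,b]$ made by the finite segment $(X_0, X_1, \ldots, X_n)$. First I would establish the upcrossing inequality
\[
(b - a)\,\bbE[U_n[a,b]] \;\leq\; \bbE[(X_n - a)^-],
\]
proved by introducing the predictable indicator process that ``buys'' at the first downcrossing below $a$ and ``sells'' at the next upcrossing above $b$, then observing that this process integrated against the supermartingale increments has nonpositive expectation.

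Next I would use the lower bound hypothesis: if $X_n \geq c$ almost surely for some constant $c$, then $(X_n - a)^- \leq \max(0, a - c)$ is uniformly bounded, so $\sup_n \bbE[U_n[a,b]] < \infty$. By monotone convergence, $U_\infty[a,b] := \lim_n U_n[a,b]$ is almost surely finite for every rational pair $a < b$. The event that $X_n$ fails to converge in $[-\infty, +\infty]$ is exactly $\bigcup_{a < b \in \setOfRationals} \{U_\infty[a,b] = \infty\}$, because $\liminf X_n < \limsup X_n$ forces infinitely many upcrossings of some rational interval strictly between them. A countable union of null sets is null, so $X_\infty := \lim_n X_n$ exists almost surely as an extended-real-valued random variable.

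To conclude, I would apply Fatou's lemma to the nonnegative process $X_n - c$ (using the lower bound again):
\[
\bbE[X_\infty - c] \;=\; \bbE\bigl[\liminf_n (X_n - c)\bigr] \;\leq\; \liminf_n \bbE[X_n - c] \;\leq\; \bbE[X_0] - c,
\]
where the last inequality is the defining property of a supermartingale. Rearranging yields $\bbE[X_\infty] \leq \bbE[X_0]$, and in particular $X_\infty$ is integrable and hence finite almost surely, completing the proof.

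The main obstacle is the upcrossing inequality itself, as it requires carefully defining the predictable sequence of stopping times $\sigma_1 \leq \tau_1 \leq \sigma_2 \leq \tau_2 \leq \cdots$ that mark alternating crossings and verifying that the gambling-style integrand is predictable with respect to the filtration $(\cF_n)$. Everything else is a routine application of Fatou and a union bound over rationals, but setting up the upcrossing counting so that it interacts correctly with the supermartingale property is the one nontrivial computation.
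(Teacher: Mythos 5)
Your proposal is correct: the upcrossing-inequality argument, the union bound over rational intervals, and the Fatou step are exactly the standard proof of Doob's supermartingale convergence theorem, and each inequality you state (including $(b-a)\,\bbE[U_n[a,b]] \leq \bbE[(X_n-a)^-]$ for supermartingales) is the right one. The paper itself gives no proof of this statement --- it is quoted as a classical result from Doob's book --- so there is nothing to diverge from; your argument is the canonical one and would serve as a complete proof.
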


\subsubsection*{Schedulers and Probabilistic Semantics.}
We employ an operational semantics to interpret our programs.
This is standard for $\CFG$s \cite{BaierKatoenBook,ChatterjeeGMZ22}.
The semantics of a $\CFG$ $\cG$ is inferred from a probability space over the runs of $\cG$.
Formally, let $\runs_{\cG}$ be the collection of all executions of $\cG$.
For a finite path $\pi$, let $\cyl_{\cG}(\pi)$ denote the \emph{cylinder set} containing all runs $\rho \in \runs_{\cG}$ such that $\pi$ is a prefix of $\rho$.
Now, call $\cF_\cG$ the smallest $\sigma$-algebra on $\runs_\cG$ containing all cylinder sets of finite paths of $\cG$.

A \emph{scheduler} is a mapping from finite paths ending at nondeterministic states to successors from these states.
We do not impose any measurability or computability conditions on schedulers.
A scheduler $\sched$ induces a probability space over $\runs_{\cG}$ for every $\CFG$ $\cG$.
A finite path (or run) $\pi$ is said to be \emph{consistent} with $\sched$ if for every prefix $\pi'$ of $\pi$ ending at a nondeterministic state, the finite path (or run) obtained by appending the successor state $\sched(\pi')$ to $\pi'$ is a prefix of $\pi$.
A scheduler is said to induce a finite path (or run) if the path (or run) is consistent with the scheduler.
The semantics of the $\iCFG$ $\cG$ under the scheduler $\sched$ is captured by the probability space $(\runs_{\cG}, \allowbreak \cF_\cG, \Prob_\sched)$, where for every consistent finite path $\pi = ((l_1, \bfx_1), (l_2, \bfx_2), \ldots (l_n, \bfx_n))$ with probabilistic locations at indices $i_1, i_2, \ldots i_n$,
$$
\Prob{}_\sched(\pi) = \sfPr(l_{i_1}, l_{i_1 + 1})[\bfx_{i_1}] \times \cdots \sfPr(l_{i_n}, l_{i_n + 1})[\bfx_{i_n}]
$$
We analogously define a probability space $(\runs_{\cG(l, \bfx)}, \allowbreak \cF_{\cG(l, \bfx)}, \allowbreak \Prob_\sched)$ to refer to the probability space induced by the scheduler $\sched$ on the $\iCFG$ obtained from $\cG$ by setting the initial state to $(l, \bfx)$.

For a scheduler $\sched$, the \emph{canonical filtration} of $\cG$ is the sequence $(\cF_n)_{n \in \setOfNaturals}$ such that $\cF_n$ is the smallest sub-$\sigma$-algebra of $\cF_{\cG(\sigma)}$ that contains the cylinder sets $\cyl_{\cG(\sigma)}(\pi_{\leq n})$ of all finite paths $\pi_{\leq n}$ of length at most $n$.
Under this filtration, the semantics of $\cG$ under $\sched$ can also be viewed as a stochastic process $(X^\sched_n)_{n \in \setOfNaturals}$ measurable against $(\cF_n)_{n \in \setOfNaturals}$ such that $X^\sched_n$ takes on an encoding of the state of the execution after $n$ steps.
When convenient, we will use this view as well.

\subsection{The Termination Problem} 

Fix an $\iCFG$ $\cG$ and a scheduler $\sched$. 
Let $l_{\out}$ be a distinguished location that we will call the \emph{terminal} location.
Denote by $\Diamond (l_{\out}, \bfzero)$ the set of all runs of $\cG$ that reach $(l_{\out}, \bfzero)$; we call these the \emph{terminating runs}.
Observe that $\Diamond (l_{\out}, \bfzero)$ is measurable.
The $\iCFG$ $\cG$ is said to \emph{terminate} with probability $p$ under the scheduler $\sched$ if $\Prob_\sched[\Diamond (l_{\out}, \bfzero)] = p$. 

\begin{definition}[Termination Probability]
Let $\cG$ be an $\iCFG$ and 
for a scheduler $\sched$, let $(\runs_{\cG}, \cF_\cG, \Prob_\sched)$ be the probability space induced by $\sched$ on the executions of $\cG$. 
The termination probability of $\cG$, denoted by $\termProb(\cG)$, is the infimum of $\Prob_\sched[\Diamond (l_{\out}, \bfzero)]$ over all schedulers $\sched$.
\end{definition}
We use $\termProb(\cG(\sigma))$ to refer to the termination probability of $\CFG$ obtained by setting the initial state of $\cG$ to $\sigma$.
Note that, while we define termination for a specific state $(l_{\out}, \bfzero)$, more general termination conditions can be reduced to this case by a syntactic modification.

An $\iCFG$ is said to be \emph{almost surely terminating} ($\AST$) if its termination probability is $1$.
Our work is on sound and complete proof rules for deciding, for an $\iCFG$ $\cG$,
\begin{inlinelist}
    \item the $\AST$ problem, i.e., whether $\cG$ is almost surely terminating.
    \item the \emph{Lower Bound} problem, i.e., whether $\termProb(\cG)$ exceeds some $p < 1$,
    \item the \emph{Upper Bound} problem, i.e., whether $\termProb(\cG)$ is bounded above by some $p > 0$.
\end{inlinelist}
We remark that, for the lower and upper bound problems, the proof rules we describe are applicable to any number $p$ that is representable in our program logic.
This means that $p$ can take on irrational and transcendental values; this is important, as termination probabilities can often take on such values \cite{BatzKKM21,FlajoletPS11}.
We will elaborate in \cref{subsec:program-logic}.

\begin{example}[Symmetric Random Walk]
\label{ex:srw}
A $d$-dimensional \emph{symmetric random walk} has $d$ integer variables $x_1, \ldots, x_d$. 
Initially, all variables are $1$.
In each step, the program updates the variables to move to a ``nearest neighbor'' in the $d$-dimensional lattice $\setOfIntegers^d$;
that is, the program picks uniformly at random one of the variables and an element in $\{-1, +1\}$,
and adds the element to the chosen variable.
\codeRef{code:2dsrw} shows the code for $d=2$.
It is well known \cite{Polya} that the symmetric random walk is recurrent in dimension $1$ and $2$, and transient otherwise.
Thus, if we set any element in the lattice, say $0$, to be an terminal state, then the program is almost surely terminating in dimension $1$ and $2$
but not almost surely terminating when $d\geq 3$. 
We shall refer to the $d=1$ and $d=2$ cases as 1DRW and 2DRW, respectively.
\qed

\end{example}
\begin{figure}[t]
    \declareCodeFigure
    \small
    \begin{lstlisting}[language=python, mathescape=true, escapechar=|, xleftmargin=15pt]
x, y $\coloneqq$ 1, 1
while (x |$\neq$| 0 |$\lor$| y |$\neq$| 0):
    { x $\coloneqq$ x + 1 |$\pChoice{\frac{1}{2}}$| x $\coloneqq$ x - 1 } |$\pChoice{\frac{1}{2}}$|
        { y $\coloneqq$ y + 1  |$\pChoice{\frac{1}{2}}$| y $\coloneqq$ y - 1 }
    \end{lstlisting}
    \caption{The 2D symmetric random walker. \textnormal{The symbol $\pChoiceWithoutParamters$ is a probabilistic choice operator.}}
    \label{code:2dsrw}
    \finishcodefigure
\end{figure}

\subsection{The Unrolling Lemma}
\label{subsec:unrolling-lemma}

Let $\cG$ be an $\iCFG$ such that $\termProb(\cG) \geq p$ for some rational $p > 0$.
Fix a scheduler $\sched$.
Let $(\pi_1, \pi_2, \ldots)$ be an ordering of the terminating runs of $\cG$ consistent with $\sched$ such that $|\pi_1| \leq |\pi_2| \leq \cdots$.
For some $\epsilon > 0$, let $i_n$ be the smallest number such that
$$
\Prob_\sched(\pi_1) + \cdots + \Prob_\sched(\pi_{i_n}) \geq p - \epsilon
$$
where $\Prob_\sched$ is the probability measure induced by $\sched$ over the set of all runs of $\cG$.

We call $|\pi_n|$ the \emph{required simulation time} of $\cG$ under $\sched$ to assimilate a termination probability of $p - \epsilon$.
The required simulation time of $\sched$ is simply the length of the longest terminating run that must be accounted for in the termination probability series for it to cross $p - \epsilon$.
Define the simulation time of $\cG$ w.r.t. $\epsilon$ as the supremum over all schedulers $\sched$ of the required simulation time of $\cG$ under $\sched$ and $\epsilon$.
The following lemma is at the core of showing that the almost sure termination problem is $\Pi^0_2$-complete \cite{MS24arxiv}.

\begin{lemma}[Unrolling Lemma \cite{MS24arxiv}]
    \label{lem:required-simulation-time}
    Let $\cG$ be an $\iCFG$ such that $\termProb(\cG) \geq p$.
    For any $\epsilon$, the simulation time of $\cG$ w.r.t.\ $\epsilon$ is bounded above.
\end{lemma}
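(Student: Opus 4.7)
The plan is to recast the statement in terms of bounded-horizon termination probabilities. Define
\[
T_N(\sched) \;:=\; \Prob{}_\sched[\text{some execution prefix of length} \leq N \text{ ends at } (l_{\out}, \bfzero)].
\]
The required simulation time of $\cG$ under $\sched$ is exactly the smallest $N$ with $T_N(\sched) \geq p - \epsilon$, so boundedness of the simulation time w.r.t.\ $\epsilon$ is equivalent to the existence of a single $N$ with $T_N(\sched) \geq p-\epsilon$ for \emph{every} scheduler $\sched$. I would argue this by contradiction: assume that for some $\epsilon > 0$, for each $N \in \nat$ there is a scheduler $\sched_N$ with $T_N(\sched_N) < p - \epsilon$.

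Next, I would extract a ``limit'' scheduler $\sched^*$ via a K\"onig-style diagonal argument that critically uses the bounded-branching property of the model. For each $N$, let $P_{\leq N}$ be the (finite) set of consistent finite paths starting at $(l_{\init}, \bfx_{\init})$ of length at most $N$; because both probabilistic and nondeterministic branching are bounded at every location, $P_{\leq N}$ is finite, and hence so is the set $S_N$ of restrictions of schedulers to paths in $P_{\leq N}$ (each such restriction is a choice function on the finitely many paths ending at a nondeterministic state). Pigeonholing on $\{\sched_N\}_{N \geq 1}$ level by level yields a nested chain of restrictions $r_1 \subseteq r_2 \subseteq \cdots$ such that, for every $k$, infinitely many $\sched_N$ agree with $r_k$ on $P_{\leq k}$. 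Any scheduler $\sched^*$ whose restriction to $P_{\leq N}$ is $r_N$ for every $N$ is then well-defined.

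To close the argument, I would observe that $T_N(\sched)$ depends only on $\sched|_{P_{\leq N}}$, since the events contributing to $T_N$ are determined by the first $N$ steps of the execution. Fix any $N$ and pick $M \geq N$ with $\sched_M|_{P_{\leq N}} = r_N$. Then $T_N(\sched^*) = T_N(\sched_M) \leq T_M(\sched_M) < p - \epsilon$, the middle inequality holding because termination within $N$ steps implies termination within $M$ steps. Monotone continuity of the probability measure then gives $\Prob_{\sched^*}[\Diamond(l_{\out}, \bfzero)] = \lim_{N \to \infty} T_N(\sched^*) \leq p - \epsilon$, contradicting $\termProb(\cG) \geq p$.

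The main technical obstacle is carrying out the K\"onig-style selection \emph{cleanly} for schedulers, which are formally defined on finite paths (not states): I need to verify that at each level the space $S_N$ of restricted schedulers is genuinely finite, which in turn rests on bounded nondeterminism and bounded probabilistic branching of $\iCFG$s as stipulated in \cref{subsec:syntax-semantics}. Were either source of branching unbounded, the pigeonhole step would fail and the lemma itself would be false. The remaining ingredients---dependence of $T_N$ on $\sched|_{P_{\leq N}}$, monotonicity of $T_N$ in $N$, and monotone continuity---are routine and should be stated briefly.
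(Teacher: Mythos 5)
Your proposal is correct and follows essentially the same route as the paper: both arguments are a K\"onig-style compactness argument on the finitely-branching tree of partial schedules, deriving from the assumed unboundedness of the simulation time a single limit scheduler that never amasses $p-\epsilon$ termination probability, contradicting $\termProb(\cG)\geq p$. Your sequential-diagonalization phrasing and the paper's marked-node/infinite-unmarked-path phrasing are interchangeable presentations of the same idea, and your identification of bounded branching as the load-bearing hypothesis matches the paper's.
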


\cref{lem:required-simulation-time} is a generalization of Lemma B.3 of \citet{MS24arxiv}.
It holds for $\iCFG$s because the branching at nondeterministic locations is bounded.
To prove the unrolling lemma, for each $m\in\nat$, we consider unrollings of $\cG$ for $m$ steps, running under partial schedules that resolve nondeterministic choices for up to $m$ steps.
Partial schedules are naturally ordered into a tree, where a partial schedule $\sched$ is extended by $\sched'$ if $\sched'$ agrees with $\sched$ when restricted to the domain of $\sched$.
An infinite path in this tree defines a scheduler.
For each scheduler $\sched$, we mark the $k$-th node in its path if $k$ is the minimum number such that the $k$-step unrolled program amasses termination probability 
at least $p - \epsilon$.
If two schedulers agree up to $k$ steps, then they both mark the same node.
The key observation is that, since the nondeterminism is finite-branching, the scheduler tree is finite-branching.
Thus, if we cut off the tree at marked nodes and still have an infinite number of incomparable marked nodes, there must be an infinite path in the tree that is not marked.
But this is a contradiction, because this infinite path corresponds to a scheduler that never amasses $p - \epsilon$ probability mass for termination.


\begin{corollary}
    \label{cor:bound-on-shortest-term-runs}
    Let $\sigma$ be a state of an $\iCFG$ $\cG$.
    Suppose $\termProb(\cG(\sigma \allowbreak))> 0$.
    Then, varied across schedulers, there is an upper bound on the length of the shortest consistent terminal run from $\sigma$.
\end{corollary}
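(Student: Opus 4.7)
The plan is to derive this as a direct consequence of the Unrolling Lemma (\cref{lem:required-simulation-time}). Fix a state $\sigma$ with $p \coloneqq \termProb(\cG(\sigma)) > 0$. I would pick any $\epsilon$ with $0 < \epsilon < p$; for concreteness, $\epsilon = p/2$. Since $\termProb(\cG(\sigma))$ is the infimum of the termination probability over schedulers, every scheduler $\sched$ terminates with probability at least $p > p - \epsilon$ from $\sigma$, so the "required simulation time" as defined before \cref{lem:required-simulation-time} is well-defined for every $\sched$.

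Next, I would invoke the Unrolling Lemma applied to $\cG(\sigma)$: it gives a finite upper bound $N$ on the simulation time of $\cG(\sigma)$ w.r.t. $\epsilon$, i.e., a uniform $N$ such that for every scheduler $\sched$, the required simulation time of $\cG(\sigma)$ under $\sched$ is at most $N$. Recalling the definition, this means that for each $\sched$, the terminating runs consistent with $\sched$ of length at most $N$ already accumulate total probability mass at least $p - \epsilon > 0$ under $\Prob_\sched$.

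The final step is essentially a nonemptiness observation: since $p - \epsilon > 0$, the set of terminating runs of length at most $N$ consistent with $\sched$ cannot be empty (an empty set contributes zero probability mass). Hence at least one terminating run of length at most $N$ is consistent with $\sched$, and in particular the shortest such run has length at most $N$. Since $N$ depends only on $\cG$, $\sigma$, and $\epsilon = p/2$, and not on $\sched$, this bound is uniform across schedulers, which is exactly the statement of the corollary.

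I do not foresee a major technical obstacle here; the corollary is essentially a repackaging of the Unrolling Lemma, replacing "there exist consistent terminating runs of bounded length carrying at least $p - \epsilon$ probability" with the weaker "there exists at least one consistent terminating run of bounded length." The only subtlety is ensuring that we choose $\epsilon$ strictly less than $p$ so that the cumulative probability threshold $p - \epsilon$ is strictly positive, forcing the set of short terminating runs to be nonempty; this is where the hypothesis $\termProb(\cG(\sigma)) > 0$ is used.
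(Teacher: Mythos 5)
Your proposal is correct and is essentially the paper's own argument: both reduce the claim to the Unrolling Lemma via the observation that accumulating a strictly positive probability mass within the (uniformly bounded) simulation time forces at least one consistent terminating run of that bounded length. The only cosmetic difference is that the paper phrases this as a contradiction (unbounded shortest-run lengths would make the simulation time unbounded) while you argue the forward direction directly.
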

\begin{proof}
    For a scheduler $\sched$, let $\pi_\sched$ be the smallest terminal run of $\cG(\sigma)$ consistent with $\sched$.
    The simulation time to assimilate a termination probability of $\epsilon$ for some $0 < \epsilon < \termProb(\cG(\sigma))$ under scheduler $\sched$ is necessarily at least as large as $|\pi_\sched|$.
    If the collection of lengths $|\pi_\sched|$ across schedulers $\sched$ wasn't bounded, then this simulation time is unbounded.
    This contradicts the unrolling lemma.
\end{proof}

\subsection{Assertion Language and Program Logic}
\label{subsec:program-logic}

Our language of choice for specifying assertions is the first-order language of arithmetic with addition, multiplication, and order interpreted over the domain of rationals.
We fix the interpretation model for our assertions as the standard model of rationals.
Refer to this interpretation by $\intQ$. \footnote{
    Instead of fixing $\intQ$, one can use any \emph{arithmetical structure} \cite{HarelKozenTiuryn} to specify and interpret assertions.
    All our proof rules will remain sound and relatively complete with this change.
}
Let $\ThQ$ denote the \emph{theory of rationals}, i.e., the collection of assertions that are true in the standard model of rationals.
The evaluation of our assertions is tantamount to their implication by $\ThQ$.
All proof techniques we present in our work are relative to complete proof systems for $\ThQ$.

Assertions are evaluated at program states.
Fix a $\iCFG$ $\cG$ with transition relation $\ordMapsTo_\cG$.
A state $\sigma$ of $\cG$ \emph{satisfies} an assertion $\varphi$ if the interpretation $\intQ$ augmented with the variable valuation encoded in $\sigma$ models $\varphi$.
We denote this by $\sigma \vDash \varphi$.
An assertion $\varphi$ is \emph{valid} if $\sigma \vDash \varphi$ for all states $\sigma$.
Valid assertions are contained in $\ThQ$.

We employ a program logic inspired by the seminal work of \citet{Floyd1993}.
Statements in our logic affix assertions as preconditions and postconditions to transitions in $\cG$.
For example, the transition $\tau \in \ordMapsTo_\cG$ could be affixed a precondition $\varphi_\tau$ and postcondition $\psi_\tau$ to yield the sentence $\{\varphi_\tau\} \tau \{\psi_\tau\}$.
The precondition $\varphi_\tau$ is evaluated at the program state before taking $\tau$, and the postcondition $\psi_\tau$ is evaluated at states reached immediately after $\tau$.
The sentence $\{\varphi_\tau\} \tau \{\psi_\tau\}$ is \emph{true} for $\cG$ if for every state $\sigma$ with $\sigma \vDash \varphi$, if $\tau$ is enabled at $\sigma$ and $\sigma'$ is a successor of $\sigma$ through $\tau$, then $\sigma' \vDash \psi_\tau$.

We use the notion of \emph{inductive invariants} in our proof rules.
An \emph{inductive invariant} is an assertion with $n+1$ free variables, the first ranging over $L$ and the others over $\setOfRationals$, that is closed under the successor operation.
That is, an assertion $\Inv$ is an inductive invariant if, whenever $(l, \bfx)$ satisfies $\Inv$, and $(l', \bfx')$ is a successor to $(l, \bfx)$, then $(l', \bfx')$ satisfies $\Inv$.
It follows that if $(l_{\init}, \bfx_{\init})$ satisfies $\Inv$, then every reachable state satisfies $\Inv$.
In this paper, we will mildly abuse notation and use $\Inv$ to also refer to the set of states satisfying $\Inv$.

\citet{Floyd1993} specified axioms for a proof system over this program logic.
\emph{Proof rules} extend this system by enabling the deduction of nontrivial program properties, such as termination.
These rules are composed of \emph{antecedents} and \emph{consequents}.
Antecedents are finite collections of statements written in the program logic.
Consequents detail properties of the program over which the antecedents are evaluated.
\emph{Soundness} of a proof rule means that if the antecedents are true for a program $\cG$, then the consequents hold for $\cG$.
\emph{Completeness} of a proof rule means that if the consequents are true for some program $\cG$, then one can come up with proofs for the validity of the antecedents of the rule over $\cG$ in the underlying proof system.

All proof rules specified in this paper operate over a $\CFG$ denoted by
$\cG = (L, \allowbreak V, \allowbreak l_{init}, \allowbreak \bfx_{init}, \allowbreak \ordMapsTo, \allowbreak G, \allowbreak \sfPr, \allowbreak \sfUp)$.
Their completeness is dependent on the existence of a complete proof system for $\ThQ$.
Such proof rules are said to be complete \emph{relative to a proof system for $\ThQ$}.
Relative completeness of this kind is standard in program logics.

To demonstrate the relative completeness of our proof rules, we will need to encode \emph{computable relations} in our assertion language.
A relation is computable if its characteristic function is decidable.
It is classically known that the theory of arithmetic interpreted over natural numbers can encode any computable relation.
Let $\intN$ refer to the interpretation model of the standard model of naturals.
Denote by $\ThN$ the collection of all true assertions under $\intN$.
$\ThN$ is generally referred to as the theory of natural numbers.
For each computable relation $R(x_1, x_2, \ldots x_n)$, there is an assertion $\varphi_R(x_1, x_2, \ldots x_n)$ that is true in $\ThN$.
To represent computable relations in $\ThQ$, we use a result by \citet{Robinson49}.
\begin{theorem}[\citet{Robinson49}]
    \label{theorem:robinson}
    $\setOfNaturals$ is definable in $\ThQ$.
\end{theorem}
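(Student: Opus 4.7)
The plan is to exhibit an explicit first-order arithmetic formula $\nu(x)$ such that $\intQ \models \nu(q)$ if and only if $q \in \setOfNaturals$. The construction proceeds in two steps: first define $\setOfIntegers$ in $\ThQ$ via Robinson's classical construction, then cut out $\setOfNaturals$ inside $\setOfIntegers$ using Lagrange's four-square theorem.

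For the first step, I would follow Robinson's original approach based on the theory of ternary quadratic forms over $\setOfRationals$. The goal is to produce a formula $\phi(t, a, b)$, in the language of arithmetic, that asserts the existence of a rational solution to a fixed ternary quadratic Diophantine equation whose coefficients depend polynomially on the parameters $a, b, t$. The construction must satisfy two properties:
\begin{enumerate}
  \item For every $a, b \in \setOfRationals$, the set $S_{a,b} := \{t \in \setOfRationals : \intQ \models \phi(t, a, b)\}$ contains $0$ and is closed under $t \mapsto t \pm 1$; hence $\setOfIntegers \subseteq S_{a,b}$.
  \item For every non-integer rational $q$, there exist $a, b \in \setOfRationals$ with $\intQ \not\models \phi(q, a, b)$.
\end{enumerate}
The first property is a direct verification by exhibiting explicit substitutions. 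The second is the substantive part, proved using the Hasse--Minkowski local--global principle: a ternary quadratic form over $\setOfRationals$ has a nontrivial rational zero iff it has nontrivial zeros in every $p$-adic completion $\setOfRationals_p$ and in $\setOfReals$. For a non-integer $q$, one chooses a prime $p$ dividing the denominator of $q$ in lowest terms and crafts $a, b$ so that the form becomes anisotropic at $p$ precisely for values with nonzero $p$-adic denominator, while remaining isotropic for all integer values. With $\phi$ in hand, set
\[
\iota(x) \;:=\; \forall a\, \forall b.\ \phi(x, a, b),
\]
which defines $\setOfIntegers$ in $\ThQ$.

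For the second step, I would invoke Lagrange's four-square theorem: the nonnegative integers are exactly the sums of four integer squares. Thus define
\[
\nu(x) \;:=\; \iota(x) \wedge \exists a\, \exists b\, \exists c\, \exists d.\ \iota(a) \wedge \iota(b) \wedge \iota(c) \wedge \iota(d) \wedge x = a^2 + b^2 + c^2 + d^2.
\]
Soundness is immediate: a sum of four rational squares is nonnegative, and by Lagrange every element of $\setOfNaturals$ is representable in this form with integer summands. (Since $<$ is in our signature, one could alternatively use $\iota(x) \wedge 0 \le x$; the four-squares phrasing is purely first-order over $+,\cdot$ and hence more portable.)

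The main obstacle is the first step: verifying the local--global analysis underlying Robinson's formula. One must show that Robinson's specific ternary form separates integers from non-integers uniformly, which requires careful bookkeeping with Hilbert symbols and $p$-adic valuations of the coefficients to ensure the quadratic form's isotropy at every prime matches the desired set membership. This is the technically delicate part of the argument, entirely number-theoretic in nature; once $\iota$ is available, the remainder is bookkeeping.
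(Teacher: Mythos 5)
The paper does not actually prove this statement: it is quoted as Julia Robinson's theorem with \citet{Robinson49} as the sole justification, so there is no in-paper proof to compare against. Your proposal is a reconstruction of Robinson's own argument --- ternary quadratic forms plus the Hasse--Minkowski local--global principle to define $\setOfIntegers$ inside $\setOfRationals$, then Lagrange's four-square theorem (or, since $<$ is in the paper's signature, simply $\iota(x) \wedge 0 \leq x$) to cut out $\setOfNaturals$ --- and the outline is sound.

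One technical caution on your first step. For a purely existential $\phi$ of the kind you describe, the closure property you assert \emph{unconditionally} --- that $S_{a,b}$ contains $0$ and is closed under $t \mapsto t \pm 1$ for every $a,b \in \setOfRationals$ --- fails at degenerate parameters (e.g.\ $a = b = 0$ can make $S_{a,b}$ empty), so the naive $\iota(x) := \forall a\,\forall b.\,\phi(x,a,b)$ would then define the wrong set. Robinson's actual definition guards the induction: $x$ is an integer iff for all $a,b$, \emph{if} $\phi(0,a,b)$ holds and $S_{a,b}$ is closed under successor, \emph{then} $\phi(x,a,b)$. Equivalently, you may fold that hypothesis into $\phi$ as an implication, after which both of your stated properties hold and your argument goes through unchanged: property 1 becomes vacuous for bad parameters, and property 2 is witnessed by parameters satisfying the hypothesis. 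This is a presentational repair rather than a mathematical gap; the substantive content is, as you say, the $p$-adic isotropy analysis showing that for each non-integer $q$ some admissible $(a,b)$ excludes $q$.
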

We refer to the assertion that encodes $\setOfNaturals$ by $\Nat$.
Therefore, $\Nat(x)$ is true in $\intQ$ \emph{iff} $x \in \setOfNaturals$.
All computable relations can be encoded in our assertion language through liberal usage of $\Nat$.
An important implication is that termination probabilities are expressible in our assertion language.
\begin{lemma}
    For a $\CFG$ $\cG$ and a $p \in [0, 1]$ with $\termProb(\cG) = p$, there is an assertion $\psi(x)$ with one free variable $x$ such that $\ThQ \vDash \psi(x) \Leftrightarrow x \leq p$.
\end{lemma}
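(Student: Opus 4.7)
The plan is to express $x \leq \termProb(\cG)$ as a first-order arithmetic formula that quantifies only over natural numbers and finite objects encodable as natural numbers. For $N \in \setOfNaturals$ and a scheduler $\sched$, let $p_N(\sched)$ denote the probability that $\sched$ reaches $(l_{\out}, \bfzero)$ within $N$ steps. Since $(p_N(\sched))_{N}$ is monotone increasing with limit $\Prob_\sched[\Diamond(l_{\out}, \bfzero)]$, we have $x \leq \termProb(\cG)$ iff $\forall \sched\, \forall \epsilon > 0\, \exists N : p_N(\sched) > x - \epsilon$, with $\epsilon$ ranging over positive rationals.

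The next step is a quantifier swap: I claim this is equivalent to $\forall \epsilon > 0\, \exists N\, \forall \sched : p_N(\sched) > x - \epsilon$. The $(\Leftarrow)$ direction is immediate from $\Prob_\sched[\Diamond(l_{\out}, \bfzero)] \geq p_N(\sched)$. For $(\Rightarrow)$, assuming $x \leq \termProb(\cG)$, I would invoke the Unrolling Lemma (\cref{lem:required-simulation-time}) with threshold $\epsilon/2$: it furnishes a uniform bound $N$ such that $p_N(\sched) \geq \termProb(\cG) - \epsilon/2 \geq x - \epsilon/2 > x - \epsilon$ for every scheduler $\sched$. Equivalently, one could invoke K\"onig's lemma on the finite-branching tree of partial schedulers, noting that an infinite branch through the subtree where $p_N(\sched_N) \leq x - \epsilon$ would yield a scheduler $\sched$ with $\Prob_\sched[\Diamond(l_{\out}, \bfzero)] \leq x - \epsilon < x$.

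The crucial enabling observation is that $p_N(\sched)$ depends only on the restriction of $\sched$ to consistent finite paths of length at most $N$. Such a restriction is a finite object encodable as a natural number; the set of all such restrictions is bounded computably in $N$ and the branching of $\cG$; and $p_N$ is a computable rational-valued function of this encoding, obtained as a finite sum of products of the rational expressions $\sfPr(l, l')[\bfx]$ along paths consistent with the restriction. Consequently, the predicate ``$s$ encodes a valid partial scheduler of $\cG$ at depth $N$, and the corresponding depth-$N$ termination probability strictly exceeds $x - \epsilon$'' is a computable relation in the parameters $N, s, x, \epsilon$.

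Finally, by \cref{theorem:robinson}, every computable predicate on $\setOfNaturals$ is expressible in $\ThQ$ relative to $\Nat$. I would therefore take $\psi(x)$ to be the formula
\[
\forall \epsilon \,(\epsilon > 0 \rightarrow \exists N \,(\Nat(N) \wedge \forall s \,(\Nat(s) \wedge \mathsf{Valid}_\cG(N, s) \rightarrow \mathsf{P}_\cG(N, s) > x - \epsilon))),
\]
where $\mathsf{Valid}_\cG(N, s)$ and $\mathsf{P}_\cG(N, s)$ are the arithmetic formulas encoding, respectively, the partial-scheduler validity check and the depth-$N$ termination probability. The main obstacle is the mechanical but tedious arithmetization of scheduler encodings and the inductive computation of $p_N$ inside $\ThQ$; this is a standard G\"odel-style encoding exercise, made routine by Robinson's theorem but involving careful bookkeeping over the graph structure of $\cG$.
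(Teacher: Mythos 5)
Your proposal is correct and follows essentially the same route as the paper: both reduce $x \leq \termProb(\cG)$ to a $\forall\epsilon\,\exists N$ statement about finite unrollings, use the Unrolling Lemma to obtain the bound on $N$ uniformly across all schedulers, observe that the depth-$N$ termination probability is a computable function of a finitely-encoded partial scheduler, and then appeal to Robinson's definability of $\setOfNaturals$ in $\ThQ$ to express the resulting computable relation. Your write-up is in fact more explicit than the paper's about the quantifier alternation and the arithmetization of partial schedulers, but there is no substantive difference in the argument.
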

\begin{proof}
    We know that $\termProb(\cG) \geq p$ \emph{iff} $\termProb(\cG) \geq p - \epsilon$ for all $\epsilon > 0$.
    The unrolling lemma implies that for all $\epsilon > 0$, there is a $k \in \setOfNaturals$ such that the probability mass of the $k$-unrolled program is at least $p - \epsilon$.
    Finite unrollings of $\cG$ are, by definition, computable, and checking the probability of termination amassed in this finite unrolling is also computable; see \citet{KaminskiKM19} for details.
    This means that a relation $R(\epsilon, k)$ representing this relationship between every rational $\epsilon$ and natural $k$.
    Such computable relations are representable in $\ThQ$ through \cref{theorem:robinson}. 
\end{proof}
Notice that while the termination probabilities $p$ are real numbers, the lower bounds verified by the assertion $\psi$ in the above lemma are entirely rational.
However, by representing the set of rational numbers under $p$, $\psi$ has effectively captured the Dedekind cut of $p$.
This expressibility shows how our proof rules can use irrational lower bounds on termination probabilities.

\section{Almost-Sure Termination}
\label{sec:ast}

Our proof rules consist of sets and functions over the state spaces that satisfy certain properties.
Each of these entities must be representable in our assertion language; therefore, they must be arithmetical expressions over the program variables and program locations.
Instead of specifying each condition in our rules as formal statements in our program logic, we directly describe the properties these entities must satisfy.
We do so to emphasize these entities themselves over the formalism surrounding them.
It is nevertheless possible to write each of the following proof rules as finite sets of statements in our program logic.

Recall that a proof rule is \emph{sound} if, whenever we can find arithmetical expressions in our assertion language that satisfy the conditions outlined in the premise of a rule, the conclusion of the rule holds.
A proof rule is \emph{relatively complete} if, whenever the conclusion holds (e.g., a program $\cG$ is $\AST$),
we can find certificates in the assertion language that satisfy all the premises.


\subsection{McIver and Morgan's Variant Rule}
\label{subsec:mciver-morgan-rule}

We start with a well-known rule for almost-sure termination by \citet{McIverMorganBook}.
The rule is sound but complete only for finite-state programs \cite[Lemma 7.6.1]{McIverMorganBook}.

\begin{proofrule}[Variant Rule for $\AST$ \cite{McIverMorganBook}] 
    \label{proofrule:ast-variant}
    To show that $\cG$ is $\AST$, find
    \begin{enumerate}
        \item an inductive invariant $\Inv$ containing the initial state $(l_{init}, \bfx_{init})$,
        \item a \emph{variant} function $U : \Inv \to \setOfIntegers$,
        \item bounds $\Lo$ and $\Hi$ such that for all states $(l, \bfx) \in \Inv$, $\Lo \leq U(l, \bfx) < \Hi$, and
        \item an $\epsilon > 0$,
    \end{enumerate}
    such that, for each state $(l, \bfx) \in \Inv$,
    \begin{enumerate}[label=(\alph*)]
        \item if $(l, \bfx)$ is a terminal state, $U(l, \bfx) = \Lo$.
        \item if $(l, \bfx)$ is an assignment, or nondeterministic state, $U(l', \bfx') < U(l, \bfx)$ for every successor $(l', \bfx')$.
        \item if $(l, \bfx)$ is a probabilistic state, $\sum \sfPr(l, l')[\bfx] > \epsilon$ over all successor states $(l', \bfx')$ with $U(l', \bfx') < U(l, \bfx)$.
    \end{enumerate}
\end{proofrule}
The function $U$ asked by the rule maps states to integers and decreases with probability at least $\epsilon$ at each execution step.
Intuitively, the value it assigns to a state corresponds to the length of the shortest terminal run from that state.
Notice that this length must necessarily decrease by $1$ at one direction in each transition.
$U$ thus measures the distance a state is from termination, and is generally referred to as a \emph{distance variant}.
Condition (3) then implies that the distance to termination across states in $\cG$ is bounded above.
\begin{lemma}[\citet{McIverMorganBook}]
    \label{lem:completeness-mciver-morgan-rule}
    \cref{proofrule:ast-variant} is sound for all $\AST$ programs.
    It is relatively complete for finite-state $\AST$ $\CFG$s.
\end{lemma}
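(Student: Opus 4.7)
The plan for soundness is to convert the one-step decrease condition on the integer-valued, bounded variant $U$ into a uniform block-level termination probability. First I would observe that at every non-terminal state in $\Inv$, conditions (b) and (c) together imply $U$ strictly decreases in a single step with probability at least $\epsilon$. Since $U$ is integer-valued in the finite range $[\Lo, \Hi) \cap \setOfIntegers$, any $\Hi - \Lo$ consecutive strict decreases must drive $U$ to the value $\Lo$, and $U = \Lo$ can only occur at a terminal state in $\Inv$ (otherwise conditions (b) or (c) would demand a successor with $U < \Lo$, contradicting the lower bound). Thus from any state in $\Inv$, the probability of termination within the next $\Hi - \Lo$ steps is at least $\epsilon^{\Hi - \Lo}$, \emph{uniformly over schedulers and over the starting state}. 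A standard block argument then partitions each execution into consecutive blocks of $\Hi - \Lo$ steps; by the Markov property and the uniform bound, the probability of not terminating within $n$ blocks is at most $(1 - \epsilon^{\Hi - \Lo})^n$, which tends to $0$. The uniformity ensures the bound survives the infimum over schedulers, yielding $\AST$.

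For relative completeness on finite-state $\AST$ programs, the plan is to construct the certificates explicitly from the program's graph. Take $\Inv$ to be the finite set of reachable states. Since $\cG$ is $\AST$, $\termProb(\cG(\sigma)) = 1 > 0$ for every reachable $\sigma$, so by \cref{cor:bound-on-shortest-term-runs} the quantity
\[
U(\sigma) := \sup_\sched \bigl\{\, k : k \text{ is the length of the shortest terminal run from } \sigma \text{ consistent with } \sched \,\bigr\}
\]
is a finite nonnegative integer for each $\sigma \in \Inv$. Set $\Lo = 0$, $\Hi = 1 + \max_{\sigma \in \Inv} U(\sigma)$, and let $\epsilon$ be any positive rational strictly smaller than the minimum value of $\sfPr(l,l')[\bfx]$ over the finitely many reachable probabilistic transitions. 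Conditions (a)--(c) follow by a routine case analysis, whose key identities are $U(\sigma) = 1 + \max_{\sigma'} U(\sigma')$ at a nondeterministic state (any successor can be picked first by some scheduler), and $U(\sigma) = 1 + \min_i U(\sigma_i)$ at a probabilistic state (independent choice of sub-schedulers over each successor's subtree factors through the $\min$). The first identity makes every successor of a nondeterministic state strictly $U$-smaller; the second makes the $\arg\min$ successor of a probabilistic state strictly $U$-smaller with probability at least $\epsilon$.

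All constructed objects are expressible in $\ThQ$: since $\Inv$ is finite, it is a finite disjunction of equalities between program variables/locations and rational constants, $U$ is a finite case split, and $\Lo, \Hi, \epsilon$ are rational numerals. The main subtlety I anticipate is the probabilistic case in soundness: because $U$ may \emph{increase} at non-decreasing successors of a probabilistic state, individual decreases do not accumulate monotonically over long runs. The integer-valuedness of $U$ combined with the block argument bypasses this --- only consecutive decreases within one block are needed, and a failed block simply restarts the same uniform-lower-bound argument from the new state.
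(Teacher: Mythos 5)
Your proposal is correct and follows essentially the same route as the paper, which only sketches the argument by citing McIver and Morgan's zero-one law (your block argument with the uniform bound $\epsilon^{\Hi-\Lo}$ is precisely the standard proof of that law in this instance) and the bound on shortest terminal runs for finite-state $\AST$ programs (which you obtain via \cref{cor:bound-on-shortest-term-runs} rather than directly from $|\reach(\cG)|$, an immaterial difference). Your observation that $U = \Lo$ forces terminality via conditions (b)/(c) and the lower bound, and the $\max$/$\min$ recurrences for $U$ in the completeness direction, are exactly the details the paper leaves implicit.
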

Soundness of the rule follows from one the application of a zero-one law of probability theory to the fact that the distance to termination across states is bounded above by $\Hi - \Lo$.
Finite-state completeness results from the necessary upper bound (equal to the number of states) on the length of shortest terminal runs beginning from each state in $\AST$ programs.
While \citet{McIverMorganBook} claim completeness and not relative completeness, their proof trivially induces relative completeness.

This rule is not complete, however.
This is because, if the rule is applicable, the program is guaranteed a terminal run of length at most $\Hi - \Lo$ from any state.
But the 1D random walk (outlined in \cref{ex:srw}) does not satisfy this property, even though it terminates almost-surely.



\subsection{Our Rule}
\label{sec:ast-martingale}

We present here a martingale-based proof rule for $\AST$ that exploits the fact that $\AST$ programs, when repeatedly run, are recurrent.
\begin{proofrule}[Martingale Rule for $\AST$]
    \label{proofrule:ast-martingale}
    To prove that $\cG$ is $\AST$, find
    \begin{enumerate}
    \item an inductive invariant $\Inv$ containing the initial state,
    \item a supermartingale function $V : \Inv \to \setOfReals$ that assigns $0$ to the terminal state and at all states $(l, \bfx) \in \Inv$, 
        \begin{enumerate}
            \item $V(l, \bfx) > 0$,
            \item if $(l, \bfx)$ is an assignment or nondeterministic state, then $V(l, \bfx) \geq V(l', \bfx')$ for all possible successor states $(l', \bfx')$, and
            \item if $(l, \bfx)$ is a probabilistic state, then, over all successor states $(l', \bfx')$, $V(\sigma) \geq \sum \Pr(l, l') \allowbreak V(l', \bfx')$,
        \end{enumerate}
    \item a variant function $U : \Inv \to \setOfNaturals$ that assigns $0$ to the terminal state,
        \begin{enumerate}
            \item ensures that at nondeterministic and assignment states $(l, \bfx) \in \Inv$,  $U(l, \bfx) > U(l', \bfx')$ for all possible successor states $(l', \bfx')$, and
            \item satisfies the following compatibility criteria with the sublevel sets $V_{\leq r} = \{ \sigma \in \Inv \mid V(\sigma) \leq r \}$ for each $r \in \setOfReals$:
            \begin{enumerate}
                \item the set $\{ u \in \setOfNaturals \mid \sigma \in V_{\leq r} \land u = U(\sigma) \}$ is bounded, and
                \item there exists an $\epsilon_r > 0$ such that, for all probabilistic states $(l, \bfx) \in V_{\leq r}$, the sum $\sum \sfPr(l, l')[\bfx] > \epsilon_r$ over all successor states $(l', \bfx')$ with $U(l', \bfx') < U(l, \bfx)$.
            \end{enumerate}
        \end{enumerate}
    \end{enumerate}
%
\end{proofrule}
In this rule, $U$ is meant to play the role of the variant function from \cref{proofrule:ast-variant}.
Accordingly, $U$ assigns to states the length of shortest terminal run begnning at that state.
The supermartingale $V$, on the other hand, can be thought of as a measure of \emph{relative likelihood}.
The probability that a transition increases $V$ by an amount $v$ reduces as $v$ increases.
Unlikely transitions are associated with greater increments to $V$, and (relatively) unlikely states have greater $V$ values.
Notice that this rule reduces to the prior \cref{proofrule:ast-variant} if the supermartingale $V$ was bounded.

At a high level, this rule works for the following reasons.
Suppose $V$ is unbounded and executions begin at some initial state $\sigma_0 \in \Inv$.
The supermartingale property of $V$ implies that from $\sigma_0$, the probability of reaching a state $\sigma$ with $V(\sigma) > V(\sigma_0)$ approaches $0$ as $V(\sigma)$ grows to $+\infty$.
Now, fix an unlikely state $\sigma$ with $V(\sigma) \gg V(\sigma_0)$.
Let's now restrict our attention to the executions that remain in states $\gamma \in \Inv$ with $V(\gamma) \leq V(\sigma)$.
The compatibility conditions satisfied by the variant $U$ with $V$ at the sublevel set $V_{\leq V(\sigma)}$ implies the almost-sure termination of these executions.
The remaining executions must reach some unlikely state $\gamma'$ with $V(\gamma') \geq V(\sigma)$.

Thus, as the probability of reaching unlikely states $\gamma'$ reduces the ``further away'' (from the perspective of $V$) they are, the probability of terminating approaches $1$.
Since $V$ is unbounded, the probability of termination is $1$.


\begin{lemma}[Soundness]
    \cref{proofrule:ast-martingale} is sound. 
\end{lemma}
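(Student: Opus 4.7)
The plan is to fix an arbitrary scheduler $\sched$ and show $\Prob_\sched[\Diamond(l_{\out},\bfzero)] = 1$; because $\sched$ was arbitrary, this yields $\termProb(\cG)=1$. The central idea is to partition the probability space according to whether an execution ever escapes the sublevel set $V_{\leq M}$, control the escape probability by the supermartingale $V$, and force almost-sure termination inside $V_{\leq M}$ by the variant $U$. Sending $M\to\infty$ will then kill the escape contribution.

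First I would observe that conditions 2(a)--2(c) make $V(X_n^\sched)$ a non-negative supermartingale under \emph{every} scheduler: at a nondeterministic state the pointwise inequality $V(l,\bfx)\geq V(l',\bfx')$ holds for each successor, and at a probabilistic state the expected-value inequality is given directly. Doob's maximal inequality for non-negative supermartingales then yields, for every $M>V(l_{\init},\bfx_{\init})$,
\[
\Prob_\sched\Bigl[\sup_n V(X_n^\sched)\geq M\Bigr]\;\leq\;\frac{V(l_{\init},\bfx_{\init})}{M},
\]
so the probability of ever leaving $V_{\leq M}\cap\Inv$ is at most $V(l_{\init},\bfx_{\init})/M$.

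Second, I would show that executions confined to $V_{\leq M}\cap\Inv$ terminate almost surely. The cleanest route is to construct an auxiliary MDP in which every transition leaving $V_{\leq M}$ is redirected into a fresh absorbing exit state, and to extend $U$ by setting $U'=0$ on both the exit state and $(l_{\out},\bfzero)$, and $U'=U$ elsewhere. In this auxiliary MDP every non-absorbing state is a non-terminal state of $\Inv\cap V_{\leq M}$, so compatibility criterion~(i) bounds $U'$ by some $N_M\in\setOfNaturals$, while condition~3(a) together with criterion~(ii) guarantees, at every non-absorbing state, probability at least $\epsilon_M>0$ onto $U'$-decreasing successors (stepping into the exit state also counts, since $U'$ there is $0$). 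A routine induction on $U'(\sigma)\leq N_M$ then shows that under \emph{any} resolution of the nondeterminism the probability of reaching an absorbing state within $N_M$ steps is at least $\epsilon_M^{N_M}$; iterating over disjoint blocks of length $N_M$ gives geometric decay of the non-absorption probability. Translating back to $\cG$, this reads $\Prob_\sched\bigl[\Diamond(l_{\out},\bfzero)\cup\{\text{leave } V_{\leq M}\}\bigr]=1$, so $\Prob_\sched[\Diamond(l_{\out},\bfzero)] \geq 1-V(l_{\init},\bfx_{\init})/M$; letting $M\to\infty$ finishes the proof.

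The main obstacle is the induction in the second step. A $U$-decreasing successor of a state in $V_{\leq M}$ may itself land outside $V_{\leq M}$, where the uniform $\epsilon_M$ guarantee no longer applies; a naive inductive argument in the original chain therefore breaks. The auxiliary MDP is the precise device that side-steps this difficulty: by collapsing every escape from $V_{\leq M}$ into a single absorbing exit with $U'=0$, every outside step already counts as successful absorption, so the $\epsilon_M$ bound is all that is needed to carry the induction through. This is also where the rule's $\forall r\,\exists\,\epsilon_r$ compatibility clause is essential --- we never require $\epsilon$ to be uniform across $\Inv$, only within each sublevel $V_{\leq M}$ that appears in the outer limit.
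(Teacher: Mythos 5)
Your proof is correct and follows essentially the same strategy as the paper's: termination inside each sublevel set $V_{\leq M}$ is forced by the bounded variant via an auxiliary program that absorbs escaping transitions (the paper's $\cG_{\leq n}$ together with an appeal to the McIver--Morgan variant rule), while the supermartingale controls the probability of escape. The only difference is that you invoke Doob's maximal inequality to get the quantitative bound $V(l_{\init},\bfx_{\init})/M$ on ever leaving $V_{\leq M}$ and then let $M\to\infty$, whereas the paper isolates the set $\Pi_\infty$ of runs along which $V$ is unbounded and rules it out with Doob's convergence theorem; the two devices are interchangeable here.
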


\begin{proof}
    Let us first dispense of the case where $V$ is bounded.
    If $V$ is bounded, the compatibility criteria forces a bound on the variant function $U$.
    The soundness of \cref{proofrule:ast-variant} implies $\cG \in \AST$.
    Therefore, from now on, $V$ is assumed to be unbounded.

    Denote the initial state by $\sigma_0$.
    For each $n \in \setOfNaturals$, define $\Pi_n$ to be the collection of runs from $\sigma_0$ that reach a maximum $V$ value of $n$.
    This means that for each state $\sigma$ encountered in executions in $\Pi_n$, $V(\sigma) \leq n$.
    Define $\Pi_\infty$ to be the remaining collection of executions beginning at $\sigma_0$ that don't have a bound on the $V$ values that they reach.
    This means that for each execution $\pi \in \Pi_\infty$ and each $n \in \setOfNaturals$, there are states $\sigma \in \pi$ such that $V(\sigma) > n$.
    We have thus partitioned the collection of executions of the $\iCFG$ $\cG$ to $\Pi_\infty \cup (\bigcup_{i \in \setOfNaturals} \Pi_i)$.

    We will now argue that under every scheduler, the probability measure of all non-terminating executions in each $\Pi_n$ is $0$.
    By definition, all executions in $\Pi_n$ lie entirely within the sublevel set $V_{\leq n}$.
    The compatibility of $U$ with $V_{\leq n}$ implies that the variant $U$ is bounded across states in $\Pi_n$.
    Consider an $\iCFG$ $\cG_{\leq n}$ that mirrors $\cG$ inside $V_{\leq n}$, but marks states in $\cG$ outside $V_{\leq n}$ as terminal.
    Applying \cref{proofrule:ast-variant} using the now bounded variant $U$ allows us to deduce that $\cG_{\leq n}$ is almost-surely terminating.
    Observe now that the collection of non-terminating runs of $\cG_{\leq n}$ is precisely the collection of non-terminating runs in $\Pi_n$.
    This immediately gives us what we need.

    We now turn our attention to the final collection $\Pi_\infty$.
    Observe that $\Pi_\infty$ must only contain non-terminal executions.
    Suppose that, under some scheduler $\sched$, the probability measure of $\Pi_\infty$ wasn't $0$.
    Let the probability space defining the semantics of $\cG$ under $\sched$ (see \cref{subsec:syntax-semantics}) be $(\runs_{\cG(\sigma)}, \cF_{\cG(\sigma)}, \Prob_\sched)$, and let its canonical filtration be $\{\cF_n\}$.
    Define a stochastic process $\{ X^\sched_n \}$ over the aforementioned probability space augmented with the filtration $\{ \cF_n \}$ that tracks the current state of the execution of the program.
    Define another stochastic process $\{ Y^\sched_n \}$ as $Y^\sched_n \triangleq V(X^\sched_n)$ for each $n \in \setOfNaturals$.
    It's easy to see that $Y^\sched_n$ is a non-negative supermartingale.
    Since $Y^\sched_n$ is non-negative, Doob's Martingale Convergence Theorem \cite{doobBook} implies the almost-sure existence of a random variable $Y^\sched_\infty$ that the process $\{Y^\sched_n\}$ converges to.
    This means that $\bbE[Y^\sched_\infty] \leq Y^\sched_0$.

    Under the condition that $\Pi_\infty$ occurs, $Y^\sched_\infty = +\infty$.
    Since the probability measure of these non-terminal executions isn't $0$, we have that $\bbE[Y^\sched_\infty] = +\infty > Y^\sched_0 = V(X^\sched_0)$.
    This raises a contradiction, completing the proof.
\end{proof}

To show completeness, we adapt a technique by \citet{MSZ78} to build the requisite supermartingale $V$.
Suppose $\cG$ is $\AST$.
Let $\reach(\cG)$ be the set of its reachable states.
Fix a computable enumeration $\sfEnum$ of $\reach(\cG)$ that assigns $0$ to its terminal state.
Intuitively, $\sfEnum$ is meant to order states in a line so that the probability of reaching a state that's far to the right in this order is small.
This is because the $\AST$ nature of $\cG$ forces executions to ``lean left'' toward the terminal state.
Note that we place no other requirements on $\sfEnum$; these intuitions will work no matter how $\sfEnum$ orders the states.
A state $\sigma$ is said to be indexed $i$ if $\sfEnum(\sigma) = i$.
From now on, we will refer to the state indexed $i$ by $\sigma_i$.

A crucial part of our construction is the following function $R : (\setOfNaturals \times \setOfNaturals) \to [0, 1]$. 
Intuitively, $R$ measures the ability of executions beginning from a state to reach states that are far to the right of it in the $\sfEnum$ order.
Let $\cG_i$ be the $\CFG$ obtained from $\cG$ by switching its initial state to $\sigma_i$.
Let the semantics of $\cG_i$ under a scheduler $\sched$ be the probability space $(\runs_{\cG_i}, \allowbreak \cF_{\cG_i}, \allowbreak \Prob^i_\sched)$.
Define $R(i, n)$ at indices $i$ and $n$ to be
\begin{equation}
    \label{eq:scope-defn}
    R(i, n) \triangleq \sup{}_\sched \Prob^i_\sched\left(\Diamond\left( \left\{\sigma_m \in \reach(\cG) \mid m \geq n\right\}\right)\right)
\end{equation}
Where $\Diamond(C)$ represents the event of eventually reaching the set $C$.
We will refer to the first argument $i$ as the source index and the second argument $n$ as the minimum target index.
Put simply, $R(i, n)$ is the supremum probability of reaching the target indices $\{n, n+1, \ldots\}$ from the source $\sigma_i$.

\begin{lemma}
    \label{lem:scope-usefulness}
    $R(i, n) \to 0$ as $n \to \infty$ at every $i \in \setOfNaturals$, i.e.,
    $
    \forall i \in \setOfNaturals \cdot \lim_{n\to\infty} R(i, n) = 0
    $.
\end{lemma}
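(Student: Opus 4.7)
The plan is to combine the unrolling lemma with finite branching of $\cG$. Fix $i \in \setOfNaturals$ and $\epsilon > 0$; I will show $R(i, n) \leq \epsilon$ for all sufficiently large $n$, which is enough.

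First, observe that $\cG_i$, the shifted $\CFG$ with initial state $\sigma_i$, also satisfies $\termProb(\cG_i) = 1$. Indeed, if some scheduler from $\sigma_i$ witnessed positive non-termination probability, we could compose it with a fixed finite path from $\sigma_0$ to $\sigma_i$ (such a path exists because $\sigma_i \in \reach(\cG)$, and it has positive probability mass since $\sfPr$ is strictly positive on all enabled probabilistic transitions) to obtain a scheduler for $\cG$ with positive non-termination probability, contradicting $\cG \in \AST$. Applying the unrolling lemma (\cref{lem:required-simulation-time}) to $\cG_i$ with parameter $\epsilon$ thus yields a length $T = T(i, \epsilon) \in \setOfNaturals$ such that, under every scheduler $\sched$, the terminating runs of $\cG_i$ consistent with $\sched$ of length at most $T$ carry total $\Prob^i_\sched$-mass at least $1 - \epsilon$. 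Equivalently, the event that the execution has not terminated within $T$ steps has $\Prob^i_\sched$-probability at most $\epsilon$, uniformly in $\sched$.

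Next, since each location of $\cG$ has only finitely many outgoing transitions, the set of states reachable from $\sigma_i$ within $T$ steps is finite. Let $M = M(i, T)$ be the maximum $\sfEnum$-index in this finite set. For any $n > M$, a run from $\sigma_i$ that reaches a state indexed at least $n$ must visit a state outside the $T$-step reachable set, and so it cannot have terminated within $T$ steps. Therefore, for every scheduler $\sched$,
\[
\Prob^i_\sched\bigl(\Diamond \{\sigma_m : m \geq n\}\bigr) \;\leq\; \Prob^i_\sched(\text{execution not terminated by step } T) \;\leq\; \epsilon.
\]
Taking the supremum over $\sched$ gives $R(i, n) \leq \epsilon$ for all $n > M$. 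Since $\epsilon > 0$ was arbitrary, $\lim_{n \to \infty} R(i, n) = 0$. The one real obstacle is obtaining such a $T$ uniformly across all schedulers --- but this is precisely the content of the unrolling lemma, so the rest of the argument reduces to the routine bounded-branching observation that only finitely many states can be visited inside a bounded time horizon.
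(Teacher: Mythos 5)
Your proof is correct and follows essentially the same route as the paper's: apply the unrolling lemma starting from $\sigma_i$ to obtain a uniform horizon $T$, use bounded branching to bound the indices of states reachable within $T$ steps, and conclude that reaching an index $\geq n$ forces non-termination within $T$ steps. The only differences are cosmetic — you argue directly rather than by contradiction, and you make explicit (and correctly justify) the step that $\cG_i$ is itself $\AST$, which the paper leaves implicit.
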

\begin{proof}
    Denote by $E_n$ the event that executions beginning from $\sigma_i$ reach states with index $\geq n$.
    Clearly, $R(i, n)$ measures the supremum of the probability of $E_n$ across all schedulers.
    Suppose that $\lim_{n \to \infty} R(i, n) > 0$ for some index $i$.
    This strict inequality means there is a small $\epsilon > 0$ such that $\lim_{n \to \infty} R(i, n) > \epsilon$.
    Observe that $E_{n + 1} \subseteq E_n$ for all $n$.
    This means that $R(i, n)$ is non-increasing for increasing $n$, which in turn means that $R(i, n) > \epsilon$ for all $n \in \setOfNaturals$.
    Since $R(i, n) = \sup{}_\sched \Prob^i_\sched [E_n]$, there must exist, for each $n$, a scheduler $\sched_n$ such that $\Prob^i_{\sched_n}[E_n] > \epsilon$.
    
    Since $\cG$ is $\AST$, the unrolling lemma indicates the existence of a $k_\epsilon \in \setOfNaturals$ with the property that for every scheduler $\sched$, the probability mass of terminating executions of length $\leq k_\epsilon$ beginning at $\sigma_i$ consistent with $\sched$ is $> 1 - \epsilon$.
    Because the branching at nondeterministic and probabilistic states in our program model is bounded, only a finite collection of states can be reached by executing $\cG$ for up to $k_\epsilon$ states.
    This means that across all schedulers, there is an index $m_\epsilon$ such that no state of index $\geq m_\epsilon$ was reached within $k_\epsilon$ steps.

    Take the scheduler $\sched_{m_\epsilon}$ such that $\Prob^i_{\sched_{m_\epsilon}} [E_{m_\epsilon}] > \epsilon$.
    This means that the probability mass of terminating executions of $\cG$ consistent with $\sched_{m_\epsilon}$ beginning at $\sigma_i$ that do not reach a state with index $\geq m_\epsilon$ is $< 1 - \epsilon$.
    For $\cG$ to amass a termination probability of $\geq 1 - \epsilon$ under these conditions, it must enter some state with index $\geq m_\epsilon$.
    This is not possible within $k_\epsilon$ steps, raising a contradiction and completing the proof.
\end{proof}
It turns out that, if we fix the minimum target index $n$, the function $R$ becomes a supermartingale.
Define $V_n(\sigma) = R(\sfEnum(\sigma), n)$ for every $n \in \setOfNaturals$.
It's easy enough to see that $V_n$ is a supermartingale:
due to the supremum nature of $R(i, n)$, at assignment / non-deterministic states $\sigma$ with possible successors $\sigma'$, $V_n(\sigma) \geq V_n(\sigma')$, and due to the probabilistic nature of $\Prob^i_\sched$,
at probabilistic $\sigma = (l, \bfx)$, $V_n(l, \bfx) \geq \sum \sfPr(l, l') V_n(l', \bfx)$ across all successors $(l', \bfx)$. 
However, $V_n$ isn't the supermartingale we need, as we may not always be able to construct a compatible $U$ for any $V_n$.
This is because every $V_n$ is bounded above (by $1$), whereas $U$ typically isn't bounded above.

\begin{figure}[t]

\centering

\tikzset{every picture/.style={line width=0.75pt}} 

\begin{tikzpicture}[x=0.75pt,y=0.75pt,yscale=-1,xscale=1]

\draw  [draw=none][fill={rgb, 255:red, 248; green, 231; blue, 28 }  , opacity=0.25 ] (25.89,160) .. controls (25.89,132.61) and (81.41,110.4) .. (149.89,110.4) .. controls (218.37,110.4) and (273.89,132.61) .. (273.89,160) .. controls (273.89,187.39) and (218.37,209.6) .. (149.89,209.6) .. controls (81.41,209.6) and (25.89,187.39) .. (25.89,160) -- cycle ;
\draw  [draw=none][fill={rgb, 255:red, 255; green, 0; blue, 0 }  , opacity=0.25 ] (410,160.1) .. controls (410,132.76) and (465.29,110.6) .. (533.5,110.6) .. controls (601.71,110.6) and (657,132.76) .. (657,160.1) .. controls (657,187.44) and (601.71,209.6) .. (533.5,209.6) .. controls (465.29,209.6) and (410,187.44) .. (410,160.1) -- cycle ;
\draw  [draw=none][fill={rgb, 255:red, 255; green, 255; blue, 255 }  ,fill opacity=1 ] (543,66.75) -- (667,66.75) -- (667,215.98) -- (543,215.98) -- cycle ;
\draw  [color={rgb, 255:red, 73; green, 140; blue, 0 }  ,draw opacity=1 ][line width=1.5]  (40,150.11) .. controls (40,144.65) and (44.43,140.22) .. (49.89,140.22) .. controls (55.35,140.22) and (59.78,144.65) .. (59.78,150.11) .. controls (59.78,155.57) and (55.35,160) .. (49.89,160) .. controls (44.43,160) and (40,155.57) .. (40,150.11) -- cycle ;
\draw   (90,150.11) .. controls (90,144.65) and (94.43,140.22) .. (99.89,140.22) .. controls (105.35,140.22) and (109.78,144.65) .. (109.78,150.11) .. controls (109.78,155.57) and (105.35,160) .. (99.89,160) .. controls (94.43,160) and (90,155.57) .. (90,150.11) -- cycle ;
\draw   (140,150.11) .. controls (140,144.65) and (144.43,140.22) .. (149.89,140.22) .. controls (155.35,140.22) and (159.78,144.65) .. (159.78,150.11) .. controls (159.78,155.57) and (155.35,160) .. (149.89,160) .. controls (144.43,160) and (140,155.57) .. (140,150.11) -- cycle ;
\draw   (241,150.11) .. controls (241,144.65) and (245.43,140.22) .. (250.89,140.22) .. controls (256.35,140.22) and (260.78,144.65) .. (260.78,150.11) .. controls (260.78,155.57) and (256.35,160) .. (250.89,160) .. controls (245.43,160) and (241,155.57) .. (241,150.11) -- cycle ;
\draw   (421,150.11) .. controls (421,144.65) and (425.43,140.22) .. (430.89,140.22) .. controls (436.35,140.22) and (440.78,144.65) .. (440.78,150.11) .. controls (440.78,155.57) and (436.35,160) .. (430.89,160) .. controls (425.43,160) and (421,155.57) .. (421,150.11) -- cycle ;
\draw   (290,150.11) .. controls (290,144.65) and (294.43,140.22) .. (299.89,140.22) .. controls (305.35,140.22) and (309.78,144.65) .. (309.78,150.11) .. controls (309.78,155.57) and (305.35,160) .. (299.89,160) .. controls (294.43,160) and (290,155.57) .. (290,150.11) -- cycle ;
\draw   (470,150.11) .. controls (470,144.65) and (474.43,140.22) .. (479.89,140.22) .. controls (485.35,140.22) and (489.78,144.65) .. (489.78,150.11) .. controls (489.78,155.57) and (485.35,160) .. (479.89,160) .. controls (474.43,160) and (470,155.57) .. (470,150.11) -- cycle ;
\draw    (90,150.11) -- (61.78,150.11) ;
\draw [shift={(59.78,150.11)}, rotate = 360] [color={rgb, 255:red, 0; green, 0; blue, 0 }  ][line width=0.75]    (10.93,-3.29) .. controls (6.95,-1.4) and (3.31,-0.3) .. (0,0) .. controls (3.31,0.3) and (6.95,1.4) .. (10.93,3.29)   ;
\draw    (109.78,150.11) -- (138,150.11) ;
\draw [shift={(140,150.11)}, rotate = 180] [color={rgb, 255:red, 0; green, 0; blue, 0 }  ][line width=0.75]    (10.93,-3.29) .. controls (6.95,-1.4) and (3.31,-0.3) .. (0,0) .. controls (3.31,0.3) and (6.95,1.4) .. (10.93,3.29)   ;
\draw    (149.89,140.22) .. controls (122.28,107.7) and (86.72,109.33) .. (50.97,139.3) ;
\draw [shift={(49.89,140.22)}, rotate = 319.49] [color={rgb, 255:red, 0; green, 0; blue, 0 }  ][line width=0.75]    (10.93,-3.29) .. controls (6.95,-1.4) and (3.31,-0.3) .. (0,0) .. controls (3.31,0.3) and (6.95,1.4) .. (10.93,3.29)   ;
\draw    (149.89,160) .. controls (125.04,172.61) and (114.83,165.74) .. (101.57,160.63) ;
\draw [shift={(99.89,160)}, rotate = 19.99] [color={rgb, 255:red, 0; green, 0; blue, 0 }  ][line width=0.75]    (10.93,-3.29) .. controls (6.95,-1.4) and (3.31,-0.3) .. (0,0) .. controls (3.31,0.3) and (6.95,1.4) .. (10.93,3.29)   ;
\draw    (159.78,150.11) .. controls (169.69,172.62) and (179.4,169.53) .. (192.75,156.55) ;
\draw [shift={(194,155.32)}, rotate = 135] [color={rgb, 255:red, 0; green, 0; blue, 0 }  ][line width=0.75]    (10.93,-3.29) .. controls (6.95,-1.4) and (3.31,-0.3) .. (0,0) .. controls (3.31,0.3) and (6.95,1.4) .. (10.93,3.29)   ;
\draw    (241,150.11) .. controls (206.52,219.31) and (85.7,185.97) .. (51.4,161.13) ;
\draw [shift={(49.89,160)}, rotate = 37.83] [color={rgb, 255:red, 0; green, 0; blue, 0 }  ][line width=0.75]    (10.93,-3.29) .. controls (6.95,-1.4) and (3.31,-0.3) .. (0,0) .. controls (3.31,0.3) and (6.95,1.4) .. (10.93,3.29)   ;
\draw    (250.89,140.22) .. controls (233.27,129.84) and (230.09,118.71) .. (205.15,140.35) ;
\draw [shift={(204,141.37)}, rotate = 318.5] [color={rgb, 255:red, 0; green, 0; blue, 0 }  ][line width=0.75]    (10.93,-3.29) .. controls (6.95,-1.4) and (3.31,-0.3) .. (0,0) .. controls (3.31,0.3) and (6.95,1.4) .. (10.93,3.29)   ;
\draw    (299.89,140.22) .. controls (251.73,96.98) and (178.14,116.6) .. (151.09,139.18) ;
\draw [shift={(149.89,140.22)}, rotate = 318.5] [color={rgb, 255:red, 0; green, 0; blue, 0 }  ][line width=0.75]    (10.93,-3.29) .. controls (6.95,-1.4) and (3.31,-0.3) .. (0,0) .. controls (3.31,0.3) and (6.95,1.4) .. (10.93,3.29)   ;
\draw    (421,150.11) .. controls (350.08,113.87) and (334.46,117.21) .. (301.41,139.2) ;
\draw [shift={(299.89,140.22)}, rotate = 326.12] [color={rgb, 255:red, 0; green, 0; blue, 0 }  ][line width=0.75]    (10.93,-3.29) .. controls (6.95,-1.4) and (3.31,-0.3) .. (0,0) .. controls (3.31,0.3) and (6.95,1.4) .. (10.93,3.29)   ;
\draw    (440.78,150.11) -- (468,150.11) ;
\draw [shift={(470,150.11)}, rotate = 180] [color={rgb, 255:red, 0; green, 0; blue, 0 }  ][line width=0.75]    (10.93,-3.29) .. controls (6.95,-1.4) and (3.31,-0.3) .. (0,0) .. controls (3.31,0.3) and (6.95,1.4) .. (10.93,3.29)   ;
\draw    (489.78,150.11) -- (517,151.23) ;
\draw [shift={(519,151.32)}, rotate = 182.37] [color={rgb, 255:red, 0; green, 0; blue, 0 }  ][line width=0.75]    (10.93,-3.29) .. controls (6.95,-1.4) and (3.31,-0.3) .. (0,0) .. controls (3.31,0.3) and (6.95,1.4) .. (10.93,3.29)   ;
\draw    (479.89,140.22) .. controls (477.04,124.56) and (465.36,117.53) .. (432.41,139.21) ;
\draw [shift={(430.89,140.22)}, rotate = 326.12] [color={rgb, 255:red, 0; green, 0; blue, 0 }  ][line width=0.75]    (10.93,-3.29) .. controls (6.95,-1.4) and (3.31,-0.3) .. (0,0) .. controls (3.31,0.3) and (6.95,1.4) .. (10.93,3.29)   ;
\draw [line width=2.25]    (149.89,110.4) .. controls (282.33,73.5) and (326.06,82.13) .. (460.96,118.94) ;
\draw [shift={(463,119.5)}, rotate = 195.28] [color={rgb, 255:red, 0; green, 0; blue, 0 }  ][line width=2.25]    (17.49,-5.26) .. controls (11.12,-2.23) and (5.29,-0.48) .. (0,0) .. controls (5.29,0.48) and (11.12,2.23) .. (17.49,5.26)   ;
\draw    (309.78,150.11) -- (350,150.35) ;
\draw [shift={(352,150.37)}, rotate = 180.35] [color={rgb, 255:red, 0; green, 0; blue, 0 }  ][line width=0.75]    (10.93,-3.29) .. controls (6.95,-1.4) and (3.31,-0.3) .. (0,0) .. controls (3.31,0.3) and (6.95,1.4) .. (10.93,3.29)   ;

\draw (42,165) node [anchor=north west][inner sep=0.75pt]   [align=left] {$\displaystyle \sigma _{0}$};
\draw (91,164) node [anchor=north west][inner sep=0.75pt]   [align=left] {$\displaystyle \sigma _{1}$};
\draw (244,163) node [anchor=north west][inner sep=0.75pt]   [align=left] {$\displaystyle \sigma _{i}$};
\draw (421,166) node [anchor=north west][inner sep=0.75pt]   [align=left] {$\displaystyle \sigma _{n_{i}}$};
\draw (463,167) node [anchor=north west][inner sep=0.75pt]   [align=left] {$\displaystyle \sigma _{n_{i} +1}$};
\draw (360,150) node [anchor=north west][inner sep=0.75pt]   [align=left] {$\displaystyle \cdots $};
\draw (186,149) node [anchor=north west][inner sep=0.75pt]   [align=left] {$\displaystyle \cdots $};
\draw (140,165) node [anchor=north west][inner sep=0.75pt]   [align=left] {$\displaystyle \sigma _{2}$};
\draw (292,163) node [anchor=north west][inner sep=0.75pt]   [align=left] {$\displaystyle \sigma _{i+1}$};
\draw (521,149) node [anchor=north west][inner sep=0.75pt]   [align=left] {$\displaystyle \cdots $};
\draw (281,59) node [anchor=north west][inner sep=0.75pt]   [align=left] {$\displaystyle \leq \mathbf{1/2^{\mathnormal{i}}}$};

\end{tikzpicture}

\caption{\emph{An element $n_i$ in the diagonal sequence.} The states are ordered from left to right according to the enumeration $\sfEnum$; accordingly, $\sigma_0$, highlighted in green, is the terminal state. The arrows indicate probabilistic/nondeterministic/assignment transitions between the states. The yellow state space contains states indexed $\leq i$, and the red space contains states indexed $\geq n_i$. The probability of a run beginning from inside the yellow state space reaching the red space is $\leq 1/2^i$. Importantly, $n_i$ is the smallest such state index; meaning that if the red region included $\sigma_{n_i - 1}$, the aforementioned probability inequality will not hold.}
\label{fig:diagonal-sequence}
\VRS{Is this figure good? Another idea is to illustrate the diagonal sequence using a 1-D random walk.}
\end{figure}

To construct an unbounded supermartingale, one could consider the sum $\sum V_n$ varied across all $n \in \setOfNaturals$.
However, this sum could be $+\infty$ for certain states.
To combat this, we carefully choose an infinite subset of $\setOfNaturals$ to form the domain for $\sum V_n$.
Consider the sequence $(n_j)_{j \in \setOfNaturals}$ such that $n_j$ is the smallest number so that $R(i, n_j) \leq 2^{-j}$ for all $i \leq j$.
Each element in this sequence is certain to exist due to the monotonically non-increasing nature of $R(i, n)$ for fixed $i$ and the limit result of \cref{lem:scope-usefulness}.
Let's call this sequence $(n_j)_{j \in \setOfNaturals}$ the \emph{diagonal sequence}.
The operation of this diagonal sequence is illustrated in \cref{fig:diagonal-sequence}.
Restricting the domain of $\sum V_n$ to elements in this sequence will mean that no state is assigned $+\infty$ by the sum.
This is because for each $\sigma$, the values of $V_{n_j}(\sigma) = R(\sfEnum(\sigma), n_j)$ will certainly repeatedly halve after $j \geq \sfEnum(\sigma)$.
Further note that the supermartingale nature of the $V_n$ implies that this sum is also a supermartingale.
We thus have our required supermartingale
\begin{equation}
    \label{eq:construction-v}
    V(\sigma) = \sum_{j \in \setOfNaturals} V_{n_j}(\sigma) = \sum_{j \in \setOfNaturals} R(\sfEnum(\sigma), n_j)
\end{equation}

\begin{lemma}[Completeness]
    \label{lem:completeness-ast-martingale}
    \cref{proofrule:ast-martingale} is relatively complete.
\end{lemma}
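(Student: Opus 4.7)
The plan is to take the construction of $V$ already laid out in~\eqref{eq:construction-v} and complete the proof by (i) choosing a suitable variant $U$ and inductive invariant, (ii) verifying that the rule's conditions hold, and (iii) certifying that every ingredient is expressible in the first-order language of arithmetic. I would take $\Inv$ to be $\reach(\cG)$, which is definable by a computable relation and hence arithmetical via \cref{theorem:robinson}. Define $U(\sigma)$ as the supremum over schedulers of the length of the shortest run from $\sigma$ to the terminal state consistent with that scheduler; this is finite at every reachable state of an $\AST$ program by \cref{cor:bound-on-shortest-term-runs}, and the strict decrease required at assignment and nondeterministic states is immediate from a standard one-step inductive characterisation of $U$.

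Next, I would verify that $V$ meets the rule's requirements. Finiteness of $V(\sigma_i)$ at each state follows by splitting the series at index $i$: the tail $\sum_{j \geq i} R(i, n_j)$ is dominated by $\sum_{j \geq i} 2^{-j}$ thanks to the diagonal property, while the prefix $\sum_{j < i} R(i, n_j)$ is bounded by $i$ since each summand lies in $[0,1]$. The supermartingale property is inherited from each $V_{n_j}(\sigma) = R(\sfEnum(\sigma), n_j)$, which is itself a supermartingale because of the supremum-over-schedulers structure of $R$. The crux is the compatibility conditions (b)(i) and (b)(ii). Here I would exploit the observation that $R(i, n_j) = 1$ whenever $n_j \leq i$, so that $V(\sigma_i) \geq J(i)$, where $J(i)$ counts those $j$ with $n_j \leq i$; since $n_j \to \infty$, $J(i) \to \infty$, and hence every sublevel set $V_{\leq r}$ is contained in a finite prefix $\{\sigma_0, \ldots, \sigma_{M_r}\}$ of the enumeration. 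Finiteness of $V_{\leq r}$ immediately gives boundedness of $U$ on it, and the minimum over the finitely many probabilistic states in $V_{\leq r}$ of the probability mass routed toward a smaller-$U$ successor (positive because from each non-terminal state at least one successor has strictly smaller $U$) supplies the required $\epsilon_r$.

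The last step is to certify relative completeness by arithmetising every ingredient. Finite unrollings of $\cG$ are computable and reachability probabilities within each unrolling are rational and computable; \cref{lem:required-simulation-time} then shows that the Dedekind cut of each $R(i, n)$ is arithmetical, in the style of the preceding lemma on $\termProb(\cG)$. From arithmetical access to $R$, the diagonal sequence $(n_j)$ is arithmetical, and $V$ can be expressed through its partial sums together with a tail bound coming from the diagonal property. I expect the main obstacle to be this final arithmetisation step: extracting a single arithmetical formula for $V$ from an infinite series whose entries themselves involve suprema over uncountably many schedulers requires careful use of \cref{theorem:robinson} and the unrolling lemma to reduce everything to computable relations over $\setOfNaturals$. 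A boundary case deserving separate treatment is the terminal state, where the naive sum gives $V(\sigma_0) > 0$ because of the $j = 0$ term; this can be handled either by starting the indexing at $j = 1$ or by disposing of finite-state $\AST$ programs as a degenerate case using the trivial bounded supermartingale $V = \mathbbm{1}_{\text{non-terminal}}$ with \cref{proofrule:ast-variant}.
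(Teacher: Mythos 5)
Your proposal follows essentially the same route as the paper's proof: the same invariant $\reach(\cG)$, the same variant $U$ obtained from \cref{cor:bound-on-shortest-term-runs}, the same supermartingale $V$ from \cref{eq:construction-v} with compatibility established via finiteness of the sublevel sets, and the same Dedekind-cut arithmetisation of $R$ and $V$ through the unrolling lemma and \cref{theorem:robinson}. Your added details---the tail/prefix split showing the series defining $V$ converges, the lower bound $V(\sigma_i)\geq J(i)$ justifying finiteness of the sublevel sets, and the treatment of the terminal state (where the $j=0$ term would otherwise force $V(\sigma_0)>0$)---are correct refinements of points the paper leaves implicit.
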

\begin{proof}
    Take an $\AST$ $\CFG$ $\cG$, and set the inductive invariant $\Inv$
    to $\reach(\cG)$. 
    We first describe our choice for the variant function $U$.
    Since $\cG$ is $\AST$, for every $\sigma \in \reach(\cG)$, every scheduler must induce a finite path to a terminal state.
    \cref{cor:bound-on-shortest-term-runs} implies an upper bound on the length of the shortest terminal run from every $\sigma \in \Inv$.
    Set $U$ to map each $\sigma \in \Inv$ to this upper bound.

    If $U$ is bounded, setting $V(\sigma) = 1$ for all $\sigma \in \Inv$ suffices.
    Otherwise, set $V$ to the supermartingale function defined in \cref{eq:construction-v}.
    It is easy to observe that for every $r$, the sublevel set $V_{\leq r} = \{ \sigma \mid V(\sigma) < r\}$ is finite.
    This implies that $U$ is bounded within every sublevel set, and is hence compatible with this $V$.
    This completes the construction of the certificates required by the proof rule.

    We now argue that the invariant $\Inv$, the supermartingale $V$ and variant $U$ can each be represented in our assertion language of arithmetic interpreted over the rationals.
    We do this by encoding them first encoding them in the theory of natural numbers, and then using the relation $\Nat$ from \cref{theorem:robinson} to insert them into our assertion language.
    Recall that all computable relations can be encoded in $\ThQ$.
    We present techniques with which one can augment computable relations with first-order quantifiers to represent these entities.
    By doing so, we demonstrate that these sets are \emph{arithmetical}; see the works of \citet{Kozen06} and \citet{Rogers} for detailed accounts on arithmetical sets.
    
    For representing $\Inv$ in $\ThQ$, consider the relation $I$ that contains tuples of the form $(k, \sigma_1, \sigma_2)$ where $k \in \setOfNaturals$ and $\sigma_1$ and $\sigma_2$ are states of $\cG$.
    Require $(k, \sigma_1, \sigma_2) \in I$ \emph{iff} there is a finite path of length $\leq k$ from $\sigma_1$ to $\sigma_2$.
    Clearly, $I$ is a computable relation and is thus representable in $\ThQ$.
    $\Inv(\sigma)$ can be represented from $I$ as $\exists k \cdot I(k, \sigma_0, \sigma)$ where $\sigma_0$ is the initial state of $\cG$.
    Similarly, the output of $U(\sigma)$ at every $\sigma$ can be represented using $I$ as $U(\sigma) = k \allowbreak \Longleftrightarrow \allowbreak I(\sigma, \sigma_\bot, k) \allowbreak \land \left(\forall n < k \cdot \lnot I\left(\sigma, \sigma_\bot, n\right)\right)$, where $\sigma_\bot$ is the terminal state.
    If $U$ were bounded, representing $V$ is trivial; we focus our attention on representing $V$ when $U$ isn't bounded.

    Representations of $R$ (defined in \cref{eq:scope-defn} and used to derive $V$) and $V$ are complicated slightly because they can output real numbers.
    Instead of capturing the precise values of these functions, we capture their Dedekind cuts instead.
    In other words, we show that the collections of rational numbers $\leq V(\sigma)$, $\geq V(\sigma)$, $\leq \varphi_i(n)$ and $\geq \varphi_i(n)$ are each representable for each $\sigma$, $i$, and $n$.

    The unrolling lemma implies that if the probability of termination is $p$, then for all $n \in \setOfNaturals$, assimilating a termination probability mass of at least $p - 1/n$ requires finitely many steps.
    It is simple to generalize this to observe that assimilating a probability mass of at least $R(i, n) - 1/n$ for the event $\Diamond\left( \left\{\sigma_m \in \Inv \mid m \geq n\right\}\right)$ when $\sigma_i$ is the initial state also requires finitely many steps.
    Furthermore, computing the probability of the occurrence of this event within $k$ steps is computable for every natural number $k$.
    These two facts indicate that lower bounds on $R(i, n)$ can be represented in $\ThQ$.
    Upper bounds on $R(i, n)$ can be represented by simply negating this lower bound representation.
    
    Using these, enable the representation of each member of the sequence $(n_j)_{j \in \setOfNaturals}$ that forms the domain of the sum that defines $V$.
    This enables representations of lower bounds on $V(\sigma)$, which in turn enables representations of upper bounds on $V(\sigma)$.
    Thus, the Dedekind cut of $V(\sigma)$ is representable in $\ThQ$.
    This completes the proof.
\end{proof}

\begin{example}[Random Walks]
    \label{ex:2drw-supermartingale}
For the 1DRW example from \cref{ex:srw}, we take $\Inv$ to be all program states
and we set $V(x) = U(x) \coloneqq |x|$.
It's trivial to observe that all conditions required in \cref{proofrule:ast-martingale} are met, and therefore, the 1DRW is $\AST$.

Let us now consider the 2-D Random Walker (2DRW).
The $\AST$ nature of this program has been notoriously hard to prove using prior rules.
We once again set $\Inv$ to the set of all states.
Our choice of distance variant $U$ assigns to each state $(x, y)$ in the plane to its Manhattan distance $|x| + |y|$ from the origin and terminal state $(0, 0)$.
Clearly, with probability at least $1/4$, $U$ reduces in each step across all states.
Now, define $V$ as
$$
V(x, y) \triangleq \sqrt{\ln \left(1 + ||x,y|| \right)}
$$
Where $||x,y||$ stands for the Euclidean distance $\sqrt{x^2 + y^2}$ of the point $(x, y)$ from the origin $(0, 0)$.
\citet{MPWBook}  (and \citet[Section 2.3]{popov2DSRWBook}) proved that this function $V$ is a supermartingale for ``sufficiently large'' values of the norm $||x, y||$.
We now show that $V$ is a supermartingale at all states of the 2DRW.
Without any loss of generality, fix one of the program variables $y$.
Notice that $V$ is entirely a differentiable function.
It is also possible to show that the partial differential $\frac{\delta V}{\delta x}$ of $V$ with respect to $x$ is $< 1/2$ for all values of $|x| > 1$ at any fixed value of $y$.
To show this, first show that the partial double differential $\frac{\delta^2 V}{\delta x^2}$ is negative when $x > 0$ and positive otherwise. 
This means that $\delta V / \delta x$ always decreases as $|x|$ increases.
Furthermore, at $|x| = 1$, this differential is already $< 1/2$.

These two facts lets us conclude that $V(x, y) > \frac{1}{2} (V(x - 1, y) + V(x + 1, y))$.
Since $y$ was arbitrarily chosen to be fixed, we can swap $x$ and $y$ in the earlier paragraph to yield $V(x, y) > \frac{1}{2} (V(x, y-1) + V(x,y+1))$.
Putting these two inequalities together proves the supermartingale nature of $V$.
Finally, since the sublevel set $V_{\leq r}$ is finite, $V$ is compatible with $U$.
Therefore, the 2DRW is $\AST$.
\qed
\end{example}

\subsection{The Supermartingale-Variant rule}
\label{subsec:mciver-katoen-rule}
We now discuss a proof rule of \citet{McIverMKK18} that extends McIver and Morgan's variant \cref{proofrule:ast-variant}.
As the name suggests, the rule asks for a distance variant that is also a supermartingale.

\begin{proofrule}[$\AST$ rule of \citet{McIverMKK18}]
    \label{proofrule:ast-mciver-katoen}
    To prove that $\cG$ is $\AST$, find
    \begin{enumerate}
    \item an inductive invariant $\Inv$ containing the initial state,
    \item two positive functions $p : \setOfPositiveReals \to (0, 1]$ and $d : \setOfPositiveReals \to \setOfPositiveReals$ such that $p$ and $d$ are \emph{antitone}, i.e., for all positive reals $r_1$ and $r_2$ with $r_1 \leq r_2$, $p(r_1) \geq p(r_2)$ and $d(r_1) \geq d(r_2)$, and
    \item a supermartingale variant function $V : \Inv \to \setOfReals$ that assigns $0$ to the terminal state and at all other states $(l, \bfx) \in \Inv$,
        \begin{enumerate}
            \item $V(l, \bfx) > 0$,
            \item $V(l, \bfx) > V(l_A, \bfx_A)$ for each $(l_A, \bfx_A) \in A$,
            \item if $(l, \bfx)$ is an assignment or nondeterministic state, then $V(l, \bfx) - V(l', \bfx') \geq d(V(l, \bfx))$ for all possible successor states $(l', \bfx')$,
            \item if $(l, \bfx)$ is a probabilistic state, then, over all successor states $(l', \bfx')$, $V(\sigma) \geq \sum \Pr(l, l') \allowbreak V(l', \bfx')$, and
            \item for all probabilistic states $(l, \bfx) \in \Inv$, the sum $\sum \sfPr(l, l')[\bfx] \geq p(V(l, \bfx))$ over all successor states $(l', \bfx')$ with $V(l, \bfx) - V(l', \bfx') \geq d(V(l, \bfx))$.
        \end{enumerate}
    \end{enumerate}
\end{proofrule}
This rule asks for a supermartingale function $V$ that is constrained by two antitone functions $p$ and $d$.
\citet{McIverMKK18} call $p$ and $d$ \emph{progress functions}, and the constraints on $V$ imposed by them \emph{progress conditions}.
In one step from each state $\sigma$ with $V(\sigma) = v$, the execution must reach a state $\sigma'$ with $V(\sigma') \leq v - d(v)$ with probability at least $p(v)$.
Furthermore, as $v$ reduces, $p$ and $d$ do not increase.
This means that this property of probably reducing $V$ in a single step quantitatively persists at $\sigma'$: in one step from $\sigma'$, the execution must reach a $V$ value of at most $V(\sigma') - d(v)$ with probability at least $p(v)$.
This is similar to the probable decrease requirement in the Variant \cref{proofrule:ast-variant}, with the supermartingale $V$ being the variant.
We thus call $V$ a \emph{supermartingale variant}.

We note that in their presentation, \citet{McIverMKK18} do not explicitly demand a supermartingale. 
They instead require that, for all real numbers $H$, the function $H \ominus V$ that maps states $\sigma \mapsto \max(H - V(\sigma), 0)$ is a \emph{submartingale}.
It is easy to see that this is equivalent to the supermartingale condition on $V$ in our presentation.

This rule is quite similar to our \cref{proofrule:ast-martingale}.
By combining the supermartingale and variant into one function, this rule neatly does away with the compatibility criteria required by our rule.
However, this means that their (supermartingale-)variants no longer map states to naturals.
This is why they require the antitone distance function $d$.
We show later that this $d$ may need to arbitrarily small for some states of certain programs, leading to more cryptic certificates.

Soundness of this rule follows from two facts.
First, if the execution were to be restricted to the sublevel sets $V^{\leq r} \triangleq \{ \sigma \mid V(\sigma) \leq r \}$, the variant nature of $V$ induced by the non-zero valuations of $p(r)$ and $d(r)$ imply almost-sure termination.
Second, because $V$ is a supermartingale, the probability with which an execution increases the value of $V$ to some high value $H$ is inversely proportional to $H$.
In fact, this probability goes to zero as $H \to \infty$.
Hence, the probability of increasing $V$ to infinity is zero, forcing executions to almost-surely remain within some sublevel set, within which the execution almost-surely terminates.

A subtle point in this rule is that the progress functions $p$ and $d$ must remain antitone and positive at reals outside the range of $V$.
\citet{McIverMKK18} require this to maintain soundness when $V$ is bounded, as in this case soundness arises from the variant nature of $V$, which demands non-zero valuations of $p$ and $d$ at the supremum of $V$.
Thus, replacing this with a requirement of a non-zero infimum on $p$ and $d$ across the range of $V$ also maintains soundness.

The completeness of this rule was open for quite some time.
In fact, it was conjectured that this rule was incomplete \cite{katoenETAPSTalk}.
However, we show that our techniques can be extended to prove the completeness of their rule as well.
\begin{theorem}
    \cref{proofrule:ast-mciver-katoen} is sound and relatively complete.
\end{theorem}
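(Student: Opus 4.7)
The plan is to establish completeness by extending the construction from Lemma 3.8 (the completeness of the paper's own Proof Rule 3.2). Suppose $\cG$ is $\AST$. Lemma 3.8 already produces, within arithmetic, an inductive invariant $\Inv$, a supermartingale $V_0$ obtained as $V_0(\sigma) = \sum_j R(\sfEnum(\sigma), n_j)$ via the diagonal sequence, and the natural-valued variant $U_0$ equal to the length of the shortest terminal run. Critically, on every sublevel set $V_0^{\leq r}$ the variant $U_0$ is bounded by some $B_r$, and probabilistic states within $V_0^{\leq r}$ decrease $U_0$ with probability at least $\epsilon_r > 0$. The task is to fuse these two certificates into a single supermartingale variant $V$ together with antitone progress functions $p, d$.

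The central obstacle is that $V_0$ and $U_0$ pull in opposite directions. A supermartingale is only required to decrease in expectation and may stall pointwise, whereas the McIver--Katoen rule demands pointwise probable decrement by an antitone amount depending only on $V(\sigma)$. My plan is therefore to build $V$ as a carefully scaled perturbation
\[
V(\sigma) \;=\; V_0(\sigma) \;+\; \delta_{V_0(\sigma)} \cdot g(U_0(\sigma))
\]
where $g : \setOfNaturals \to [0,1)$ is a bounded strictly increasing function (e.g.\ $g(u) = 1 - 2^{-u}$) and $\delta_r$ is a rapidly decreasing positive sequence. The boundedness of $g$ and the smallness of $\delta_r$ preserve the supermartingale property of $V_0$ globally, while the strict monotonicity of $g$ in $U_0$ ensures that every probable decrease of $U_0$ translates to a strictly positive decrease of $V$. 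Concretely, at an assignment or nondeterministic state with $V_0(\sigma) \leq r$, the variant $U_0$ strictly decreases, yielding a decrement in $V$ of at least $\delta_r \cdot 2^{-B_r}(2-1)$; at a probabilistic state this decrement occurs with probability at least $\epsilon_r$. One then sets $p(v) = \epsilon_{\lceil v \rceil}$ and $d(v) = \delta_{\lceil v \rceil} \cdot 2^{-B_{\lceil v \rceil}}$, both antitone by construction.

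The hard step will be verifying that $V$ truly remains a supermartingale across transitions that cross levels of $V_0$, since $\delta_{V_0(\sigma)}$ depends on $V_0(\sigma)$ and can shift when the state changes. My plan to control this is to force $\delta_r$ to decrease so rapidly (e.g.\ $\delta_r = 2^{-r}\cdot (B_r+1)^{-1}$) that any jump in the perturbation term $\delta_{V_0(\sigma)} g(U_0(\sigma))$ along a one-step transition is dominated by the slack in the supermartingale inequality for $V_0$ together with the fact that $|g| < 1$. Should a uniform bound fail here, a fallback is to use, in place of $V_0$, a smoothed version built by interpolating $R(\sfEnum(\sigma), n_j)$ across indices so that the level structure changes continuously.

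Finally, representability in arithmetic follows from the techniques of Lemma 3.8: the quantities $V_0, U_0, B_r, \epsilon_r$ are all captured (as Dedekind cuts where needed) through computable relations and $\Nat$, and the $\delta_r$ can be chosen as an explicit rational-valued recursive sequence. Soundness of Proof Rule 3.4 is already discussed in the paper and does not need to be reproved. The overall takeaway is that the same diagonal-sequence machinery powering Lemma 3.8 suffices for the McIver--Katoen rule, the only addition being the scaling recipe above that combines the two certificates into one.
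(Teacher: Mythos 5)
Your high-level strategy---reuse the diagonal-sequence supermartingale $V_0$ and the shortest-path variant $U_0$ from the completeness proof of \cref{proofrule:ast-martingale} and fuse them into one certificate---is the right starting point, and it is where the paper's proof also begins. But the fusion $V = V_0 + \delta_{V_0}\cdot g(U_0)$ has a genuine gap: the supermartingale inequality for $V_0$ at probabilistic states generically holds with \emph{equality}, so there is no slack to absorb the perturbation. Each $R(i,n)$ is a supremum over schedulers of a reachability probability, and at a probabilistic state (outside the target set) this supremum splits exactly into the probability-weighted sum of the suprema at the successors; hence the diagonal sum is an exact martingale there. Meanwhile the perturbation $g(U_0)$ can strictly \emph{increase} in expectation at such a state, because $U_0$ is only a variant, not a supermartingale: it decreases with probability $\epsilon_r$ but may jump to a huge value on the complementary branch, and since $g$ saturates near $1$ the expected value of $g(U_0)$ then exceeds $g(U_0(\sigma))$. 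Zero slack plus a positive expected increase breaks the supermartingale property of $V$. A second, independent problem is the progress condition: the successors at which $U_0$ decreases (total probability $\geq \epsilon_r$) need not be successors at which $V$ decreases, since $V_0$ may increase substantially along the shortest path to termination; so $p(v) = \epsilon_{\lceil v\rceil}$ does not witness the required probable decrease of $V$ by $d(v)$. Finally, the level-crossing issue you flag is fatal as stated: at a step that barely crosses a level boundary, $V_0$ drops by an arbitrarily small amount while $\delta_{V_0(\cdot)}$ jumps by a fixed amount, so no choice of step sequence $\delta_r$ is dominated by that vanishing slack.

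The paper repairs exactly these defects before perturbing. It first passes to $\ln(1+V)$; the weighted AM--GM inequality makes this a \emph{strict} supermartingale at every probabilistic state whose successors receive distinct $V$-values, manufacturing the slack your construction presupposes. It then perturbs one state per repeated level at a time---always a state where strictness holds, which must exist in every level set because the program is $\AST$---subtracting an $\epsilon$ small enough to stay within that slack, and takes the limit to obtain an \emph{injective} supermartingale with finite sublevel sets. Injectivity, not the variant $U_0$, is what delivers the progress conditions: at any non-terminal state the supermartingale inequality together with injectivity forces some successor with strictly smaller value, and finiteness of the sublevel set gives uniform antitone bounds $d(r)$ (the minimum gap between consecutive attained values) and $p(r)$ (the minimum transition probability among those finitely many states). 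To salvage your approach you would need to (i) first apply a strictly concave rescaling to create slack and (ii) replace the $U_0$-based witness for probable decrease with one derived from the supermartingale inequality itself---at which point you have essentially reconstructed the paper's argument.
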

Soundness follows from the intuitions provided earlier.
Completeness results from the following argument.
Take the supermartingale $V$ constructed in \cref{eq:construction-v} in the completeness argument of \cref{proofrule:ast-martingale}.
If this $V$ were one-one, it would immediately satisfy the conditions required by \cref{proofrule:ast-mciver-katoen}.
We thus detail a procedure to transform $V$ into a one-one supermartingale.
We start by injecting some ``gaps'' into $V$, by making it strictly decrease in expectation at certain states.
We do so by taking the natural logarithm of the output of $V$; the weighted AM-GM inequality gives us the gaps.
We then mildly perturb the values $V$ now assigns to certain states in a way that doesn't affect its supermartingale nature, but does make $V$ one-one.
This is generally not possible; the gaps are essential for this perturbation to work.
We present the full argument for completeness in \cref{sec:completeness-mciver-katoen-rule}.

\citet{McIverMKK18} demonstrated the applicability of \cref{proofrule:ast-mciver-katoen} for a variety of programs.
For example, its easy to see that the supermartingale discussed in \cref{ex:2drw-supermartingale} acts sufficiently as a supermartingale variant.
It is possible to show that the progress functions defined below match the requirements of the rule.
$$
p(v) = \frac{1}{4} \quad\quad\quad\quad d(v) = \frac{1}{2v} \left( v^2 - \ln\left(e^{v^2} - 1\right) \right)
$$
For certain programs however, the certificates that match this rule are necessarily contrived.

\begin{figure}[t]
    \small
    \begin{subfigure}{0.59\textwidth}
    \begin{lstlisting}[language=python, mathescape=true, escapechar=|, xleftmargin=15pt]
x, y $\coloneqq$ 2, 1
while (x > 0):
  if (y = 1): y $\coloneqq$ 0 |$\pChoice{1/2}$| x $\coloneqq$ 2^x |\label{line:tel-loop-1}|
  else: x $\coloneqq$ x - 1 |\label{line:tel-loop-2}|
    \end{lstlisting}
    \caption{Code view}
    \end{subfigure}
    \begin{subfigure}{0.4\textwidth}
    \tikzfig{tail-end-loop}
    \caption{Transition system view}
    \end{subfigure}
    \caption{The long tail end, an $\AST$ program for which synthesizing a supermartingale variant is difficult. We refer to the second loop at \listingLineRef{line:tel-loop-2}, which corresponds to the right-half of the transition system, as the tail end loop of the program. The final/terminal state is shown in \textcolor{red}{red}.}
    \label{code:mciver-morgan-ctrex}
\end{figure}
\begin{example}[The long tail end]
    \label{ex:tail-end-loop}
    This program, described in \cref{code:mciver-morgan-ctrex}, initializes its variable $\ttx$ to $2$ and has two phases.
    In the first phase, the program either updates the variable $\ttx$ to $2^\ttx$ or moves onto the second phase with equal probability.
    Note that, while we do not explicitly include exponentiation in our program model, it is simple enough to consider this a part of the syntax available to the programmer to build terms.
    In the second phase, which we call the tail loop, the program deterministically decrements $\ttx$ until it hits $0$, at which point it terminates.
    It's quite easy to see that this program is $\AST$.
    
    It isn't difficult to come up with separate distance variants $U$ and supermartingales $V$ to apply our \cref{proofrule:ast-martingale} on this program.
    The distance variant simply maps each program state at line \listingLineRef{line:tel-loop-1} to $\ttx + 1$ and states at \listingLineRef{line:tel-loop-2} to $\ttx$.
    Since the distance variant is unbounded, the supermartingale $V$ must be unbounded too.
    Start by setting $V$ to $2$ at the beginning of the execution.
    Each time the probabilistic operator at \listingLineRef{line:tel-loop-1} resolves to increasing $\ttx$, add $1$ to $V$.
    Subtract $1$ from $V$ otherwise.
    Make no change to $V$ during the execution of \listingLineRef{line:tel-loop-2}.
    As $V$ increases with $\ttx$, the variant $U$ is bounded when restricted to the sublevel sets where $V \leq r$.
    
    However, producing a supermartingale-variant $V$ in accordance to \cref{proofrule:ast-mciver-katoen} for this program is more difficult.
    This is because the probability distribution at \listingLineRef{line:tel-loop-1} means that $V$ can at-most double in each round
    However, the distance to termination, measured by the variable $\ttx$, grows exponentially.
    Thus, a very slow decrease of $V$ must be instituted during the execution of the tail loop at \listingLineRef{line:tel-loop-2}.
\end{example}

\section{Quantitative Termination}
\label{sec:quant}

We now present proof rules to reason about \emph{quantitative} notions of termination.
These rules are meant to prove lower and upper bounds on the probability of termination.
As mentioned earlier, the proof rules we specify in this section can be used to show irrational bounds on the termination probability;
this is a simple consequence of the fact that all possible termination probabilities can be expressed in our assertion language.

Our rules make essential use of the stochastic invariants of \citet{ChatterjeeNZ17}.
\begin{definition}[Stochastic Invariants \cite{ChatterjeeNZ17}]
    \label{def:stochastic-invariants}
    Let $\cG=(L, V, \allowbreak l_{init}, \bfx_{init}, \ordMapsTo, G, \sfPr, \sfUp)$ be a $\iCFG$.
    Suppose $\SIpred$ is a subset of states and let $p$ be a probability value. 
    The tuple $(\SIpred, p)$ is a \emph{stochastic invariant} (SI) if, under any scheduler $\sched$, the probability mass of the collection of runs beginning from $(l_{\init}, \bfx_{\init})$ leaving $\SIpred$ is bounded above by $p$, i.e., 
    $$
    \sup{}_\sched \Prob{}_\sched\left[ \rho \in \runs_\cG \mid \exists n \in \setOfNaturals \cdot \rho[n] \not\in \SIpred  \right] \leq p 
    $$
\end{definition}

Intuitively, stochastic invariants generalize the standard notion of invariants to the probabilistic setting.
Given a stochastic invariant $(\SIpred, p)$, the program execution is expected to hold $\SIpred$ (i.e., remain inside $\SIpred$) with probability $\geq 1-p$.
As with invariants, the collection of states in stochastic invariants is typically captured by a predicate written in the assertion language of the program logic.
In this work however, we do not characterize stochastic invariants directly. We instead use stochastic invariant indicators.
\begin{definition}[Stochastic Invariant Indicator \cite{ChatterjeeGMZ22}]
    Let $\cG$ be the $\CFG$ $(L, V, l_{init}, \allowbreak \bfx_{init}, \ordMapsTo, G, \sfPr, \sfUp)$. 
    A tuple $(\SI, p)$ is a \emph{stochastic invariant indicator} (SI-indicator) if $p$ is a probability value and 
$\SI : L \times \setOfIntegers^{V} \to \setOfReals$ is a partial function such that $\SI(l_{init}, \bfx_{init}) \leq p$, and 
for all states $(l, \bfx)$ reachable from $(l_{\init}, \bfx_{\init})$, 
    \begin{enumerate}
        \item $\SI(l, \bfx) \geq 0$.
        \item if $(l, \bfx)$ is an assignment or nondeterministic state, then $\SI(l, \bfx) \geq \SI(l', \bfx')$ for every successor $(l', \bfx')$.
        \item if $(l, \bfx)$ is a probabilistic state, then $\SI(l, \bfx) \geq \sum \sfPr((l, l'))[\bfx] \times \SI(l', \bfx')$ over all possible successor states $(l', \bfx')$.
    \end{enumerate}
\end{definition}
Observe that functions $\SI$ in the SI-indicators are supermartingale functions.
These $\SI$ are typically most interesting at states $\sigma$ where $\SI(\sigma) < 1$; in fact, the collection of states $\sigma$ with this property corresponds to an underlying stochastic invariant with the same probability value as the SI-indicator.
This was formally proven by \citet{ChatterjeeGMZ22}.
\begin{lemma}[\cite{ChatterjeeGMZ22}]
    \label{lem:si-sii-equiv}
    Let $\cG$ be an $\iCFG$.
    For each stochastic invariant $(\SIpred, p)$ of $\cG$, there exists a stochastic invariant indicator $(\SI, p)$ of $\cG$ such that $\SIpred \supseteq \{ \gamma \in \Sigma_\cG \mid \SI(\gamma) < 1 \}$.
    Furthermore, for each stochastic invariant indicator $(\SI, p)$ of $\cG$, there is a stochastic invariant $(\SIpred, p)$ such that $\SIpred = \{ \gamma \in \Sigma_\cG \mid \SI(\gamma) < 1 \}$.
\end{lemma}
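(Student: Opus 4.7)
The plan is to handle the two directions separately, since they use quite different technology.

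\emph{From stochastic invariants to indicators.} Given a stochastic invariant $(\SIpred, p)$, I would define the candidate indicator by
\[
\SI(\sigma) \;=\; \begin{cases} 1 & \text{if } \sigma \notin \SIpred, \\ \sup_\sched \Prob_\sched^{\sigma}[\rho \in \runs_{\cG(\sigma)} \mid \exists n.\ \rho[n] \notin \SIpred] & \text{if } \sigma \in \SIpred. \end{cases}
\]
The condition $\SI(l_{\init}, \bfx_{\init}) \leq p$ is immediate from the definition of stochastic invariant, and $\SI \geq 0$ is trivial. The supermartingale conditions require a short case analysis. If $\sigma \notin \SIpred$ then $\SI(\sigma) = 1$ dominates every value of $\SI$ at successors. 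If $\sigma \in \SIpred$ and $\sigma'$ is a successor, then a scheduler that first moves to $\sigma'$ and then follows an arbitrary scheduler from $\sigma'$ witnesses that $\SI(\sigma) \geq \SI(\sigma')$ (at assignment and nondeterministic locations) or, by averaging over the probabilistic branch, that $\SI(\sigma) \geq \sum \sfPr(l,l')[\bfx]\,\SI(l',\bfx')$ (at probabilistic locations). Finally, any $\gamma \notin \SIpred$ has $\SI(\gamma) = 1$, hence $\{\gamma \mid \SI(\gamma) < 1\} \subseteq \SIpred$, giving the required inclusion.

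\emph{From indicators to stochastic invariants.} Fix an SI-indicator $(\SI, p)$ and set $\SIpred = \{\gamma \mid \SI(\gamma) < 1\}$. For any scheduler $\sched$, the stochastic process $Y_n^\sched = \SI(X_n^\sched)$ is a nonnegative supermartingale with respect to the canonical filtration because $\SI$ itself is a supermartingale function. I would then invoke Doob's maximal inequality for nonnegative supermartingales,
\[
\Prob_\sched\!\left[\sup_{n \in \setOfNaturals} Y_n^\sched \geq 1\right] \leq \bbE[Y_0^\sched] = \SI(l_{\init}, \bfx_{\init}) \leq p.
\]
The event that the execution leaves $\SIpred$ is precisely the event that $Y_n^\sched \geq 1$ for some $n$, so taking the supremum over $\sched$ gives the desired bound.

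\emph{Main obstacle.} The only substantive subtlety lies in the forward direction, specifically in verifying the supermartingale inequality at nondeterministic states: one must argue that the supremum over all schedulers starting from $\sigma$ dominates the supremum over schedulers starting from each successor $\sigma'$, which requires that any scheduler from $\sigma'$ can be lifted to a scheduler from $\sigma$ by prepending the deterministic/chosen transition $\sigma \to \sigma'$. This lifting is routine in our model because schedulers are arbitrary functions on histories and no measurability condition is imposed, but it is the place where one needs to be careful. The backward direction is essentially a one-line application of Doob and carries no hidden difficulty.
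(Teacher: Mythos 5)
Your proposal is correct. The paper does not prove this lemma itself --- it is imported verbatim from \citet{ChatterjeeGMZ22} --- but the surrounding text describes exactly your forward construction (``the SI-indicator $\SI$ corresponding to the stochastic invariant $\SIpred$ maps each state $\sigma$ [to] the probability with which runs beginning from $\sigma$ exit $\SIpred$''), and your backward direction via Ville's/Doob's maximal inequality for nonnegative supermartingales is the standard argument used in the cited work; the scheduler-lifting step you flag is indeed the only point requiring care, and it goes through because schedulers are unrestricted functions on histories and the branching is finite.
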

The SI-indicator $\SI$ corresponding to the stochastic invariant $\SIpred$ maps each state $\sigma$ the probability with which runs beginning from $\sigma$ exit $\SIpred$.
Thus, the SI-indicator tracks the probability of violating the stochastic invariant.
Observe that $\SI(\sigma) \geq 1$ for all states $\sigma \not\in \SIpred$.

We will use SI-indicators in our proof rules.
Unlike stochastic invariants, representing SI-indicators as assertions is more complicated.
Because SI-indicators map states to reals, we cannot always write them directly in our language of assertions.
When our proof rules apply, however, we will demonstrate how one can find expressions in our assertion language to represent the SI-indicators our proof rule needs.

Our upper bound rule is immediate from the nature of stochastic invariants.
\begin{proofrule}[Upper Bounds Rule]
    \label{proofrule:SI-upper-bounds}
    To show that $\termProb(\cG) \leq p$ for $0 < p < 1$, find
    \begin{enumerate}
        \item an inductive invariant $\Inv$ containing $(l_{\init}, \bfx_{\init})$, 
        \item a function $\SI : \Inv \to \setOfReals$ such that $\SI(l_{init}, \bfx_{init}) \leq p$ and for all $(l, \bfx) \in \Inv$,
        \begin{enumerate}
            \item $\SI(l, \bfx) \geq 0$;
            \item if $(l, \bfx)$ is an assignment, or nondeterministic state, then $\SI(l, \bfx) \geq \SI(l', \bfx')$ for every successor $(l', \bfx')$;
            \item if $(l, \bfx)$ is a probabilistic state, then $\SI(l, \bfx) \geq \allowbreak \sum \sfPr((l, \allowbreak l'))[\bfx] \times \SI(l', \bfx')$ over all possible successor states $(l', \bfx')$.
            \item if $(l, \bfx)$ is a terminal state, then $\SI(l, \bfx) \geq 1$.
        \end{enumerate}
    \end{enumerate}
\end{proofrule}
This rule asks for an SI-indicator (and therefore a stochastic invariant) that excludes the terminal state.
Therefore, the probability of termination is the probability of escaping the SI-indicator, which is included in the property of the indicator.
Notice that, because the SI-indicator is a supermartingale function, the mere existence of a bounded supermartingale that assigns to the terminal state a value $\geq 1$ is sufficient to extract an upper bound.
  
\begin{lemma}
    \label{lem:SI-upper-bounds}
    \cref{proofrule:SI-upper-bounds} is sound and relatively complete. 
\end{lemma}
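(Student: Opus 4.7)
The plan is to prove soundness via a direct martingale-convergence argument applied under an arbitrary scheduler, and to prove relative completeness by taking $\SI$ to be the supremum termination probability (the value function of the MDP) and then reusing the arithmetical-encoding machinery developed for the completeness proof of \cref{proofrule:ast-martingale}.

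For soundness, I would fix an arbitrary scheduler $\sched$ and let $(\sigma_n)_{n\in\nat}$ be the induced stochastic process tracking the program state. Setting $Y_n \triangleq \SI(\sigma_n)$, conditions (2a)--(2c) of the rule directly give that $(Y_n)$ is a non-negative supermartingale under the canonical filtration: the assignment/nondeterministic conditions enforce the pointwise one-step decrease, while the probabilistic condition is exactly the required conditional-expectation inequality. Doob's theorem (as stated in the excerpt) then produces an almost-sure limit $Y_\infty$ with $\bbE[Y_\infty]\leq Y_0 = \SI(l_{init},\bfx_{init}) \leq p$. Treating the terminal state as absorbing (without loss of generality, adding a self-loop if needed), condition (2d) forces $Y_n \geq 1$ from the moment the run reaches $(l_{out},\bfzero)$; hence $Y_\infty \geq \mathbbm{1}_{\Diamond(l_{out},\bfzero)}$ almost surely, and taking expectations gives $\Prob_\sched[\Diamond(l_{out},\bfzero)] \leq p$ for every $\sched$, i.e.\ $\termProb(\cG) \leq p$.

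For relative completeness, assuming that every scheduler terminates with probability at most $p$, I would set $\Inv = \reach(\cG)$ and define
\[
\SI(\sigma) \;\triangleq\; \sup_{\sched}\; \Prob^{\sigma}_\sched\bigl[\Diamond(l_{out},\bfzero)\bigr].
\]
The Bellman equations for the value function of a countable-state, finitely-branching MDP with demonic nondeterminism verify all required properties of $\SI$, in fact with equality at each of the three state types: at an assignment state $\SI(\sigma) = \SI(\sigma')$, at a nondeterministic state $\SI(\sigma) = \max_{\sigma'} \SI(\sigma')$, and at a probabilistic state $\SI(\sigma) = \sum_{\sigma'} \sfPr(l,l')[\bfx]\,\SI(\sigma')$. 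Non-negativity is immediate, $\SI$ sends the terminal state to $1$, and $\SI(l_{init},\bfx_{init}) \leq p$ holds by hypothesis.

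To finish, I would verify that $\Inv$ and $\SI$ are expressible in our assertion language, reusing the strategy from the completeness proof of \cref{proofrule:ast-martingale}. The reachable set $\Inv$ is captured by the computable ``there exists a finite path from the initial state to $\sigma$'' relation, lifted into $\ThQ$ via the predicate $\Nat$ from \cref{theorem:robinson}. Since $\SI$ may take irrational values, I encode its Dedekind cut: a rational $q < \SI(\sigma)$ is witnessed, for some length $k$ and some length-$k$ partial scheduler, by the $k$-step unrolling from $\sigma$ amassing termination probability strictly above $q$, which is a computable relation; upper bounds $q > \SI(\sigma)$ follow by negation. I expect the main obstacle to be a clean derivation of the Bellman equation at probabilistic states in the presence of demonic nondeterminism, where one must justify exchanging $\sup_\sched$ with the one-step expectation; our finite-branching assumption on the CFG model reduces this to the standard one-step argument and avoids the measurable-selection subtleties that arise in more general MDPs.
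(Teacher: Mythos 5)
Your proposal is correct. For soundness, the paper routes through \cref{lem:si-sii-equiv}: it observes that conditions (2a)--(2d) make $(\{\gamma \in \Inv \mid \SI(\gamma) < 1\}, p)$ a stochastic invariant that excludes the terminal state, and concludes because every terminating run must leave that set. You instead give a direct, self-contained martingale argument (Doob convergence plus $Y_\infty \geq \mathbbm{1}_{\Diamond(l_{\out},\bfzero)}$), which in effect inlines the proof of \cref{lem:si-sii-equiv}; your observation that the terminal state can be made absorbing without loss of generality is the right fix for the fact that condition (2d) constrains $\SI$ only at the terminal state itself and not at whatever follows it. For completeness the two proofs coincide in substance: the paper's SI-indicator obtained from the invariant $\reach(\cG)\setminus\{\sigma_\bot\}$ is exactly your value function $\sup_\sched \Prob^\sigma_\sched[\Diamond(l_{\out},\bfzero)]$ (the paper itself remarks that ``$\SI(\gamma)$ is precisely the probability of termination from $\gamma$''), and both encode it in $\ThQ$ via Dedekind cuts of computable $k$-step unrolling probabilities, so your route buys a slightly more explicit verification (the Bellman equations) at the cost of nothing. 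One caveat that applies equally to both proofs: since $\termProb$ is defined as an \emph{infimum} over schedulers, the hypothesis ``$\termProb(\cG)\leq p$'' literally bounds only that infimum, whereas both your value function and the paper's stochastic invariant require the \emph{supremum} over schedulers of the termination probability to be at most $p$; you make this stronger reading explicit (``every scheduler terminates with probability at most $p$''), which is the reading the paper's own completeness argument silently adopts when it asserts that $(\reach(\cG)\setminus\{\sigma_\bot\}, p)$ ``must be a stochastic invariant.''
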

  
\begin{proof}
    \cref{lem:si-sii-equiv} shows that the pair $(\{\gamma \in \Inv \mid \SI(\gamma) < 1\}, p)$ is a stochastic invariant of $\cG$.
    Since $\SI(\gamma) \geq 1$ at the terminal state, this stochastic invariant doesn't contain the terminal state.
    Soundness of the rule trivially follows from the fact that, in order to terminate, a run must leave this invariant and this probability is bounded above by $p$.

    For completeness, set $\Inv = \reach(\cG)$ and let $\sigma_\bot \in \Inv$ be the terminal state.
    Set $\SIpred = \Inv \setminus \{ \sigma_bot \}$, and observe that the collection of runs leaving $\SIpred$ is identical to the collection of terminal runs.
    Therefore, $(\SIpred, p)$ must be a stochastic invariant.
    \cref{lem:si-sii-equiv} indicates the existence of the SI-Invariant $\SI$ from $\SIpred$.
    $\SI$ immediately satisfies the conditions of the rule.

    To represent $\SI$ in our assertion language, note again that $\SI(\gamma)$ is precisely the probability of termination from $\gamma$.
    This probability is lower bounded by the probability of termination within some $m$ steps from $\gamma$.
    The latter probability is computable, and is hence representable in $\ThQ$.
    Prepending an appropriate universal quantifier for $m$ allows us to form lower bounds for $\SI$ in $\ThQ$.
    Our upper bound rule is thus relatively complete.
\end{proof}

\subsection{Towards a Lower Bound}

For a lower bound on the probability of termination, we start with the following rule from \citet{ChatterjeeGMZ22}. 

\begin{proofrule}[SI-indicators for Lower Bounds]
    \label{proofrule:SI-simple-lower-bounds-rule}
    To prove that $\termProb(\cG) \geq 1 - p$ for $0 < p < 1$, find
    \begin{enumerate}
        \item an inductive invariant $\Inv$ containing $(l_{\init}, \bfx_{\init})$, 
        \item a stochastic invariant indicator $\SI : \Inv \to \setOfReals$ such that $\SI(l_{\init}, \bfx_{\init}) \geq p$, and for all $(l, \bfx) \in \Inv$,
        \begin{enumerate}
            \item $\SI(l, \bfx) \geq 0$.
            \item if $(l, \bfx)$ is an assignment, or nondeterministic state, then $\SI(l, \bfx) \geq \SI(l', \bfx')$ for every successor $(l', \bfx')$.
            \item if $(l, \bfx)$ is a probabilistic state, then $\SI(l, \bfx) \geq \sum \sfPr((l, \allowbreak l'))[\bfx] \allowbreak \times \SI(l', \bfx')$ over all possible successor states $(l', \bfx')$.
        \end{enumerate}
        \item an $\epsilon > 0$,
        \item a variant $U : \Inv \to \setOfNaturals$ that is bounded above by some $H$, maps all states $\{\gamma \in \Inv \mid \SI(\gamma) \geq 1 \lor \gamma \text{ is terminal} \}$ to $0$, and for other states $(l, \bfx) \in \Inv$,
        \begin{enumerate}
            \item if $(l, \bfx)$ is an assignment, or nondeterministic state, $U(l', \bfx') < U(l, \bfx)$ for every successor $(l', \bfx')$.
            \item if $(l, \bfx)$ is a probabilistic state, $\sum \sfPr(l, l')[\bfx] > \epsilon$ over all successor states $(l', \bfx')$ with $U(l', \bfx') < U(l, \bfx)$.
        \end{enumerate}
    \end{enumerate}
\end{proofrule}
This rule demands a SI-indicator $(\SI, p)$ and a bounded variant $U$ such that $\SI$ induces a stochastic invariant $(\SIpred, p)$ that contains the terminal state.
Intuitively, this rule works by splitting the state space into terminating and possibly non-terminating segments.
The invariant $\SIpred$ represents the terminating section of the state space.
The application of McIver \& Morgan's \cref{proofrule:ast-variant} with the variant $U$ allows us to deduce that, were the execution be restricted to $\SIpred$, the program almost-surely terminates.
Observe that the variant $U$ effectively considers all states outside $\SIpred$ to be terminal.
Therefore, $\cG$ almost-surely either escapes $\SIpred$ or terminates within $\SIpred$.
Non-termination is thus subsumed by the event of escaping the invariant, the probability of which is $p$.
Hence, the probability of termination is $\geq 1 - p$.

Observe that, like our $\AST$ \cref{proofrule:ast-martingale}, this rule requires a supermartingale and a variant function that work in tandem.
However, unlike \cref{proofrule:ast-martingale}, the supermartingale and the variant are entirely bounded.

This soundness argument was formally shown by \citet{ChatterjeeGMZ22}.
In their original presentation, they do not specify an exact technique for determining the almost-sure property of either termination within or escape from the induced stochastic invariant $\SIpred$.
We will explain our choice of the bounded variant \cref{proofrule:ast-variant} of \citet{McIverMorganBook} in a moment.
They additionally claimed the completeness of this rule, assuming the usage of a complete rule for almost-sure termination.
If their completeness argument were true, it would indicate that all probabilistic programs induce state spaces that can neatly be partitioned into terminating and non-terminating sections.
In \cref{sec:ctrex}, we show that this isn't the case using a counterexample at which this split isn't possible.

Nevertheless, this rule is complete for finite-state programs.
This finite-state completeness pairs well with the finite-state completeness of the bounded variant \cref{proofrule:ast-variant} we use to certify the almost-sure property contained in the rule.

\begin{figure}
    \tikzfig{loop-exploitation}
    \caption{If shortest runs of $\Sigma_{good}$ were too long. \textnormal{This is a representation of the collection of executions beginning at a good state $\sigma \in \Sigma_{good}$, according to the partitioning system suggested in the proof of \cref{lem:SI-lower-bounds-partial-completeness}. The black nodes are the terminal states; they all lead to the single terminal state. The \textcolor{blue}{blue} states are identical to each other; the same holds for the \textcolor{blue-green}{blue green} states. The pathological scheduler $\sched'$ always takes the \textcolor{red}{red} back edges, rendering no terminal runs from $\sigma$.}}
    \label{fig:loop-exploitation}
\end{figure}

\begin{lemma}
    \label{lem:SI-lower-bounds-partial-completeness}
    \cref{proofrule:SI-simple-lower-bounds-rule} is complete for finite state $\CFG$s. 
\end{lemma}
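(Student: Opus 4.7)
My plan is to build the SI-indicator $\SI$ as the supremum non-termination probability across schedulers, then obtain the bounded variant $U$ by invoking the finite-state completeness of McIver and Morgan's Variant Rule (\cref{proofrule:ast-variant}) on a modified program in which ``doomed'' states are adjoined to the terminal set.

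Fix a finite-state $\CFG$ $\cG$ with $\termProb(\cG) \geq 1 - p$ and set $\Inv \triangleq \reach(\cG)$. I would define
\[
\SI(\sigma) \triangleq \sup{}_\sched \Prob{}_\sched[\text{runs from }\sigma\text{ never reach a terminal state}].
\]
In finite-state MDPs this supremum is achieved by a memoryless deterministic scheduler and arises as the least fixed point of Bellman inequalities that match exactly the supermartingale conditions demanded by the rule; additionally, $\SI(l_{\init}, \bfx_{\init}) \leq p$ since $\termProb(\cG) \geq 1 - p$.

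Next, I would partition $\Inv$ into $\Sigma_{bad} \triangleq \{\sigma \in \Inv \mid \SI(\sigma) \geq 1\}$ and $\Sigma_{good} \triangleq \Inv \setminus (\Sigma_{bad} \cup \{(l_{\out}, \bfzero)\})$, and construct the modified $\CFG$ $\cG'$ by marking every state in $\Sigma_{bad}$ as terminal. The pivotal claim is that $\cG'$ is $\AST$ from every state in $\Sigma_{good}$. Granting this, the finite-state completeness of \cref{proofrule:ast-variant} (\cref{lem:completeness-mciver-morgan-rule}) supplies a bounded variant $U' : \reach(\cG') \to \setOfNaturals$ and an $\epsilon > 0$ satisfying its decrease conditions. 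Shifting $U'$ so that $U = 0$ at every terminal location of $\cG'$ yields the desired $U$ on $\Inv$. The strict decrease at assignment and nondeterministic states then carries over, because successors that were terminal in $\cG'$ now have $U = 0$, strictly below the positive values $U$ takes on $\Sigma_{good}$; the probabilistic decrease condition survives for the same reason.

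The main obstacle is establishing the pivotal $\AST$ claim. Suppose, toward a contradiction, that some $\sigma \in \Sigma_{good}$ admits a scheduler $\sched'$ with $\Prob_{\sched'}[\text{stay in }\Sigma_{good}\text{ forever from }\sigma] > 0$ in $\cG'$. Lifting $\sched'$ to $\cG$ preserves this event, since $\cG$ and $\cG'$ agree on transitions sourced in $\Sigma_{good}$. By the standard end-component decomposition of finite MDPs, conditioned on staying in $\Sigma_{good}$ forever, the set of states visited infinitely often forms a nonempty end component $A \subseteq \Sigma_{good}$ from which a memoryless sub-strategy stays within $A$ with probability one, never reaching the terminal of $\cG$. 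This forces $\SI(\gamma) = 1$ for every $\gamma \in A$, contradicting $A \subseteq \Sigma_{good}$. In effect, the pathological scheduler depicted in \cref{fig:loop-exploitation} is ruled out because the finite-state hypothesis collapses any loop-exploiting behavior into an end component, which by the maximality built into $\SI$ must lie in $\Sigma_{bad}$. Finiteness of $\Inv$ ensures all certificates have finite support and are therefore expressible in the assertion language.
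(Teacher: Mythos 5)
Your proposal is correct, but it reaches the conclusion by a genuinely different route than the paper. The paper partitions $\reach(\cG)$ combinatorially into terminal, bad, good, and neutral states, proves that every good state admits a terminating path of length at most $|\reach(\cG)|$ under every scheduler via a loop-exploitation contradiction plus the zero-one law, and then separately establishes the escape-probability bound by composing the pathological schedulers of all neutral states into a single scheduler $\frakf$. You instead define the indicator $\SI$ directly as the optimal non-termination value function; the local supermartingale conditions and the bound $\SI(l_{\init}, \bfx_{\init}) \leq p$ then follow from the Bellman optimality equations of the finite MDP, which renders the paper's entire neutral-state and scheduler-composition argument unnecessary --- the escape bound comes for free from the definition of an SI-indicator via \cref{lem:si-sii-equiv}. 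For the remaining claim (almost-sure termination or exit from $\{\gamma \mid \SI(\gamma) < 1\}$) you invoke the end-component decomposition where the paper uses an elementary shortest-terminating-path argument. The two constructions identify the same invariant set, since in a finite MDP $\SI(\sigma) < 1$ holds exactly when every scheduler admits a consistent terminating path from $\sigma$, which is the paper's definition of $\Sigma_{good}$. Your route buys brevity and a direct handle on the indicator function; the paper's buys self-containedness, avoiding appeals to memoryless optimality and end-component theory. Two small points to tidy up: the optimal non-termination probability is the \emph{greatest}, not least, fixed point of the Bellman operator (harmless, since your argument only needs that the value function satisfies the inequalities); and $\reach(\cG')$ may omit good states of $\Inv = \reach(\cG)$ that are reachable only through $\Sigma_{bad}$, so the variant obtained from \cref{proofrule:ast-variant} on $\cG'$ must still be defined, and must satisfy the decrease conditions, at those states --- easily repaired by running the same construction from each good state taken as initial, or by defining $U$ directly as the uniform bound on the length of shortest exit-or-termination paths, but worth stating explicitly.
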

\begin{proof}
    In this proof, we argue about the stochastic invariants directly.
    The SI-indicator and variant functions can be derived from them using prior techniques \cite{ChatterjeeGMZ22, McIverMorganBook}.

    Let $\cG$ be a finite state $\iCFG$.
    Partition $\reach(\cG)$, the set of states reachable from the initial state of $\cG$, into 
    \begin{inlinelist}
        \item the singleton containing the terminal state $\Sigma_\bot$, 
        \item the \emph{bad} states $\Sigma_{bad}$ with the property that no finite paths ending on these states can be extended to a terminal run, 
        \item the \emph{good} states $\Sigma_{good}$ such that if a scheduler $f$ induces a finite path ending at a good state $\sigma$, $f$ induces at least one terminating run passing through $\sigma$, and 
        \item the remaining \emph{neutral} states $\Sigma_{neutral}$, with the property that each neutral state $\sigma$ is associated with a pathological scheduler $\sched_\sigma$ that induces runs that, if they pass through $\sigma$, do not terminate. 
    \end{inlinelist}
    Notice that, as long as $p < 1$ (the case where $p = 1$ is trivial, so we skip it), the initial state of $\cG$ is in $\Sigma_{good}$. 
    We show that $(\Sigma_{good} \cup \Sigma_\bot, p)$ is the required stochastic invariant. 
    
    We begin by showing that runs that remain inside $\Sigma_{good} \cup \Sigma_\bot$ almost-surely terminate. 
    Fix a state $\sigma \in \Sigma_{good}$. 
    Map to every scheduler $\sched$ that induces runs passing through $\sigma$ the shortest terminating consistent finite path $\pi_\sched$ beginning from $\sigma$. 
    Suppose, for some scheduler $\sched$, $|\pi_\sched| > |\reach(\cG)|$.
    Then, $\pi_\sched$ must visit some $\sigma' \in \reach(\cG)$ twice. 
    This indicates a loop from $\sigma' \to \sigma'$ in $\cG$, and there must thus exist a scheduler $\sched'$ that extends $\pi_\sched$ by repeating this loop infinitely often. 
    Since $\pi_\sched$ is the smallest terminating finite path beginning from $\sigma$, the same holds for all terminating finite paths consistent with $\sched$ beginning from $\sigma$. 
    Thus, all terminating runs consistent with $\sched$ that pass through $\sigma$ must contain a loop. 
    There must exist a scheduler $\sched'$ which exploits these loops and yields no terminating runs passing through $\sigma$. 
    \cref{fig:loop-exploitation} depicts the operation of $\sched'$.
    Observe that the existence of $\sched'$ contradicts $\sigma \in \Sigma_{good}$. 

    Therefore, from every state $\sigma \in \Sigma_{good}$, there is a terminating run of length $\leq |\reach(\cG)|$ no matter which scheduler is used. 
    Thus, the probability of leaving $\Sigma_{good}$ from $\sigma$ is bounded below by $q^{|\reach(\cG)|}$, where $q$ is the smallest transition probability of $\cG$
    (note that $q$ only exists because $\cG$ is finite state). 
    Further, $\Sigma_{good}$ only contains non-terminal states. 
    This enables the zero-one law of probabilistic processes \cite[Lemma 2.6.1]{McIverMorganBook}, allowing us to deduce the almost-certain escape from $\Sigma_{good}$ to $\Sigma_\bot \cup \Sigma_{bad} \cup \Sigma_{neutral}$. 
    Hence, under all schedulers, the probability of either terminating inside $\Sigma_{good} \cup \Sigma_\bot$ or entering $\Sigma_{bad} \cup \Sigma_{neutral}$ is $1$. 

    We now show that $(\Sigma_{good} \cup \Sigma_{bot}, p)$ is a stochastic invariant. 
    It's easy to see that if a run ever enters $\Sigma_{bad}$, it never terminates. 
    We know that if an execution enters some $\sigma \in \Sigma_{neutral}$ under a pathological scheduler $\sched_\sigma$, it never terminates. 
    Let $\sigma_1$ and $\sigma_2$ be neutral states and $\sched_1$ and $\sched_2$ be their corresponding pathological schedulers. 
    Notice that $\sched_1$ may induce terminating executions that pass through $\sigma_2$. 
    One can build a scheduler $\sched_3$ that mimics $\sched_1$ until the execution reaches $\sigma_2$, and once it does, mimics $\sched_2$. 
    Thus, $\sched_3$ would produce the pathological behaviour of both $\sched_1$ and $\sched_2$. 
    In this way, we compose the pathological behavior of all neutral states to produce a scheduler $\frakf$ that induces runs that, if they enter a neutral state, never terminate. 
    Under $\frakf$, leaving $\Sigma_{good} \cup \Sigma_\bot$ is equivalent to non-termination, and all terminating runs are made up of good states until their final states. 


    Take a scheduler $\sched$ that induces terminating runs that pass through neutral states. 
    Compose the scheduler $\sched'$ that mimics $\sched$ until the execution enters a neutral state and mimics $\frakf$ from then on. 
    Let $T_\sched$ and $T_{\sched'}$ be the collection of terminating runs consistent with $\sched$ and $\sched'$ respectively. 
    Each run in $T_{\sched'}$ is made up of good and/or terminal states, and is therefore consistent with $\sched$. 
    Hence, $T_\sched \supset T_{\sched'}$ and, because $\sched$ and $\sched'$ agree on $T_{\sched'}$, we have $\Prob_\sched(T_\sched) > \Prob_\sched(T_{\sched'})$, 
	where $\Prob_\sched$ is the probability measures in the semantics of $\cG$ induced by $\sched$. 
    
    Leaving $\Sigma_{good} \cup \Sigma_\bot$ is equivalent to entering $\Sigma_{neutral} \cup \Sigma_\bot$. 
    Observe that the probability of never leaving $\Sigma_{good} \cup \Sigma_\bot$ under $\sched$ is the same as the probability of never leaving $\Sigma_{good} \cup \Sigma_\bot$ under $\sched'$, as $\sched$ and $\sched'$ agree until then. 
    Furthermore, the probability measure of never leaving $\Sigma_{good} \cup \Sigma_\bot$ is just $\Pr_\sched(T_{\sched'}) = \Pr_{\sched'}(T_{\sched'})$. 
    Add $\termProb(\cG) \geq 1-p \implies p_{\sched'}(T_{\sched'}) \geq 1-p$, and we get that, under $\sched$, the probability of leaving $\Sigma_{good} \cup \Sigma_\bot$ is upper bounded by $p$. 
    Since this is true for any $\sched$, the lemma is proved. 
\end{proof}
We do not show the relative completeness of this rule for finite-state programs; nevertheless, this is easy to show using techniques discussed in prior rules.

\subsection{Our Rule}

We now show a sound and complete rule for lower bounds that fixes the prior \cref{proofrule:SI-simple-lower-bounds-rule}.
This rule is implicitly contained in the details of the erroneous proof of \cite{ChatterjeeGMZ22}.
Principally, this rule exploits the fact that establishing lower bounds strictly under the exact termination probability can be done using the prior \cref{proofrule:SI-simple-lower-bounds-rule}.
This is because the program is guaranteed to exceed these non-strict lower bounds before leaving a finite subset of its state space.
Recall the unrolling lemma from \cref{subsec:unrolling-lemma}.
The unrolling lemma implies that if a $\CFG$ $\cG$ terminates with probability $p$, then for every $\epsilon > 0$, there is a finite unrolling of the execution of $\cG$ which establishes a termination probability of $p - \epsilon$.
This finite unrolling is tantamount to a finite-state subprogram of $\cG$, over which the prior \cref{proofrule:SI-simple-lower-bounds-rule} operates.

\begin{proofrule}[Lower Bounds Rule]
    \label{proofrule:lower-bounds-ours}
    To show that $\termProb(\cG) \geq 1 - p$ for $0 < p < 1$, find, for all $n \in \setOfNaturals$, functions $SI_n$ and $U_n$ that enable the application of \cref{proofrule:SI-simple-lower-bounds-rule} to deduce $\termProb(\cG) \geq 1 - \left(p + \frac{1}{n}\right)$.
\end{proofrule}

Soundness of this rule follows trivially from the soundness of the prior \cref{proofrule:SI-simple-lower-bounds-rule}.
The completeness of this rule is derived from the unrolling lemma; to reach a termination probability of $p$, the program must be able to amass a termination probability of $p - \frac{1}{n}$ within a finite subspace.
\cref{lem:SI-lower-bounds-partial-completeness} shows that finiteness can always be captured by the prior \cref{proofrule:SI-simple-lower-bounds-rule}.

\begin{lemma}
    \label{lem:lower-bounds-completeness}
    \cref{proofrule:lower-bounds-ours} is relatively complete.
\end{lemma}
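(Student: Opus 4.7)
The plan is to prove relative completeness by reducing each threshold $p + \tfrac{1}{n}$ to a finite-state instance and then invoking \cref{lem:SI-lower-bounds-partial-completeness}. Assume $\termProb(\cG) \geq 1 - p$ and fix $n \in \setOfNaturals$. Applying \cref{lem:required-simulation-time} with $\epsilon = \tfrac{1}{n}$ supplies a uniform bound $k_n$ such that, under every scheduler, the first $k_n$ steps of $\cG$ amass termination probability at least $1 - (p + \tfrac{1}{n})$.

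Next, I would construct a finite-state CFG $\cG_n$ that captures the $k_n$-step unrolling: its states are pairs $(\sigma, d)$ with $\sigma \in \reach(\cG)$ and $d \in \{0, \ldots, k_n\}$, transitions from $(\sigma, d)$ mirror those of $\cG$ at $\sigma$ while incrementing $d$, and every transition out of depth $k_n$ is redirected to a fresh non-terminal absorbing sink. Bounded branching ensures $\reach(\cG_n)$ is finite, and $\termProb(\cG_n) \geq 1 - (p + \tfrac{1}{n})$ by construction. Applying \cref{lem:SI-lower-bounds-partial-completeness} to $\cG_n$ yields an SI-indicator $\SI_n^\dagger$ and bounded variant $U_n^\dagger$ that certify this bound via \cref{proofrule:SI-simple-lower-bounds-rule} for $\cG_n$.

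To lift $(\SI_n^\dagger, U_n^\dagger)$ back to certificates $(\SI_n, U_n)$ on $\cG$, I would exploit the fact that depth is an arithmetic quantity: define $\SI_n(\sigma)$ and $U_n(\sigma)$ by formulas that internally quantify over the depths $d \leq k_n$ at which $\sigma$ appears in $\reach(\cG_n)$. The supermartingale inequality, the variant's strict decrease, and the probabilistic-progress condition for $(\SI_n, U_n)$ at a state $\sigma$ then reduce to the corresponding conditions for $(\SI_n^\dagger, U_n^\dagger)$ at each $(\sigma, d)$ in $\cG_n$, using that any transition of $\cG$ at $\sigma$ is faithfully simulated by a transition of $\cG_n$ at $(\sigma, d)$ for every relevant $d$. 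Representability in $\ThQ$ then follows the template used in \cref{lem:SI-upper-bounds} and \cref{lem:completeness-ast-martingale}: $k_n$ is obtained by a computable search on finite unrollings (the effective content of the unrolling lemma), all constructions on the finite $\cG_n$ are computable, and the Dedekind cuts of the resulting real-valued functions are expressible in $\ThQ$ via \cref{theorem:robinson}.

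The main obstacle will be the lifting step. Because $\cG_n$'s state space is augmented with depth, some care is required to choose the right depth-projection so that the pointwise conditions of \cref{proofrule:SI-simple-lower-bounds-rule} transfer cleanly back to $\cG$: the natural pointwise choices (such as a plain maximum over depths) can break the supermartingale inequality, so the projection must be aligned with transitions of $\cG_n$ rather than chosen state-by-state. Once this projection is set up correctly, the remaining verifications are routine and the representability of the certificates as arithmetical formulas parameterised by $n$ then closes the argument.
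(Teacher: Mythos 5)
Your overall strategy---invoke the unrolling lemma to obtain a uniform bound $k_n$ for the threshold $p+\frac{1}{n}$, pass to a finite-state instance, apply \cref{lem:SI-lower-bounds-partial-completeness}, and then argue representability in $\ThQ$---is the same as the paper's. The difference, and the genuine gap, lies in how the finite-state instance is built. You form a depth-indexed unrolling with states $(\sigma,d)$, which splits a single state $\sigma$ of $\cG$ into several copies, and you must then project the certificates $\SI_n^\dagger, U_n^\dagger$ back to functions on the states of $\cG$. You correctly observe that the naive projections (such as a maximum over depths) break the supermartingale inequality, but you do not exhibit a projection that works, and it is unclear that one exists: in the unrolled program the indicator value at $(\sigma,d)$ genuinely depends on $d$ (the closer to the depth cutoff, the larger the probability of absorption into the sink), and for a transition $\sigma\to\sigma'$ of $\cG$ the successor $\sigma'$ may be reachable in $\cG_n$ at depths that are not of the form $d+1$ for any depth $d$ of $\sigma$, so the pointwise inequalities at $(\sigma,d)$ do not control the projected inequality between $\sigma$ and $\sigma'$. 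The same issue afflicts the variant, whose strict decrease along every transition of $\cG$ must hold for a single depth-independent value. As written, the lifting step is an unresolved obstruction rather than a routine verification.

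The paper avoids the problem entirely by not introducing depth into the state space: it takes $\Sigma_n$ to be the set of \emph{states} of $\cG$ reachable from $(l_{init},\bfx_{init})$ by some finite path of length at most $k_n$. Bounded branching makes $\Sigma_n$ finite; every terminating run of length at most $k_n$ lies entirely inside $\Sigma_n$, so under every scheduler the restriction of $\cG$ to $\Sigma_n$ terminates within $\Sigma_n$ with probability at least $1-(p+\frac{1}{n})$; and \cref{lem:SI-lower-bounds-partial-completeness} applies to this restriction directly. The resulting $\SI_n$ and $U_n$ already live on a subset of $\reach(\cG)$ and extend to the rest of the state space in the obvious way, so no projection is needed. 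If you replace your depth-indexed product with this restriction, the remainder of your argument---including the representability of $U_n$ via the computable reachability relation, the encoding of lower bounds on $\SI_n$ via the unrolling lemma, and the use of \cref{theorem:robinson}---goes through essentially as in the paper.
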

\begin{proof}
    Let $\cG$ be $\CFG$ such that $\termProb(\cG) \geq 1 - p$ for some $p > 0$. 
    Fix some $n \in \setOfNaturals$. 
    Let $k_n$ be the upper bound guaranteed by the unrolling lemma over the required simulation times across all schedulers to amass a termination probability of $1 - (p + 1/n)$. 
    Denote by $\Sigma_n$ the set of states $\sigma$ such that there is a finite path of length at most $k_n$ beginning at $(l_{init}, \bfx_{init})$ and ending at $\sigma$. 
    Clearly, $\Sigma_n$ must be finite and, for any scheduler $\sched$, the probability measure of the collection of terminating runs made up of states in $\Sigma_n$ consistent with $\sched$ must be $\geq 1 - (p + 1/n)$. 
    The finite-state completeness of \cref{proofrule:SI-simple-lower-bounds-rule} implies the existence of an SI-indicator $(\SI_n, p + 1/n)$ and a variant $U_n$ that certifies that the restricted program $\cG$ constrained to states within $\Sigma_n$ terminates with probability $\geq 1 - (p + 1/n)$.

    Let us now show how we can represent these SI-indicators and variants in our assertion language.
    As in the proof of the completeness of the martingale \cref{proofrule:ast-martingale}, consider the relation $I$ such that $(k, \sigma_1, \sigma_2) \in I$ \emph{iff} there is a finite path of length $\leq k$ from $\sigma_1$ to $\sigma_2$ in $\cG$.
    Clearly, $I$ is a computable relation and is thus representable in $\ThQ$.
    $\Inv$ and each $U_n$ can easily be represented in $\ThQ$ using this relation $I$ in exactly the same way as with \cref{proofrule:ast-martingale}.

    Let us now detail techniques to represent the SI-indicators $\SI_n$ in our assertion language.
    Denote by $(\SIpred_n, p+1/n)$ the stochastic invariant associated with $(\SI_n, p + 1/n)$ as required by \cref{lem:si-sii-equiv}.
    \citet{ChatterjeeGMZ22} proved that for every state $\gamma$, $\SI_n(\gamma)$ is precisely the probability with which executions beginning at $\gamma$ leave the set $\SIpred$.
    \citet{MS24arxiv} showed how lower bounds on this probability can be encoded in $\ThQ$.
    This encoding makes essential use of the unrolling lemma.
    Intuitively, the probability with which executions beginning at $\gamma$ escape $\SIpred_n$ within some $m$ steps always lower bounds $\SI(\gamma)$.
    The former probability is computable.
    The unrolling lemma ensures that in spite of non-determinism, augmenting $m$ with a universal quantifier is exactly $\SI(\gamma)$.
    Hence, this rule is relatively complete.
%
%
  \end{proof}

\subsection{Counterexample to Completeness for \cref{proofrule:SI-simple-lower-bounds-rule}}
\label{sec:ctrex}

\citet{ChatterjeeGMZ22} claimed that their \cref{proofrule:SI-simple-lower-bounds-rule} is complete for all programs.
As promised, we now demonstrate a counterexample to their claim of completeness.

\begin{figure}
    \tikzfig{chatterjee-rule-exception}
    \caption{Counterexample to the SI-rule for lower bounds. \textnormal{With an initial state of $(l_1, (1, 2, 1/4))$, the termination probability of this program is $1/2$. 
However, there is no SI that shows this.}}
    \label{fig:chatterjee-rule-exception}
\end{figure}

\cref{proofrule:SI-simple-lower-bounds-rule} for lower bounds can be applied onto any $\iCFG$ that induces a set of states $\SIpred$ with the property that executions remain within $\SIpred$ with exactly the probability of termination. 
As mentioned previously, not all programs are so well behaved. 
Consider the program $\cK$ defined in \cref{fig:chatterjee-rule-exception}.

%
%

The initial location of $\cK$ is $l_1$, and the values of the variables $(x_1, x_2, x_3)$ are $(1, 2, 1/4)$. $l_1$ is a probabilistic location, $l_3$ is an assignment location, and $l_2$ is a terminal location. 
It isn't difficult to prove that the probability of termination of $\cK$ is $1/2$; we leave the details to the diligent reader. 
The SI-rule for lower bounds requires a stochastic invariant $(\SIpred_\cK, 1/2)$ such that executions almost-surely either terminate or exit $\SIpred_\cK$. 

\begin{lemma}
    There is no stochastic invariant $(\SIpred_\cK, 1/2)$ of $\cK$ such that runs almost-surely either terminate or leave $\SIpred_\cK$. 
\end{lemma}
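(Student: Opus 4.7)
The plan is to assume for contradiction that $(\SIpred_\cK, 1/2)$ is a stochastic invariant of $\cK$ satisfying the almost-sure termination-or-escape condition, and to derive a contradiction from the structural fact that in $\cK$ the ``terminating'' and ``non-terminating'' trajectories cannot be separated cleanly at the level of states. The first step is to observe that, since $\cK$ has no nondeterministic location and $l_3$ is a pure assignment location, every run of $\cK$ follows a determined skeleton: either it fires the probabilistic transition into $l_2$ at some visit to $l_1$ and terminates, or it never terminates, in which case it traverses the unique infinite chain $\sigma_0, \sigma_1, \sigma_2, \ldots$ of states at $l_1$, where $\sigma_0 = (l_1, (1, 2, 1/4))$ is the initial state and $\sigma_{i+1}$ is the state at $l_1$ reached after executing one full loop iteration from $\sigma_i$ through $l_3$.

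\textbf{Key estimate.} Let $p_i$ denote the probability of taking the transition to $l_2$ at $\sigma_i$, and set $Q_m \triangleq \prod_{i=0}^{m-1}(1 - p_i)$, the probability that a run of $\cK$ reaches $\sigma_m$. The assumption $\termProb(\cK) = 1/2$ gives $1 - \lim_{m \to \infty} Q_m = 1/2$, so $Q_m$ decreases monotonically to $1/2$. Because every $p_i > 0$, the sequence $Q_m$ is \emph{strictly} decreasing, and in particular $Q_m > 1/2$ for every finite $m$. This ``approaches $1/2$ from above but never reaches it'' behaviour is the engine driving the contradiction.

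\textbf{Case split.} Given any candidate $\SIpred_\cK$, consider the smallest index $m$ such that $\sigma_m \notin \SIpred_\cK$. If no such $m$ exists, then $\SIpred_\cK$ contains the entire chain $\{\sigma_i\}_{i \in \setOfNaturals}$, so the non-terminating runs of $\cK$---which carry total probability mass exactly $1/2$---never terminate and never leave $\SIpred_\cK$, contradicting the almost-sure termination-or-escape property. If $m$ is finite, then any run that leaves $\SIpred_\cK$ must, in particular, reach a state outside $\SIpred_\cK$ at some point, and tracing the deterministic skeleton shows that the probability of escape is bounded below by $Q_m > 1/2$, contradicting the defining inequality of the stochastic invariant $(\SIpred_\cK, 1/2)$.

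\textbf{Main obstacle.} The principal subtlety is to argue cleanly that removal of states \emph{off} the chain is harmless for the contradiction, and that it is really the chain-cut $Q_m$ which governs the escape probability. Concretely, I would argue that any state not on the chain is either unreachable or lies on a finite terminal branch, so excluding such a state from $\SIpred_\cK$ can only decrease the escape probability---leaving the two options above as the only cases. Once the chain is correctly identified and the asymptotic $Q_m \searrow 1/2$ with $Q_m > 1/2$ is pinned down from the stated termination probability, the rest of the argument is a direct case split.
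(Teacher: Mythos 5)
Your proof is correct in substance but takes a genuinely different route from the paper's. The paper argues measure-theoretically: the almost-sure condition forces $\Prob[\sfTerm_{\SIpred}] \geq 1 - \Prob[\sfLeave_{\SIpred}] \geq 1/2$, and since the total terminating mass is exactly $1/2$, \emph{every} terminating run must stay inside $\SIpred_\cK$; because every reachable state of $\cK$ lies on some terminating run (there is a finite path of length at most $2$ to the terminal state), $\SIpred_\cK$ must then be the entire reachable set, whence no run leaves and the almost-sure condition would force termination with probability $1$ --- contradicting $\termProb(\cK) = 1/2$. You instead make the state space explicit as a chain and split on the first state the invariant omits, using the strict inequality $Q_m > 1/2$ for every finite $m$. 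Both arguments exploit the same phenomenon --- the terminating and non-terminating mass traverse the very same states --- but yours is more computational and self-contained, while the paper's single observation about length-$2$ paths to the terminal state replaces your entire $Q_m$ calculus.

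Two repairs are needed in your writeup. First, the ``chain'' must include the intermediate states at $l_3$, not only the states at $l_1$: a reachable $l_3$-state is \emph{not} on a finite terminal branch (it lies on the unique non-terminating trajectory), so omitting one from $\SIpred_\cK$ while keeping every $\sigma_i$ is a case your split does not literally cover --- though the identical bound $\prod_{i=0}^{m}(1-p_i) > 1/2$ disposes of it. Second, the claim that excluding an off-chain state ``can only decrease the escape probability'' is backwards: removing states from $\SIpred_\cK$ can only \emph{increase} the escape probability. What you actually need (and what is true) is that such removals cannot push the escape probability \emph{below} the chain-cut lower bound $Q_m$, and that removals confined to terminal branches alone cannot rescue the almost-sure condition for the non-terminating trajectory, whose mass $1/2$ is trapped inside.
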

\begin{proof} 
    Suppose there does exist a stochastic invariant $(\SIpred_\cK, 1/2)$ that satisfies these properties. Therefore, the probability measure of the union of the collection of runs $\sfLeave_{\SIpred}$ that leave $\SIpred_\cK$ and the runs $\sfTerm_{\SIpred}$ that terminate inside $\SIpred_\cK$ is $1$. However, because $\SIpred_\cK$ is a stochastic invariant, the probability measure of $\sfLeave_{\SIpred}$ is bounded above by $1/2$. This means that the measure of $\sfTerm_{\SIpred}$ is bounded below by $1/2$. But, the termination probability of $\cK$ is $1/2$. Consequently, the measure of $\sfTerm_{\SIpred}$ must be exactly $1/2$. This means $\sfTerm_{\SIpred}$ contains all terminating runs of $\cK$. 

    It is easy to see that from any state $(l, \bfx)$ reachable from the initial state, there is a finite path of length at most $2$ that leads it to a terminal state. Therefore, all reachable states $(l, \bfx)$ are a part of some terminating run; meaning that the set of states that make up the runs in $\sfTerm_{\SIpred}$ must be the set of reachable states. This is only possible when $\SIpred_\cK$ is the set of reachable states. This means no runs leave $\SIpred_\cK$, and therefore, the measure of $\sfTerm_{\SIpred}$ is $1$. This contradicts the fact that the termination probability of $\cK$ is $1/2$. 
\end{proof}

\emph{A note on syntax.}
The $\CFG$ of \citet{ChatterjeeGMZ22} over which the claim of completeness of \cref{proofrule:SI-simple-lower-bounds-rule} was made do not feature fractional expressions guiding probabilistic branching. 
Nevertheless, they can be simulated with small programs that only use the basic coin flip \cite{FlajoletPS11}. 
Therefore, \cref{fig:chatterjee-rule-exception} is a valid counterexample to their claim. 

\subsection{Applications to Qualitative Termination}
\label{subsec:ast-si}
Notice that by setting $p = 0$, our \cref{proofrule:lower-bounds-ours} for proving lower bounds on termination probabilities can be used to prove the almost-sure termination of a program.
This is a more general consequence of the unrolling lemma, that if a program terminates with probability $1$, it only explores a finite amount of its state space to amass a termination probability of $1 - \epsilon$ for every $\epsilon > 0$.
However, $\AST$ is a property that holds when the initial state is changed to any reachable state.
If $\cG$ is almost surely terminating, then $\cG$ is almost surely terminating from any state $\sigma$ reachable from the initial state $\sigma_0 = (l_{\init}, \bfx_{\init})$.
Furthermore, if from every reachable state $\sigma$, $\cG$ is known to terminate with some minimum probability $p > 0$, then the program is $\AST$.
We can exploit these facts to design another rule, based on stochastic invariants, to prove the almost-sure termination of a program.

\begin{proofrule}[SI-Indicators for $\AST$]
    \label{proofrule:ast-SI}
    To prove that $\cG$ is $\AST$, fix a $0 < p < 1$ and find
    \begin{enumerate}
        \item an inductive invariant $\Inv$ containing the initial state,
        \item a mapping $\SI$ from each $\sigma \in \Inv$ to SI-indicator functions $(\SI_\sigma, p) : \Inv \to \setOfReals$ such that $\SI_\sigma(l, \bfx) \leq p$ and, for all $(l, \bfx) \in \Inv$,
        \begin{enumerate}
            \item $\SI_\sigma(l, \bfx) \geq 0$.
            \item if $(l, \bfx)$ is an assignment, or nondeterministic state, then $\SI_\sigma(l, \bfx) \geq \SI_\sigma(l', \bfx')$ for every successor $(l', \bfx')$.
            \item if $(l, \bfx)$ is a probabilistic state, then $\SI_\sigma(l, \bfx) \geq \sum \sfPr((l, l'))[\bfx] \times \SI_\sigma(l', \bfx')$ over all possible successor states $(l', \bfx')$.
        \end{enumerate}
        \item a mapping $\cE$ mapping states $\sigma \in \Inv$ to values $\epsilon_\sigma \in (0, 1]$,
        \item a mapping $\cH$ mapping states $\sigma \in \Inv$ to values $H_\sigma \in \setOfNaturals$,
        \item a mapping $\cU$ from each $\sigma \in \Inv$ to variants $U_\sigma : \Inv \to \setOfNaturals$ that is bounded above by $\cH(\sigma)$, maps all states $\{\gamma \mid \SI_\sigma(\gamma) \geq 1 \lor \gamma \text{ is terminal} \}$ to $0$ and, for other states $(l, \bfx) \in \Inv$,
        \begin{enumerate}
            \item if $(l, \bfx)$ is an assignment, or nondeterministic state, $U_\sigma(l', \bfx') < U_\sigma(l, \bfx)$ for every successor $(l', \bfx')$.
            \item if $(l, \bfx)$ is a probabilistic state, $\sum \sfPr(l, l')[\bfx] > \cE(\sigma)$ over all successor states $(l', \bfx')$ with $U_\sigma(l', \bfx') < U_\sigma(l, \bfx)$.
        \end{enumerate}
    \end{enumerate}
\end{proofrule}
Intuitively, our rule requires SI-indicator functions $\SI_\sigma$ at each $\sigma \in \Inv$ that hold for executions beginning at $\sigma$ with probability $\geq 1 - p$.
Each of these imply stochastic invariants $(\SIpred_\sigma, p)$ centered around the state $\sigma$.
The functions $\cU$, $\cH$, and $\cE$ combine together to form variant functions $U_\sigma$ of the McIver-Morgan kind at each $\sigma \in \Inv$.
These $U_\sigma$ further imply that a terminal state is contained within each $\SIpred_\sigma$, and induces paths within each $\SIpred_\sigma$ to this terminal state.
Feeding $U_\sigma$, $\epsilon_\sigma$, and $H_\sigma$ into McIver and Morgan's variant \cref{proofrule:ast-variant} gives us a proof for the fact that, were the execution to be restricted to $\SIpred_\sigma$, the probability of termination from $\sigma$ is $1$.
Therefore, the probability of termination from each $\sigma$ is $\geq 1 - p$.
Applying the zero-one law of probabilistic processes \cite[Lemma 2.6.1]{McIverMorganBook} completes the proof of soundness of this rule.

Notice that we don't mandate any locality conditions on the SI-indicators in this rule.
This is because they aren't necessary to infer the soundness of the rule.
However, we show in our completeness proof that one can always find ``local'' SI-indicators that only take on values $< 1$ at finitely many states for $\AST$ programs.
This is because these SI-indicators are built from appropriate finite stochastic invariants, the existence of which is a consequence of the unrolling lemma.

\begin{lemma}[Completeness]
    \label{lem:SI-ast-rule-completeness}
    \cref{proofrule:ast-SI} is relatively complete. 
\end{lemma}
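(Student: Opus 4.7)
The plan is to combine the Unrolling Lemma with the finite-state completeness of \cref{proofrule:SI-simple-lower-bounds-rule} (i.e., \cref{lem:SI-lower-bounds-partial-completeness}) to assemble the required families of certificates at each reachable state, and then to verify that every piece lies in our arithmetical assertion language. First, I would fix any rational $p \in (0,1)$ and set $\Inv \coloneqq \reach(\cG)$. For each $\sigma \in \Inv$, the restart $\cG(\sigma)$ is itself $\AST$, so \cref{lem:required-simulation-time} supplies a horizon $k_\sigma \in \setOfNaturals$ after which every scheduler has accumulated termination probability at least $1-p$ from $\sigma$. Let $\Sigma_\sigma$ denote the (finite) set of states reachable from $\sigma$ within $k_\sigma$ steps.

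Second, I would apply \cref{lem:SI-lower-bounds-partial-completeness} to the finite-state subprogram of $\cG(\sigma)$ induced on $\Sigma_\sigma$, treating transitions that leave $\Sigma_\sigma$ as exits, to obtain a stochastic invariant $\SIpred_\sigma \subseteq \Sigma_\sigma$ together with a bounded variant $U_\sigma$, a bound $H_\sigma$, and an $\epsilon_\sigma > 0$ witnessing almost-sure termination within $\SIpred_\sigma$. \cref{lem:si-sii-equiv} converts $\SIpred_\sigma$ into an SI-indicator $\SI_\sigma$ with $\SI_\sigma(\sigma) \leq p$, since executions from $\sigma$ leave $\SIpred_\sigma$ with probability at most $p$. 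Extending $U_\sigma$ by $0$ on states outside $\SIpred_\sigma$ accounts for the clause on states with $\SI_\sigma(\gamma) \geq 1$. Collecting these into the maps $\SI, \cU, \cH, \cE$ satisfies every semantic premise of \cref{proofrule:ast-SI}.

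The main obstacle, as in the earlier completeness arguments, is uniform arithmetical representability of the entire family. The invariant $\Inv$, the natural-valued outputs of $\cH$, $\cE$, $\cU$, and the horizons $k_\sigma$ all reduce to the computable ``finite-path'' relation $I(k, \sigma_1, \sigma_2)$ composed with Robinson's $\Nat$ predicate, following the template of the completeness proofs of \cref{proofrule:ast-martingale} and \cref{proofrule:lower-bounds-ours}. The real-valued $\SI_\sigma$ are the delicate case: by \cref{lem:si-sii-equiv} each $\SI_\sigma(\gamma)$ equals the supremum over schedulers of the exit probability from $\SIpred_\sigma$ when started at $\gamma$. I would represent its Dedekind cut exactly as in \cref{lem:lower-bounds-completeness}: rationals strictly below $\SI_\sigma(\gamma)$ are those witnessed by some finite scheduler prefix achieving exit probability exceeding them in finitely many steps, while upper bounds come by negation together with the Unrolling Lemma, which ensures that this computable approximation converges to the true supremum despite demonic nondeterminism. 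Since $\sigma$ itself is an arithmetical parameter (a location together with a rational tuple), the same schema defines the entire family $\{\SI_\sigma\}_{\sigma \in \Inv}$ uniformly, completing the proof.
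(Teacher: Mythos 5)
Your proposal is correct and follows exactly the route the paper intends: the paper omits the proof of \cref{lem:SI-ast-rule-completeness}, stating only that it is ``principally similar'' to that of \cref{lem:lower-bounds-completeness}, and your argument is precisely that adaptation --- per-state application of the Unrolling Lemma to get finite horizons $k_\sigma$, the finite-state completeness of \cref{proofrule:SI-simple-lower-bounds-rule} on each $\Sigma_\sigma$, conversion via \cref{lem:si-sii-equiv}, and the same arithmetical-encoding templates (the relation $I$, Robinson's $\Nat$, and Dedekind cuts of the exit probabilities). Your explicit observation that $\sigma$ enters as an arithmetical parameter, so the whole family $\{\SI_\sigma\}$ is \emph{uniformly} definable, is a point the paper leaves implicit and is worth keeping.
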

The proof for this lemma is principally similar to the proof of \cref{lem:lower-bounds-completeness}, and we omit it here for reasons of space.

\section{Discussion: Traveling Between Proof Systems}
\label{sec:applications}

We have presented sound and relatively complete proof rules for qualitative and quantitative termination of probabilistic programs with bounded probabilistic and nondeterministic choice.
We have also demonstrated relative completeness of our rules in the assertion language of arithmetic.
However, a new proof rule is ultimately interesting only if one can actually prove the termination of many programs.
In order to show that our proof rules, in addition to their theoretical properties, are also applicable in a variety of situations,
we demonstrate that proofs in many existing proof systems can be compiled into our proof rules. 
 

\subsubsection*{From \citet{McIverMorganBook}}
The variant functions from \cref{proofrule:ast-variant} immediately form the variant functions required in \cref{proofrule:ast-martingale}.
Set $V$ to $0$ at the terminal state and $1$ everywhere else.
This gives all we need to apply \cref{proofrule:ast-martingale}.

\subsubsection*{From \citet{McIverMKK18}}

The $\AST$ \cref{proofrule:ast-mciver-katoen} proposed by \citet{McIverMKK18} has been applied onto a variety of programs, and has been shown to be theoretically applicable over the 2D random walker.
We note that their variants can be reused in \cref{proofrule:ast-martingale} with little alterations as both the supermartingale and variant functions.
This means proofs in their rule can be easily translated to proofs that use \cref{proofrule:ast-martingale}.
For the reverse direction, the construction of the supermartingale-variant we used to demonstrate the completeness of their rule (see \cref{sec:completeness-mciver-katoen-rule}) induces a technique to translate proofs using our rule to ones using theirs.

\subsubsection*{From \cref{proofrule:ast-martingale} to \cref{proofrule:ast-SI}}
\newcommand{\tempVUp}{v_{\relshortuparrow}}

Hidden in the proof of the soundness of \cref{proofrule:ast-martingale} are the stochastic invariants that form the basis of \cref{proofrule:ast-SI}.
Fix a $p$, and take the set $\SIpred_\sigma = \{ \gamma \in \Inv \mid V(\gamma) \leq \tempVUp \}$ for a sufficiently high value of $\tempVUp$ to yield an upper bound of $p$ on the probability of exiting $\SIpred_\sigma$.
Then, expand $\SIpred_\sigma$ with the states necessary to keep all shortest consistent terminal runs across schedulers from states in $\SIpred_\sigma$ entirely within $\SIpred_\sigma$.
In spite of these extensions, the value of $V$ will be entirely bounded when restricted to states in $\SIpred_\sigma$.
It is then trivial to build the indicator functions from each stochastic invariant $(\SIpred_\sigma, p)$ and the variant functions from $U$, completing a translation from \cref{proofrule:ast-martingale} to \cref{proofrule:ast-SI}.
Note that using this technique, one can translate proofs from \citet{McIverMKK18} and \citet{McIverMorganBook} to \cref{proofrule:ast-SI} as well.

\subsubsection*{Using Stochastic Invariants \cite{ChatterjeeGMZ22}}
Separately, \citet{ChatterjeeGMZ22} have shown the applicability of \cref{proofrule:SI-simple-lower-bounds-rule} to demonstrate lower bounds on the termination probabilities for a variety of programs, and have also presented template-based synthesis techniques for achieving limited completeness.
These proofs are also valid for \cref{proofrule:lower-bounds-ours}, by setting the same stochastic invariant for each $n$.








\sloppy 

\newpage

\label{beforebibliography}
\newoutputstream{pages}
\openoutputfile{main.pages.ctr}{pages}
\addtostream{pages}{\getpagerefnumber{beforebibliography}}
\closeoutputstream{pages}
\bibliography{bibliography}

\label{afterbibliography}
\newoutputstream{pagesbib}
\openoutputfile{main.pagesbib.ctr}{pagesbib}
\addtostream{pagesbib}{\getpagerefnumber{afterbibliography}}
\closeoutputstream{pagesbib}

\appendix

\section{Completeness of the rule of McIver et al. \cite{McIverMKK18}}
\label{sec:completeness-mciver-katoen-rule}

In this section, we prove that \cref{proofrule:ast-mciver-katoen} of \citet{McIverMKK18} is complete.

Take an $\AST$ $\CFG$ $\cG$, and denote its set of reachable states by $\reach(\cG)$.
Our starting point is the supermartingale $V : \reach(\cG) \to \setOfReals$ that we constructed in \cref{sec:ast-martingale}.
We will modify this $V$ in a few subtle ways to match \cref{proofrule:ast-mciver-katoen}.
Recall from our prior construction that the sublevel sets $V^{\leq r} = \{\sigma \mid V(\sigma) \leq r\}$ are finite for this $V$.
It is easy to see that if this $V$ were one-one, it is easy to produce progress functions $p$ and $d$ for $V$ that match the requirements of \cref{proofrule:ast-mciver-katoen}.
This is due the finite nature of the sublevel sets.

We will now work toward an injective supermartingale function that yields finite sublevel sets.
We do this by first injecting ``gaps'' into $V$ without affecting its supermartingale status.
This makes it a \emph{strict supermartingale}, which is a supermartingale that strictly reduces in expectation at certain states.
We also say that a supermartingale $V$ is \emph{strict at a state $\sigma$} if, in a single step from $\sigma$, executions are expected to reach some state $\sigma'$ with $V(\sigma') < V(\sigma)$.

After procuring our desired strict supermartingale, we will mildly perturb the values assigned to its level sets in a way that doesn't affect its supermartingale properties.
This perturbation gives us an injective supermartingale.
Notice that this kind of perturbation is only possible because of the gaps; perturbation typically invalidates martingale properties.

We start with the function $V_0 : \reach(\cG) \to \setOfReals$, defined as
\begin{equation}
    \label{eq:V_0}
    V_0(\sigma) \triangleq \ln(1 + V(\sigma))
\end{equation}

\begin{lemma}
    $V_0$ is a supermartingale that assigns $0$ to the terminal state, induces finite sublevel sets, and is strict at all probabilistic states $\sigma$ from which there is a transition in $\cG$ to a successor $\sigma'$ such that $V(\sigma') \neq V(\sigma)$.
\end{lemma}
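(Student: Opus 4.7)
The plan is to verify each of the three claimed properties of $V_0$ in turn, all of which are straightforward consequences of the fact that $\ln(1+\cdot)$ is a strictly concave, strictly increasing, and continuous map from $\setOfNonnegativeReals$ to $\setOfNonnegativeReals$ sending $0$ to $0$.

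First, I would handle the terminal state and sublevel sets, which are immediate bookkeeping. Since $V$ assigns $0$ to the terminal state (by the construction around \cref{eq:construction-v}, where $R(0, n_j) = 0$ for all $j$), we have $V_0(\sigma_\bot) = \ln(1 + 0) = 0$. For sublevel sets, observe that $V_0(\sigma) \leq r$ iff $V(\sigma) \leq e^r - 1$, so the sublevel set $\{\sigma \mid V_0(\sigma) \leq r\}$ coincides with the sublevel set $\{\sigma \mid V(\sigma) \leq e^r - 1\}$, which was shown to be finite in the completeness proof of \cref{proofrule:ast-martingale}.

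Next, I would check the supermartingale property at each kind of state. At assignment or nondeterministic states $\sigma$, the inequality $V(\sigma) \geq V(\sigma')$ for each successor $\sigma'$ lifts immediately through the monotonicity of $\ln(1+\cdot)$ to $V_0(\sigma) \geq V_0(\sigma')$. At a probabilistic state $\sigma = (l, \bfx)$ with successors $(l', \bfx')$ of probabilities $p_{l'} = \sfPr(l,l')[\bfx]$, I combine two facts: the supermartingale inequality $1 + V(\sigma) \geq \sum_{l'} p_{l'}(1 + V(l', \bfx'))$, and Jensen's inequality for the concave function $\ln$, which gives $\ln\bigl(\sum_{l'} p_{l'}(1 + V(l', \bfx'))\bigr) \geq \sum_{l'} p_{l'} \ln(1 + V(l', \bfx'))$. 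Chaining these and using monotonicity of $\ln$ yields $V_0(\sigma) \geq \sum_{l'} p_{l'} V_0(l', \bfx')$.

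Finally, for strictness at a probabilistic state $\sigma$ with some successor $\sigma'$ satisfying $V(\sigma') \neq V(\sigma)$, I would split into cases based on whether the values $\{V(l', \bfx')\}_{l'}$ over the successors are all equal to each other or not. If they are not all equal, then strict concavity of $\ln$ makes Jensen's step above strict; combined with the non-strict first inequality this already gives $V_0(\sigma) > \sum_{l'} p_{l'} V_0(l', \bfx')$. If the values $V(l', \bfx')$ are all equal to some common value $c$, then the supermartingale condition forces $c \leq V(\sigma)$; the hypothesis that some $V(\sigma') \neq V(\sigma)$ then forces $c < V(\sigma)$ strictly, so $1 + V(\sigma) > 1 + c = \sum_{l'} p_{l'}(1 + V(l', \bfx'))$, and strictness of $\ln$ on this step propagates. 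The remaining logical possibility, that all successors satisfy $V(\sigma') \geq V(\sigma)$ with some strict, is ruled out by the supermartingale inequality for $V$. I do not anticipate any genuine obstacle here; the only subtlety is organising the case split so that the strict inequality is picked up by exactly one of Jensen's step or the supermartingale step.
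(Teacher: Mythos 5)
Your proof is correct and follows the same route as the paper: the paper passes to $V'(\sigma)=V(\sigma)+1$ and applies the weighted AM--GM inequality $\sum_i p_i V'(\sigma_i) \geq \prod_i V'(\sigma_i)^{p_i}$, which is exactly your Jensen step for the concave function $\ln$, and the terminal-state and finite-sublevel-set claims are handled identically. The one place where you genuinely improve on the paper is the strictness argument. The paper asserts that the hypothesis ``some successor $\sigma'$ has $V(\sigma')\neq V(\sigma)$'' makes the AM--GM step strict, but strict AM--GM requires the successor values to differ \emph{from each other}, not from the source; if all successors share a common value $c\neq V(\sigma)$, AM--GM degenerates to an equality and the strict inequality must instead come from the supermartingale step $1+V(\sigma) > 1+c$ together with strict monotonicity of $\ln$. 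Your explicit case split (successor values not all equal versus all equal to some $c<V(\sigma)$) is precisely what is needed to cover both situations, so keep it.
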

\begin{proof}
    It is easy to see that the function $V'$ that maps $\sigma \mapsto V(\sigma) + 1$ is also a supermartingale.
    Thus, $V_0(\sigma) = \ln(V'(\sigma))$.
    It's again easy to see why $V_0$ satisfies the supermartingale conditions at assignment and nondeterministic states.
    We will now demonstrate that $V_0$ is a supermartingale at probabilistic states $(l, \bfx)$.
    Let $\{(l_1, \bfx), (l_2, \bfx), \ldots (l_n, \bfx) \}$ be the collection of successor states of $(l, \bfx)$.
    Then, take
    $$
    \sum_{i = 1}^n \sfPr(l, l_i) V_0(l_i, \bfx) = \sum_{i = 1}^n \sfPr(l, l_i) \ln(V'(l_i, \bfx)) = \ln \left(\prod_{i=1}^n \left(V'(l_i, \bfx)\right)^{\sfPr(l, l_i)} \right)
    $$
    Where $\sfPr$ is the transition probability function of $\cG$.

    The weighted AM-GM inequality, together with the supermartingale nature of $V'$, implies that
    $$
    \sum_{i=1}^n \sfPr(l, l_i) V'(l_i, \bfx) \geq \prod_{i=1}^n \left(V'(l_i, \bfx)\right)^{\sfPr(l, l_i)} 
    \implies
    V_0(l, \bfx) \geq \sum_{i = 1}^n \sfPr(l, l_i) V_0(l_i, \bfx)
    $$
    Thus proving the supermartingale nature of $V_0$.

    Observe that $V(\sigma_\bot) = 0 \implies V_0(\sigma_\bot) = 0$ at the terminal state $\sigma_\bot$.
    Furthermore, the sublevel sets $V_0^{\leq r} = \{ \sigma \in \reach(\cG) \mid V_0(\sigma) \leq r \}$ are finite for each $r$.
    This is because the corresponding sublevel sets $V^{\leq r}$ are finite and the function $x \mapsto \ln(1 + x)$ is strictly increasing for increasing $x$.

    Now, suppose from $(l', \bfx')$, there is a transition to $(l'_i, \bfx')$ such that $V(l', \bfx') \neq V(l'_i, \bfx')$.
    The weighted AM-GM inequality implies that at such states,
    $$
    \sum_{i=1}^n \sfPr(l', l'_i) V'(l'_i, \bfx') > \prod_{i=1}^n \left(V'(l'_i, \bfx')\right)^{\sfPr(l', l'_i)} \implies V_0(l', \bfx') > \sum_{i = 1}^n \sfPr(l', l'_i) V_0(l'_i, \bfx')
    $$
    Where $(l'_1, \bfx'), \ldots (l'_n, \bfx')$ are successors of $(l', \bfx')$.
    Hence, $V_0$ is strict at $(l', \bfx')$.
\end{proof}

$V_0$ is thus a supermartingale with ``gaps'' at probabilistic states $\sigma$ from which the value of $V$ has a non-zero probability of reduction in a single step.
These gaps induce some ``wiggle room'' for the values assigned by $V_0$ at these $\sigma$ within which the supermartingale nature of $V_0$ is unaffected.
One could arbitrarily pick one such $\sigma$ and subtract a small $\epsilon > 0$ from $V_0(\sigma)$ and still have a supermartingale.

\begin{lemma}
    \label{lem:supermartingale-perturbation}
    Let $U$ be a non-negative supermartingale function over the reachable space $\reach(\cG)$ of an $\AST$ $\CFG$ $\cG$ that is $0$ only at the terminal state, and induces finite sublevel sets.

    For each $r > 0$ in the range of $U$, there is a state $\sigma_r \in \reach(\cG)$ with $U(\sigma_r) = r$ and a real $\epsilon_r > 0$ such that the function $U_r : \reach(\cG) \to \setOfReals$ defined below is a non-negative supermartingale that induces finite sublevel states and is $0$ only at the terminal state.
    $$
    U_r(\gamma) = \begin{cases}
        U(\sigma_r) - \epsilon_r & \gamma = \sigma_r \\
        U(\gamma) & \gamma \neq \sigma_r
    \end{cases}
    $$
\end{lemma}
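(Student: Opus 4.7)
The plan is to exploit the finite level set $S_r := \{\gamma \in \reach(\cG) : U(\gamma) = r\} \subseteq U^{\leq r}$, identify within it a state $\sigma_r$ at which the supermartingale inequality has positive slack, and then take $\epsilon_r$ small enough that decreasing $U$ only at $\sigma_r$ preserves every required property. By ``slack at $\sigma_r$'' I mean: if $\sigma_r$ is probabilistic, $U(\sigma_r) > \sum_{\sigma'} \sfPr(\sigma_r, \sigma')\, U(\sigma')$, and if $\sigma_r$ is assignment or nondeterministic, $U(\sigma_r) > U(\sigma')$ for every successor $\sigma'$.

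First I would show the existence of such a $\sigma_r$. Suppose, toward contradiction, that every state in $S_r$ has tight supermartingale inequality. At assignment/nondeterministic states this forces some successor to lie again in $S_r$, and at probabilistic states the conditional expectation after one step equals $r$. I would then build a scheduler that, at each nondeterministic state of $S_r$, always picks a level-$r$ successor. Starting at any $\sigma_0 \in S_r$, either the process stays in $S_r$ forever with positive probability, which contradicts $\AST$ (Doob's convergence applied to the non-negative supermartingale $U_n$ forces $U_n \to 0$ almost surely, whereas $U \equiv r > 0$ on $S_r$), or it exits $S_r$ almost surely while, by the tightness assumption, preserving conditional expectation $r$ along the way; a stopping-time / bounded-convergence argument then shows the overall expected value cannot decay from $r$ down to $0$, again contradicting $\AST$. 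Finiteness of $S_r$ and finite branching of $\cG$ make the stopped-process manipulations go through.

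With $\sigma_r$ in hand and its positive slack $\delta$, set $\epsilon_r = \min(\delta, r)/2$. The verifications are then one-liners in sequence: the supermartingale inequality at $\sigma_r$ survives since the old slack $\delta > \epsilon_r$ absorbs the decrease; at every predecessor $\sigma''$ of $\sigma_r$, a term $U(\sigma_r)$ on the right-hand side shrinks, which only relaxes the inequality; at all other states the function is unchanged. Non-negativity follows from $\epsilon_r < r$, and $U_r$ vanishes only at the terminal state since $U_r(\sigma_r) = r - \epsilon_r > 0$. Finally, for any threshold $s$, the set $U_r^{\leq s}$ either coincides with $U^{\leq s}$ (when $s < r - \epsilon_r$ or $s \geq r$) or equals $U^{\leq s} \cup \{\sigma_r\}$ (when $s \in [r-\epsilon_r, r)$), so it remains finite.

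The main obstacle is the slack-existence step. The tight case is subtle because martingale-like supermartingales (e.g.\ $U(x) = |x|$ for a one-dimensional random walk) can be tight at every state at every level; the proof relies crucially on the $\AST$ hypothesis together with Doob convergence to rule out a globally tight finite level set. In the context where this lemma is applied to $V_0 = \ln(1 + V)$, the weighted AM--GM step that turned $V$ into $V_0$ already bakes in strict slack at any probabilistic branching with non-degenerate successor values, which is the concrete source of the slack one extracts in the tight-case argument.
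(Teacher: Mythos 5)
Your overall strategy is the same as the paper's: take the finite level set $U^{=r}$, locate a state where the supermartingale inequality has strictly positive slack, and subtract an $\epsilon_r$ smaller than that slack (and smaller than $r$), checking that inequalities in which $\sigma_r$ appears on the right-hand side are only relaxed. Your verification of the perturbation step and of the preservation of non-negativity, the zero set, and finite sublevel sets matches the paper and is fine.

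The genuine gap is in the slack-existence step, and you have in fact written down the counterexample to your own argument without drawing the conclusion. For the one-dimensional walk with $U(x)=|x|$ absorbed at $0$, \emph{every} non-terminal state is tight and the program is $\AST$; so ``the level set is globally tight'' simply does not contradict $\AST$, and no argument can derive a contradiction from it. Concretely, your second branch fails: after the run exits $S_r$ with conditional expectation preserved at $r$, nothing forces $\mathbb{E}[U_n]$ to decay to $0$ --- a non-negative supermartingale can converge to $0$ almost surely while its expectation stays at $r$ forever, because Doob convergence gives only $\mathbb{E}[U_\infty]\le\mathbb{E}[U_0]$, not equality, and there is no uniform integrability to upgrade almost-sure convergence to $L^1$ convergence. (The paper's own proof of this lemma is terse at exactly the same point: it asserts that tightness at a probabilistic state forces all successors to remain at level $r$, which is false --- tightness only constrains the \emph{expectation} of the successors --- so the lemma as stated for an arbitrary non-negative supermartingale with finite sublevel sets is not established, and the $|x|$ example shows it is actually false in that generality.) The correct repair is the one you gesture at in your final sentence: the lemma is only ever applied to $V_0=\ln(1+V)$, which by the strict weighted AM--GM inequality is \emph{strictly} decreasing in expectation at every probabilistic state having a successor with a different $V$-value. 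With that extra hypothesis, global tightness of $U^{=r}$ forces every probabilistic state in $U^{=r}$ to have all successors again in $U^{=r}$, and a demonic scheduler can keep nondeterministic successors in $U^{=r}$ as well; the run is then trapped forever in the non-terminal finite set $U^{=r}$, contradicting $\AST$. You should state that strictness property as an explicit hypothesis of the lemma (or restrict the lemma to $V_0$) rather than relying on the $\AST$ assumption alone.
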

\begin{proof}
    Take the level set $U^{= r} = \{\sigma \in \reach(\cG) \mid U(\sigma) = r \}$.
    The set $U^{= r}$ is non-empty, finite, and doesn't contain the terminal state.
    This is because $r$ is in the range of $U$, $U$ induces finite sublevel sets, and $r > 0$.

    We first show that there must exist a state $\sigma \in U^{=r}$ such that $U$ is strict at $\sigma$.
    Suppose no such state exists.
    Then, $U$ must be an exact martingale at all states in $U^{=r}$.
    This means all successors of each $\sigma \in U^{=r}$ must be assigned the value $r$ by $U$.
    This contradicts $\cG \in \AST$, as when executions enter $U^{=r}$, they stay inside $U^{=r}$ with probability $1$ and never reach a terminal state.
    Note that there will exist executions that enter $U^{=r}$ because each state in $U^{=r}$ is reachable with non-zero probability.
    Pick $\sigma_r$ to be this state in $U^{=r}$.
    Thus, $U$ is strict at $\sigma_r$.

    Consider the collection of supermartingale inequalities $\cE_r$ of $U$ that involve $\sigma_r$.
    Inequalities in $\cE_r$ are derived from transitions where $\sigma_r$ is either the target or the source.
    Since $U$ is strict at $\sigma_r$, the inequalities from transitions sourced at $\sigma_r$ are strict.
    The inequalities from transitions that target $\sigma_r$ may not be strict.

    If $\sigma_r$ is either assignment or nondeterministic, the inequalities from transitions sourced at $\sigma_r$ take the form $U(\sigma_r) > U(\sigma')$ for successors $\sigma'$.
    If it is probabilistic, the single inequality sourced at $\sigma_r$ takes the form $U(\sigma_r) > \sum p_i U(\sigma_i)$ for successors $\sigma_i$ reached with probability $p_i$.
    Either way, there must exist a real $\epsilon_r > 0$ smaller than the difference between $U(\sigma_r)$ and the RHS of these inequalities, i.e., either all of the $U(\sigma')$ or the sum $\sum p_i U(\sigma_i)$.
    Updating $U(\sigma_r)$ by subtracting $\epsilon_r$ from it invalidates none of these strict inequalities.

    Inequalities in $\cE_r$ derived from transitions that target $\sigma_r$ place $U(\sigma_r)$ on the right of a $\geq$ relation.
    Clearly, reducing $U(\sigma_r)$ will not affect the validity of these inequalities.
    Therefore, this subtraction operation doesn't affect the validity of $\cE_r$, meaning that it doesn't change the supermartingale nature of $U$.
    It is also easy to see that despite this change, $U$ still assigns $0$ to the terminal states and induces finite sublevel sets.
    This completes the proof.
\end{proof}

We will now describe an infinite sequence of supermartingales $(V_n)_{n \in \setOfNaturals}$ that are increasingly injective.
Each of these supermartingales assigns $0$ to the terminal state, and induces finite sublevel sets.
Each element $V_{n + 1}$ in the sequence resolves a single two-to-one mapping in the previous element $V_n$.
This clash is resolved by reducing the value of one of two states involved in the clash by a suitable tiny $\epsilon$ in the style of the above \cref{lem:supermartingale-perturbation}.
The limit $V_\infty$ of the sequence will be completely injective.

The first element in this sequence is the function $V_0$ we defined in \cref{eq:V_0}.
We now describe a procedure to define $V_{n + 1}$ from $V_n$.
In $V_n$, pick the smallest value $r > 0$ such that the cardinality of the level set $V^{=r}_n = \{\sigma \in \reach(\cG) \mid V_n(\sigma) = r \}$ is at least $2$.
\cref{lem:supermartingale-perturbation} indicates the presence of one state $\sigma_r \in V^{=r}_n$ and an $\epsilon_r > 0$ such that the function $V'_n$ that is identical to $V_n$ except at $\sigma_r$ where $V'_n(\sigma_r) = V_n(\sigma_r) - \epsilon_r$ is still a supermartingale with all the properties we desire.
Let $r'$ be the largest number under $r$ such that the level set $V^{=r'}_n$ at $r'$ is non empty.
Reduce $\epsilon_r$ so that $V_n(\sigma_r) - \epsilon_r > r'$.
Define $V_{n + 1}$ now as
$$
V_{n+1}(\gamma) = \begin{cases}
    V_n(\sigma_r) - \epsilon_r & \gamma = \sigma_r  \\
    V_n(\gamma) & \gamma \neq \sigma_r
\end{cases}
$$
Thus, the level set $V^{=r}_{n+1}$ is strictly smaller than $V^{=r}_{n}$.
Furthermore, both $V_n$ and $V_{n + 1}$ map all states in the sublevel set $V^{<r}_r = \{\sigma \in \reach(\cG) \mid V_n(\sigma) < r\}$ to different values.
In other words, they are both injective when restricted to $V^{< r}_n$.

The finiteness of the sublevel sets of $V_0$ implies that for each $r > 0$, there is an $m_r$ such that the sublevel set $V^{\leq r}_{m_r}$ is free of clashes.
In other words, for each $r$, there is a $m_r$ such that $V_{m_r}$ is injective when restricted to the sublevel set $V^{\leq r}_{m_r}$.
Thus, the limit supermartingale $V_\infty$ defined as
$$
V_\infty(\sigma) = \lim_{n \to \infty} V_n(\sigma)
$$
is injective, induces finite sublevel sets, and assigns $0$ to the terminal state.

\begin{theorem}
    \cref{proofrule:ast-mciver-katoen} is relatively complete.
\end{theorem}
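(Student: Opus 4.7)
The plan is to use the injective supermartingale $V_\infty$ built in the preceding construction and manufacture antitone progress functions $p$ and $d$ that certify the conditions of \cref{proofrule:ast-mciver-katoen}. First I would verify the remaining structural properties of $V_\infty$: at the terminal state, $V_\infty = 0$; at every non-terminal reachable state, $V_\infty > 0$ (since $V > 0$ and the perturbation only ever decreases values while keeping them above the next lower level value); injectivity is immediate from the construction; and each sublevel set $V_\infty^{\leq r}$ is finite, because the level sets collapse to singletons but the cardinalities are controlled by the finiteness of the sublevel sets of $V_0$.

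Next, for each $r > 0$ in the range of $V_\infty$, I would define
\[
d(r) \;=\; \min\bigl\{ V_\infty(\sigma) - V_\infty(\sigma') \,:\, \sigma \in V_\infty^{\leq r},\; \sigma' \text{ a successor of } \sigma \text{ with } V_\infty(\sigma') < V_\infty(\sigma)\bigr\},
\]
\[
p(r) \;=\; \min\bigl\{ \textstyle\sum_{\sigma'} \sfPr(\sigma, \sigma') \,:\, \sigma \in V_\infty^{\leq r} \cap L_P,\; \text{ sum over } \sigma' \text{ with } V_\infty(\sigma') < V_\infty(\sigma) \bigr\},
\]
and extend these by the values at the supremum of $V_\infty$'s range to all of $\setOfPositiveReals$ by making them constant beyond. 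Both minima are taken over a finite set of finitely-branching states, so they are well-defined and strictly positive. Injectivity of $V_\infty$ ensures that at every assignment/nondeterministic state every successor $\sigma'$ satisfies $V_\infty(\sigma') < V_\infty(\sigma)$, while at every probabilistic state the supermartingale inequality together with injectivity forces at least one ``progress'' successor with strictly smaller $V_\infty$ value. Monotonicity of $p$ and $d$ is built in: the sublevel sets grow with $r$, and minimising over a larger set can only decrease the value.

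Then I would verify the premises of \cref{proofrule:ast-mciver-katoen}: (a) positivity and terminal value of $V_\infty$ are already shown; (b) the strict-decrease condition for assignment and nondeterministic states at a state with $V_\infty(\sigma) = v$ is precisely $V_\infty(\sigma) - V_\infty(\sigma') \geq d(v)$, which holds by definition of $d$; (c) the supermartingale condition at probabilistic states is inherited from $V_\infty$; (d) the probability that a step from $\sigma$ with $V_\infty(\sigma) = v$ lands in the set of progress successors is at least $p(v)$ by the definition of $p$. Finally, I would address relative completeness in the assertion language by piggybacking on the arithmetic encoding already produced for $V$ in \cref{lem:completeness-ast-martingale}: the construction of $V_\infty$ from $V$ is entirely a computable procedure on the enumeration of reachable states (composition with $\ln(1+\cdot)$, then iteratively subtracting arbitrary rational $\epsilon_n$ at the unique clash-breaking state), so the Dedekind cuts of $V_\infty(\sigma)$, $p(r)$, and $d(r)$ are arithmetically representable by the same technique used to encode $V$.

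The main obstacle I expect is twofold. First, the construction of $V_\infty$ is only well-defined up to choices of the perturbation constants $\epsilon_r$; I must ensure these can be chosen \emph{uniformly computably} (for instance, as $\epsilon_r = 2^{-n}$ times the slack in the relevant strict supermartingale inequality at step $n$) so that the encoded expression remains expressible in arithmetic. Second, the antitone extension of $p$ and $d$ beyond the range of $V_\infty$ must remain strictly positive even if $V_\infty$ is bounded; the easy fix is to set both functions to the infimum of their values over the range whenever that infimum is positive, and otherwise to fall back on the finite-state case, where $V_\infty$ has finite range and positivity is trivial. Once these details are settled, the conditions of \cref{proofrule:ast-mciver-katoen} are satisfied by certificates expressible in the assertion language, completing the relative-completeness proof.
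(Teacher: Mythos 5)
Your proposal is correct and follows essentially the same route as the paper: it starts from the injective limit supermartingale $V_\infty$ with finite sublevel sets, defines $p$ and $d$ as minima over the (finite) sublevel sets so that antitonicity and positivity are automatic, verifies the progress conditions via injectivity plus the supermartingale inequality, and argues arithmetical representability by tracking the computability of the construction from $V$. The only deviations are cosmetic — the paper takes $d(r)$ to be the minimum consecutive gap in the ordered range of $V_\infty$ rather than the minimum actual decrease along transitions, and its $p(r)$ uses single transition probabilities rather than sums — and both variants satisfy the rule's premises for the same reasons.
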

\begin{proof}
    Set the inductive invariant $\Inv$ to $\reach(\cG)$.
    The supermartingale $V_\infty$ defined above injective, induces finite sublevel sets, and assigns $0$ to the terminal state.

    The range of $V_\infty$ is countable because its domain is countable.
    Order the elements of the range of $V_\infty$ in an increasing sequence $(r_n)_{n \in \setOfNaturals} = r_1 < r_2 < \ldots$.
    Index the states in $\reach(\cG)$ in such a way that the state indexed $n$ is assigned the value $r_n$.
    We will refer to the $n^{th}$ state in this index by $\sigma_n$.
    Clearly, the sublevel set at any $r_n$ is $V^{\leq r_n}_\infty = \{\sigma \in \Inv \mid V_\infty(\sigma) \leq r_n\} = \{\sigma_i \in \Inv \mid i \leq n\}$.

    We now define the two antitone functions $p$ and $d$ that \cref{proofrule:ast-mciver-katoen} demands.
    Fix an input real $r > 0$.
    Let $n \in \setOfNaturals$ be the largest number so that $r_n \leq r$.
    We define $d(r)$ to be the smallest difference between $r_{i + 1}$ and $r_i$ for all $i \leq n$.
    We then define $p(r)$ to be the smallest probability of a transition that links a state $\sigma_i$ with $i \leq n$ to a state $\sigma_j$ with $j < i$.
    We will use $\sigma_i \rightarrow_p \sigma_j$ to indicate the presence of a transition in $\cG$ that links $\sigma_i$ to $\sigma_j$ with probability $p$.
    Notice that, since $V_\infty$ is injective, and since $\cG$ is $\AST$, this $\sigma_j$ must exist for each $\sigma_i$.
    Thus,
    \begin{equation*}
        p(r) \triangleq \inf_{i \leq n} \left\{ p \in (0, 1) \mid \sigma_i \rightarrow_{p} \sigma_j \land j < i \right\} \quad\quad\quad\quad d(r) \triangleq \min_{i < n}\left(r_{i + 1} - r_i\right)
    \end{equation*}
    It's easy to see from the definitions that $p$ and $d$ are antitone.
    Furthermore, from their definitions, at each state $\sigma_i$ with $i \leq n$, there is a transition with probability $\geq p(r)$ to a state $\sigma_j$ with $j < i$, and therefore with $V_\infty(\sigma_i) - V_\infty(\sigma_j) \geq d(r)$.
    Thus, $V_\infty$ matches the progress conditions induced by $p$ and $d$.

    We now discuss how each of $\Inv$, $V_\infty$, $p$ and $d$ can be represented in our assertion language.
    The representation of $\Inv$ and the base supermartingale $V$ from \cref{sec:ast-martingale} has been discussed prior.
    The function $V_0$ can be represented in our assertion language through the Taylor expansion of the natural logarithm function.
    Each function $V_n$ in the sequence can be represented, as the definition of $V_{i + 1}$ from $V_i$ is a representable transformation.
    Thus, the limit function $V_\infty$ can be represented in our assertion language.
    Finally, each progress function $p$ and $d$ can be represented through the representation of $V_\infty$.
    This completes the proof.
\end{proof}

\section{Applying the Rule of McIver et al. \cite{McIverMKK18} for the 2D Random Walker}
\label{sec:mciver-katoen-2drw}

As we mentioned in \cref{subsec:mciver-katoen-rule}, the supermartingale-variant function $V$, defined as
$$
V(x, y) \triangleq \sqrt{\ln \left(1 + ||x,y|| \right)}
$$
Where $||x,y||$ stands for the Euclidean distance $\sqrt{x^2 + y^2}$ of the point $(x, y)$ from the origin $(0, 0)$, is a certificate for the application of \cref{proofrule:ast-mciver-katoen} for the random walker on the 2 dimensional grid.
It's clear to observe that $p(v) = 1/4$ is a sufficient antitone progress function for the probability of reduction, as at least one of the 4 directions available to the walker must reduce its Euclidean distance from the origin.
We now show that the function $d$ defined below completes the application of their rule for the 2D random walker.
$$
d(v) = \frac{1}{2v} \left( v^2 - \ln\left(e^{v^2} - 1\right) \right)
$$
First, for any value $v > 0$, observe that there must be a circle of points $(x, y)$ on the 2-D plane such that $V(x, y) = v$.
It's easy to show that the function $V$ must reduce along one direction of the four directions by the least amount when $x = y$.
Hence, for a value of $v > 0$, pick $x_v > 0$ such $V(x_v, x_v) = v$.
This means the minimum reduction of $v$, which we refer to as $\Delta_v$, is
\begin{align*}
    \Delta_v &= V(x_v, x_v) - V(x_v - 1, x_v)\\
    &= \sqrt{\ln\left(1+\sqrt{2} x_v\right)} - \sqrt{\ln\left(1+\sqrt{(x-1)^2 + x^2}\right)}\\
    &> \sqrt{\ln\left(1+\sqrt{2} x_v\right)} - \sqrt{\ln\left(1+\sqrt{2}\left(x_v - \frac{1}{\sqrt{2}}\right)\right)}\\
    \implies \Delta_v &> \sqrt{\ln\left(1+\sqrt{2} x_v\right)} - \sqrt{\ln\left(\sqrt{2} x_v\right)}
\end{align*}
But,
\begin{align*}
    \sqrt{\ln\left(1+\sqrt{2} x_v\right)} - \sqrt{\ln\left(\sqrt{2} x_v\right)} = \frac{1}{\sqrt{\ln\left(1+\sqrt{2}x_v\right)} + \sqrt{\ln\left(\sqrt{2}x_v\right)}} \ln\left(1+\frac{1}{\sqrt{2}x_v}\right)
\end{align*}
Given that
$$
v = \ln\left(1+\sqrt{2} x_v\right) \implies x = \frac{e^{v^2} - 1}{\sqrt{2}} \quad\quad\text{and}\quad\quad \sqrt{\ln\left(1+\sqrt{2}x_v\right)} > \sqrt{\ln\left(\sqrt{2}x_v\right)}
$$
We have
\begin{align*}
    \Delta_v &> \frac{1}{2v} \ln\left(1+\frac{1}{e^{v^2} - 1}\right)\\
    \implies \Delta_v &> \frac{1}{2v}\left(v^2 - \ln\left(e^{v^2} - 1\right) \right)
\end{align*}
Meaning $\Delta_v > d(v)$, as desired.


\newoutputstream{todos}
\openoutputfile{main.todos.ctr}{todos}
\addtostream{todos}{\arabic{@todonotes@numberoftodonotes}}
\closeoutputstream{todos}

\label{endofdocument}
\newoutputstream{pagestotal}
\openoutputfile{main.pagestotal.ctr}{pagestotal}
\addtostream{pagestotal}{\getpagerefnumber{endofdocument}}
\closeoutputstream{pagestotal}

\end{document}